\newcommand{\R}{\mathbb{R}}
\newcommand{\V}{\mathrm{Var}}
\newcommand{\Q}{\mathbb{Q}}
\newcommand{\Pb}{\mathbb{P}}
\newcommand{\N}{\mathbb{N}}
\newcommand{\E}{\mathbb{E}}
\newcommand{\F}{\mathcal{F}}
\author{Aur\'elien Alfonsi\thanks{CERMICS, Ecole des Ponts, Marne-la-Vall\'ee, France. MathRisk, Inria, Paris, France. E-mail: {\tt aurelien.alfonsi@enpc.fr}}\and  Adel Cherchali\thanks{CERMICS, Ecole des Ponts, Marne-la-Vall\'ee, France. MathRisk, Inria, Paris, France. E-mail: {\tt adel.cherchali@enpc.fr}}\and   Jose Arturo Infante Acevedo\thanks{ GIE AXA, 25 avenue Matignon 75008 Paris, France {\tt josearturo.infanteacevedo@axa.com}. } }
\title{Multilevel Monte-Carlo for computing the SCR with the standard formula and other stress tests}
\theoremstyle{plain}%
\newtheorem{theorem}{Theorem}
\newtheorem{lemma}[theorem]{Lemma}
\newtheorem{proposition}[theorem]{Proposition}
\newtheorem{remark}{Remark}
\theoremstyle{definition}
\newtheorem*{notations}{Notation}
\begin{document}

\maketitle
\setcounter{tocdepth}{3}
\begin{abstract} This paper studies the multilevel Monte-Carlo estimator for the expectation of a maximum of conditional expectations. This problem arises naturally when considering many stress tests and appears in the calculation of the interest rate module of the standard formula for the SCR. We obtain theoretical convergence results that complements the recent work of Giles and Goda~\cite{GiGo} and gives some additional tractability through a parameter that somehow describes regularity properties around the maximum. We then apply the MLMC estimator to the calculation of the SCR at future dates with the standard formula for an ALM savings business on life insurance. We compare it with estimators obtained with Least Squares Monte-Carlo or Neural Networks. We find that the MLMC estimator is computationally more efficient and has the main advantage to avoid regression issues, which is particularly significant in the context of projection of a balance sheet by an insurer due to the path
  dependency. Last, we discuss the potentiality of this numerical method and analyse in particular the effect of the portfolio allocation on the SCR at future~dates. 
\end{abstract}
\section{Introduction}

Solvency II is a regulatory framework introduced in Europe in the period post-financial crisis of 2008. Solvency~II establishes the requirements to be met to exercise the insurance or reinsurance activity in Europe and aims to protect policyholders and to give stability in the financial sector of the European Union. 

One of the advantages of the Solvency~II directive is that the computation required to evaluate the Solvency Required Capital (SCR) considers the specific risks borne by the insurers in comparison to the previous rules where the need of own funds ignored, for example, part of risks embedded in the asset side of the balance sheet. In practice, there are two possible ways to calculate the SCR: the insurance company can either use the standard formula (see~\cite{EIOPA}) by applying shocks to each asset class or use an internal model to calculate a Value-at-Risk over one year.

Today, the SCR indicator is one of the most important Key Performance Indicator used by companies to monitor the activity. In particular, the so-called Solvency~II ratio computed as the ratio between the ``Eligible Own Fund'' and the SCR measures the solvency capacity of the insurers and it is followed by analysts to evaluate them in financial markets.  Nevertheless, it is important to remark that the SCR corresponds to the amount of required capital in a 1 year horizon. Then, to have an idea of the total amount of required capital during the life of a product or over the duration of the business, it is not only necessary to compute the current SCR value but also to estimate the SCR at future dates. 

The aim of this work is to deal with the problem of computing SCR at future dates which has several practical applications to real problems that arise in the insurance industry. One of the first applications that should be cited comes from the regulatory side and is called ORSA (Own Risk and Solvency Assessment) process, which aims to evaluate from a prospective point of view the overall solvency needs related to the specific risk profile of the insurance companies. In order to do that, the computation of future SCR is necessary to ensure that the insurer is able to integrate the regulatory constraints in terms of solvency during the strategic plan horizon. 

Other important applications appear when the notion of cost of capital is concerned. M\"ohr~\cite{Moehr} and Engsner et al.~\cite{ELL} have pointed the importance of considering in this context the SCR values at future dates as random variables and develop a theoretical framework to relate them with the risk margin and the cost of capital. Among the applications related to the cost of capital, one can mention:
\begin{enumerate}
  \item[(i)]	Applications for ALM (Asset Liability Management) when a Strategic Asset Allocation needs to be computed for a given portfolio: to evaluate the optimality of an asset allocation, a criterion based on the sum of the present values of shareholder margins minus the amount of cost of capital generated by the asset allocation is usually studied. The idea of this approach is to analyse if the future gains generated by the portfolio meet the shareholder's expectations in terms of cost of capital.
\item[(ii)]	Applications for pricing, when evaluating if future margins pay the return expected by the shareholders. Before launching a new product, the insurers evaluate the profitability of that product and then compare expected future shareholder margins with the need of capital generated by the new business. 
\end{enumerate}
Finally, the computation of future SCR can be used as a tool for studying the solvency of the company under different economic scenarios. For example, the current low interest rates environment leads to several questions about the solvency of insurance companies and the future sustainability of the Savings business. In addition, the SCR computation at future dates allows to better understand the pattern of cash-flows generated by a product during the lifetime of the business. In particular, the approach based on shocks employed in this work as, for example, the shocks on the market conditions is useful to study the evolution of the balance sheet and the policyholder behavior under those shocked conditions.

Today, the computation of future SCR values are in practice often made with rough extrapolations from the initial SCR, which may ignore the evolution of the risk profile of the insurer and then lead to bad decisions impacting the business. Thus, the goal of the present work is to develop numerical methods for the calculation of the SCR required at some future date~$t$. We focus here on the calculation of SCR with the standard formula, which is fully described by the documents of the European Insurance and Occupational Pensions Authority (EIOPA)~\cite{EIOPA,EIOPA2}. Basically, this standard formula consists in applying different shocks on the different market sectors: the impact on the portfolio of each shock is evaluated in a risk-neutral world, and the SCR is then evaluated by using an aggregation formula from these impacts.
Let us explain more precisely how it works. We consider an insurance company that handles a Savings portfolio up to time~$T>0$ and models the financial market by a vector valued stochastic process $(\xi_t)_{t\ge 0}$ for its risk-neutral valuation. Let us assume for sake of simplicity that this is a diffusion model:
$$\xi_s=\xi_0+\int_0^sb(r,\xi_r)dr +\int_0^s\sigma(r,\xi_r)dW_r, \text{ for }  s\ge 0,$$
where $W$ is a Brownian motion. We assume that the sum of the discounted future profits and losses between times $t\in(0,T)$ and $T$ is given by a function $F_{t,T}(\xi_s,s\in[0,T])$. Note that this function may depend on $(\xi_s,s\in[0,t])$ since the portfolio management may be path-dependent. Let us suppose first that the company calculates~$SCR_0$ (the SCR at the initial time~0) with the standard formula. It has to implement $P$ different shocks that correspond to the different source of risks. These shocks happen just after the initial portfolio allocation, which requires a recalibration of the model by the company. Thus, each shock $i\in\{1,\dots,P\}$ corresponds to a stress on the model parameters, and the market path corresponding to the shock~$i$ is described by:
$$\xi^i_0=\xi_0 \text{ and }\xi^i_s=\check{\xi}^i_0+\int_0^sb^i(r,\xi_r)dr +\int_0^s\sigma^i(r,\xi_r)dW_r, \text{ for }  s> 0. $$
For example, the shock on equity is simply a drop of the initial value  ($\check{\xi}^i_0 \not = \xi_0$) while the shocks on the interest rates require also a recalibration of the model to the shocked yield curve (and thus $b^i$ and $\sigma^i$ may then be different from $b$ and $\sigma$). The P\&L contribution of each shock is given by $$\zeta^i_0=\E[F_{0,T}(\xi_s,s\in[0,T])]-\E[F_{0,T}(\xi^i_s,s\in[0,T])]=\E[F_{0,T}(\xi_s,s\in[0,T])-F_{0,T}(\xi^i_s,s\in[0,T])].$$ The SCR on market risk is finally obtained by using a so-called aggregation formula $SCR_0=h(\zeta^1_0,\dots,\zeta^P_0)$, where $h$ is a function prescribed by the EIOPA. Now, we apply the same methodology at time~$t$ to calculate the required capital $SCR_t$. The market path corresponding to the shock~$i$ at time~$t$ is given by:
$$\xi^i_s=\xi_s \text{ for } s\in[0,t] \text{ and } \xi^i_s=\check{\xi}^i_t+\int_t^sb^i(r,\xi_r)dr +\int_t^s\sigma^i(r,\xi_r)dW_r, \  s> t, $$
and this shock is assessed with the random variable
$$ \zeta^i_t=\E[F_{t,T}(\xi_s,s\in[0,T])-F_{t,T}(\xi^i_s,s\in[0,T])| (\xi_s,s\in[0,t])].$$
It is worth to notice that all the variables $\zeta^i_t$ are defined with the same conditioning: each stress consists in applying a shock at time~$t$, not in considering a particular conditioning set on~$(\xi_s,s\in[0,t])$. Finally, the SCR at time~$t$ is given by $SCR_t=h(\zeta^1_t,\dots,\zeta^P_t)$. For management purposes, it is interesting to assess at the initial time~$0$ how much required capital will be needed in the future. We are thus interested in calculating quantities such as
\begin{equation}\label{PV_SCR}\E[\phi(\xi^i_s,s\in[0,t])SCR_t]=\E[\phi(\xi^i_s,s\in[0,t])h(\zeta^1_t,\dots,\zeta^P_t)],
\end{equation}
where $\phi$ is a function that may include discounting as well as a change of probability between the risk-neutral and the real one.

Many works in the literature deal with the numerical computation of the SCR so that we cannot be exhaustive. Devineau and Loisel~\cite{DeLo}, Bauer et al.~\cite{BaReSi} have investigated numerical methods based on nested simulations.  Bauer et al.~\cite{BaReSi2},  Krah et al.~\cite{KrNiKo} and Floryszczak et al.~\cite{FlLCMa} have used Least Squares Monte-Carlo (regress now) methods for the risk while Pelsser and Schweizer~\cite{PeSc}, Cambou and Filipovi\'c have developed the replicating portfolio (or regress later) approach. Recently, Cheredito et al.~\cite{ChErWu} and Fernandez-Arjona and Filipovi\'c~\cite{FAFi} have proposed to use neural networks to approximate the conditional expectation.  Up to our knowledge, there are however no dedicated studies on the use of multilevel Monte-Carlo estimators for the calculation of the SCR with practical application in an insurance context. This paper fills this gap. Besides, most paper deal with the quantile formulation of the SCR and focus on the calculation of the current value of the SCR (there are few exceptions such as Vedani and Devineau~\cite{VeDe}). Here, we consider instead the calculation of the SCR with the standard formula at future dates. Last, most of the literature either use simple Markovian underlying models or consider instead models from insurance companies that are black boxes, which makes difficult the reproducibility of the results. Here, we are in between and make our experiments on a synthetic ALM model that we recently developed and fully presented in~\cite{AlChIn} which takes into account many path-dependent features of the ALM for life insurance. \\

We now describe the formal mathematical framework and consider a probability space $(\Omega,\F,\Pb)$. Let $X$ and $Y$ be two random variables such that $X$ takes values in a general measurable space $(G,\mathcal{G})$ and $Y$ takes values in $\R^P$, $P\in \N^*$. We make the following assumptions:
\begin{itemize}
\item[(A.1)] $Y$ is square integrable $\R^P$-valued random variable, \label{A1}
\item[(A.2)] $\phi:G\to \R$ is a measurable real-valued function $\phi$ such that $\phi(X)$ is square integrable. \label{A2}
\end{itemize}
For the financial application that we consider in this paper, $X$ represents the market information up to some time~$t>0$. We may thus take $G=\mathcal{C}([0,t],\R^d)$, the space of $\R^d$-valued continuous function, if we consider a market with $d\in \N^*$ continuous assets up to time~$t>0$. We are interested in the problem of computing nested expectations of the form :   
\begin{equation}\label{calc_risq}
I=\E\left[h\left(\E\left[Y^1|X\right],\ldots,\E\left[Y^P|X\right]\right)\phi(X)\right],
\end{equation}
where $h:\R^P \to \R$ is a measurable function with sublinear growth (i.e. $\exists C>0,\forall x \in \R^P , |h(x)|\le C(1+|x|)$), which ensures by Assumptions~(A.1) and~(A.2) that $I$ is well defined. Formula~\eqref{calc_risq} precisely corresponds to~\eqref{PV_SCR} by taking $X=(\xi_s,s\in[0,t])$ and $Y^i=F_{t,T}(\xi_s,s\in[0,T])-F_{t,T}(\xi^i_s,s\in[0,T])$. Thus, $\E[Y^i|X]$ typically represents the expected loss at time~$t$ with shock~$i$, the function $h$ describes the aggregation of the shocks in terms of own funds, and the function $\phi$ weights the different events up to~$t$. 

The calculation of~$I$ is usually made by using a nested Monte-Carlo method: one simulates $J$ independent samples of~$X$ called primary scenarios and then, for each primary scenario, one simulates $K$ independent samples of~$Y$ to approximate the conditional expectations involved in~\eqref{calc_risq} by the corresponding empirical means. This method has been investigated by Gordy and Juneja~\cite{GoJu} and Broadie et. al~\cite{BrDuMo} to calculate the probability of large losses, which amounts to take $P=1$, $h(x)=\mathds{1}_{x>u}$ and $\phi\equiv 1$ in~\eqref{calc_risq} and enables them afterwards to estimate the Value-at-Risk of $\E[Y^1|X]$. Under mild assumptions, they show that the optimal tuning to approximate~$I$ with a precision of $\varepsilon>0$ is to take $J$ proportional to $\varepsilon^{-2}$ and $K$ proportional to $\varepsilon^{-1}$, leading to an overall complexity of $O(\varepsilon^{-3})$. The multilevel Monte-Carlo method (MLMC) developed by Giles~\cite{Giles1} has been applied to the calculation of nested expectations by Haji-Ali~\cite{Haji},  Bujok et al.~\cite{BuHaRe} and Giles~\cite{Giles}. Under some regularity assumptions on~$h$, they show that the antithetic MLMC estimator achieves a precision~$\varepsilon>0$ with a computational cost of~$O(\varepsilon^{-2})$. Under additional regularity assumptions on $h$ or on the probability density function of $(X,Y)$, Giorgi et al.~\cite{GiLePa} have applied the Richardson-Romberg Multilevel method developed by Lemaire and Pagès~\cite{LePa} to improve the convergence of the MLMC estimator.

In this work, we focus on the case where $h$ is the maximum function: 
\begin{equation}\label{def_I_max}
I=\E\left[\max\left\{\E\left[Y^1|X\right],\ldots,\E\left[Y^P|X\right]\right\}\phi(X)\right]. 
\end{equation}
This function is sublinear, but is not differentiable when two (or more) arguments achieve the maximum. Due to this singularity, the general result given by Giles~\cite[Section 9.1]{Giles} when $h$ is twice differentiable does not apply, and a careful analysis is required. Note that~\eqref{def_I_max} appears in the standard formula for the calculation of the SCR interest rate module. To be precise, one has to compute in this case $\E\left[\max\left\{\E\left[Y^1|X\right],\ldots,\E\left[Y^P|X\right],0\right\}\phi(X)\right]$, which amounts to add a zero coordinate to~$Y$. More generally, the problem of computing~\eqref{def_I_max} occurs  when one has to determine the worst of a set of $P$ shocks (or stress tests) on a portfolio of securities at some future time~$t$.  When the function $\phi$ is nonnegative and such that $\E[\phi(X)]=1$, $\phi(X)$ can be seen as a change of probability on the different events up to time~$t$. The function $\phi(X)$ does not add any technical difficulty in our study, but it enables us to perform the evolution up to time~$t$ under the real probability and the evaluation of the losses under the risk-neutral probability, as it is recommended by Solvency II. It can also include some discounting factor. Studies of MLMC estimators for nested expectations for irregular functions~$h$ with applications to risk management have recently been made by Giles and Haji-Ali~\cite{GiHA}, Bourgey et al.~\cite{BDMGZ} and Giorgi et al.~\cite{GiLePa}. In a very recent work, Giles and Goda~\cite{GiGo} have studied precisely the problem of computing~\eqref{def_I_max} with the MLMC method.

The contribution of this paper is twofold. First, we provide an original mathematical analysis of the MLMC estimator for the calculation of~\eqref{def_I_max} that completes the result obtained by Giles and Goda~\cite{GiGo}. Our analysis relies on different arguments and the required assumptions are therefore also different. In particular, Giles and Goda make some technical assumptions to control the probability of two elements being close to the maximum. These assumptions are replaced in our analysis by an integrability assumption involving a parameter $\eta \in (0,1)$ that gives some additional flexibility in the application of the MLMC estimator. Our second contribution is to apply this method to an ALM model for life insurance that takes into account the main characteristic of the business: book values, profit-sharing mechanism, minimum guaranteed rate, etc. Thus, the model is truly path-dependent so that the conditional expectation at time~$t$ really involves the past dynamics, which makes the use of regression techniques more delicate. Indeed, one of the main advantage of the MLMC estimator is to calculate directly~$I$ and skip the regression issue. The second main advantage is that it provides an estimator with accuracy~$\varepsilon$ and with a computational cost in $O(\varepsilon^{-2})$: it is thus asymptotically as efficient as a Monte-Carlo method for plain expectations. In our numerical study, we compare the estimation of~$I$ with MLMC, Least Squares Monte-Carlo (LSMC) estimator and the use of Neural Networks (NN), and demonstrate the main advantages of the MLMC estimator.

The paper is organized as follows. Section~\ref{Sec_Math} presents the mathematical results on the estimation of~$I$ with nested Monte-Carlo and MLMC. Technical proofs are postponed to Appendix~\ref{Appendix_MLMC}. Section~\ref{Sec_ALM} then deals with the application to ALM. Subsections~\ref{subsec_ALM} and~\ref{subsec_MODEL} present the ALM model for life insurance business that we developed in~\cite{AlChIn} while Subsection~\ref{subsec_STD} recalls the calculation of the SCR with the standard formula. Subsection~\ref{subsec_NUMI} compares the numerical performance of the MLMC estimator with estimators obtained with  LSMC or NN. Last, Subsection~\ref{subsec_NUMII} shows the interest of analysing the SCR at future dates, exhibiting some interesting properties such as the dependence of the SCR on the portfolio allocation or on the market risk premia.

\section{Mathematical analysis of Monte-Carlo estimators of~$I$}\label{Sec_Math}

\subsection{Nested Monte-Carlo estimator}\label{Subsec_NMC}

In order to compute $I$ defined by~\eqref{def_I_max}, the classical approach is to approximate the inner and outer expectation by using  Monte-Carlo estimators. The procedure consists in generating an i.i.d sample $(X_1,\ldots,X_J)$ of $X$ called outer (or primary) scenarios. Then, conditionally on $X_i$, we sample $(Y_{i,1},\ldots,Y_{i,K})$  called inner (or secondary) scenarios following the conditional law of $Y$ given $X=X_i$ and approximate the conditional expectation $\E\left[Y^p|X=X_i\right]$ by 
\begin{equation} 
 \widehat{E}^{p}_{i,K}=\frac{1}{K}\sum_{k=1}^K Y^{p}_{i,k},
\end{equation}
for $i\in\{1,\dots,J\}$ and $p\in \{1,\ldots,P\}$. The outer expectation is then approximated using the standard MC estimator:
\begin{equation}
\widehat{I}_{J,K}=\frac{1}{J}\sum_{j=1}^J \max\left\{\widehat{E}^{1}_{j,K},\ldots,\widehat{E}^{P}_{j,K}\right\}\phi(X^{j}) 
\label{Nested_estimator}
\end{equation}
Nested Monte-Carlo estimators have been studied for example by Gordy and Juneja~\cite{GoJu} in the context of portfolio risk measurement. These estimators introduce two levels of error: one for the inner expectation and the other one for the outer expectation. In a standard way, we analyse the Mean-Square Error (MSE) of the estimator $\text{MSE}(\widehat{I}_{J,K})=\E\left[\vert\widehat{I}_{J,K}-I\vert^2\right]$ and use the bias-variance decomposition:
$$ \text{MSE}(\widehat{I}_{J,K})= \text{bias}^2(\widehat{I}_{J,K}) +\V(\widehat{I}_{J,K}),$$
where $\text{bias}(\widehat{I}_{J,K}) =\E[\widehat{I}_{J,K}]-I$.

\begin{notations}
\begin{itemize}
\item We set $E_X^p:=\E[Y^p|X]$ for $p\in \{1,\ldots,P\}$ and $M_X^p=\max\{\E[Y^1|X],\ldots,\E[Y^p|X]\}$.
\item Let $K\in\N^*$ and $Y_1,\dots,Y_K$ be an i.i.d. sample following the conditional law of~$Y$ given~$X$. Then, we set
  \begin{equation}\label{Notation1} \forall p=1,\ldots,P, \quad \widehat{E}^{p}_K=\frac 1K \sum_{k=1}^K Y^p_k \text{ and } \widehat{M}_K^{p}=\max\{\widehat{E}^{1}_K,\ldots,\widehat{E}^{p}_K\}.
  \end{equation}
  
\item Besides, when $K$ is even, we define
   \begin{equation}\label{Notation2} \forall p=1,\ldots,P, \quad \widehat{E}^{p,\prime}_{K/2}=\frac 2K \sum_{k=K/2+1}^K Y^p_k \text{ and } \widehat{M}_{K/2}^{p,\prime}=\max\{\widehat{E}^{1,\prime}_{K/2},\ldots,\widehat{E}^{p,\prime}_{K/2}\}.
   \end{equation}
\end{itemize}
\end{notations}
\noindent From the LLN, we have $\widehat{E}^{p}_K\to E_X^p$ and $\widehat{M}_K^{p}\to M_X^p$ almost surely as $K\to +\infty$. The next theorem analyses the MSE of the nested estimator and provides estimates that will be then useful for the analysis of the MLMC estimator.

\begin{theorem}\label{thm_nested}
  Let $P\ge 2$ and $\eta\in (0,1]$.   
    We assume that (A.1) and (A.2) hold, and we define, for $p\in \{1,\dots,P\}$, $\sigma_p(X)=\sqrt{\V(Y^p|X)}$. Let us set $\Sigma^{1+\eta}_P(X)=\sum_{i=1}^P \sigma^{1+\eta}_i(X)$ 
    and
  $$ C_P(X)=2^\eta \Sigma^{1+\eta}_P(X) \sum_{p=2}^P \frac{1}{|E^{p}_X-M^{p-1}_X|^\eta}.$$
 Assume that the following condition holds:
\begin{itemize}
\item[(i)] $\forall\ p=2,\ldots, P,\ \mathbb{P}\left(M^{p-1}_X=E_X^{p}\right)=0$, 
\item[(ii)] $\Sigma^2= \E[\Sigma^2_P(X) \phi^2(X)]<\infty$ and $C=\E[C_P(X)|\phi(X)|]<\infty$.
\end{itemize}
 Then, we have
 \begin{align}
  \left|  \E \left(\left(\widehat{M}^P_K-M^P_X\right)\phi(X)\right) \right| \leq \frac{C}{K^{\frac{1+\eta}{2} }} \ \text{ and }  \E\left(\left(\widehat{M}^P_K-M^P_X\right)^2\phi^2(X)\right) \leq \frac{\Sigma^2}{K}. \label{BV_nested}
 \end{align}
 Besides, if $V=\V(M^P_X \phi(X))<\infty$, we get
 \begin{equation}\label{MSE_nested} \text{MSE}(\widehat{I}_{J,K})\le \frac{C^2}{K^{1+\eta}}+ \frac{2 V } {J } +\frac{2\Sigma^2} {J K}.  
 \end{equation}
With this upper bound, taking $K\sim c\varepsilon^{-\frac{2}{1+\eta}}$ and $J\sim c'\varepsilon^{-2}$ for some constants $c,c'>0$  is an asymptotically optimal choice to get  $\text{MSE}(\widehat{I}_{J,K})=O(\varepsilon^{2})$ while minimizing the computation cost $JK$.
\label{thm_biais_variance_nested}
\end{theorem}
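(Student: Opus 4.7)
The plan is to treat the two inequalities in~\eqref{BV_nested} separately and then assemble the MSE bound. For the second-moment bound, the $1$-Lipschitz property of the maximum gives $(\widehat M_K^P-M_X^P)^2 \le \sum_{i=1}^P(\widehat E_K^i-E_X^i)^2$, so conditioning on $X$ and using $\E[(\widehat E_K^i-E_X^i)^2\mid X]=\sigma_i^2(X)/K$, then multiplying by $\phi^2(X)$ and integrating, yields $\Sigma^2/K$.

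The delicate step is the bias. Assumption~(i) makes $p^\star:=\argmax_{p\le P} E_X^p$ almost surely unique, which allows the decomposition
\[
\widehat M_K^P - M_X^P = (\widehat E_K^{p^\star}-E_X^{p^\star}) + \max_{q\ne p^\star}(\widehat E_K^q-\widehat E_K^{p^\star})^+.
\]
The first summand has zero conditional mean, so the conditional bias $\beta_P(X):=\E[\widehat M_K^P-M_X^P\mid X]$ is non-negative and dominated by $\sum_{q\ne p^\star}\E[(\widehat E_K^q-\widehat E_K^{p^\star})^+\mid X]$. For each $q\ne p^\star$ the variable $W:=\widehat E_K^q-\widehat E_K^{p^\star}$ has strictly negative conditional mean $\mu=E_X^q-M_X^P$; a one-sided Chebyshev argument (on $\{W>0\}$ one has $|W-\mu|\ge|\mu|$, hence $W\le|W-\mu|^{1+\eta}/|\mu|^\eta$) gives $\E[W^+\mid X]\le\E[|W-\mu|^{1+\eta}\mid X]/|\mu|^\eta$. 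Combining $(a+b)^{1+\eta}\le 2^\eta(a^{1+\eta}+b^{1+\eta})$ with the Jensen bound $\E[|\widehat E_K^i-E_X^i|^{1+\eta}\mid X]\le\sigma_i^{1+\eta}(X)/K^{(1+\eta)/2}$ (valid since $\eta\le 1$) and summing yields
\[
\beta_P(X)\le\frac{2^\eta\,\Sigma_P^{1+\eta}(X)}{K^{(1+\eta)/2}}\sum_{q\ne p^\star}\frac{1}{(M_X^P-E_X^q)^\eta}.
\]
To recover the form of $C_P(X)$ stated in the theorem, I would then prove the combinatorial inequality $\sum_{q\ne p^\star}(M_X^P-E_X^q)^{-\eta}\le\sum_{p'=2}^P|E_X^{p'}-M_X^{p'-1}|^{-\eta}$ by classifying indices as records $p_1=1<\dots<p_r=p^\star$ of the running-maximum sequence or non-records, and exhibiting the bijection $\pi:\{1,\dots,P\}\setminus\{p^\star\}\to\{2,\dots,P\}$ that sends each record $p_j$ with $j<r$ to $p_{j+1}$ and fixes each non-record $q\in(p_j,p_{j+1})$. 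The required per-term inequalities $e_r-e_j\ge e_{j+1}-e_j$ and $e_r-E_X^q\ge e_j-E_X^q$ are immediate from $e_r=M_X^P\ge e_j$. Integrating the conditional bias against $|\phi(X)|$ then closes the bias estimate in~\eqref{BV_nested}.

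The MSE bound~\eqref{MSE_nested} then follows from the bias--variance decomposition: $\widehat I_{J,K}$ is the empirical mean of $J$ iid copies of $\widehat M_K^P\phi(X)$, so its squared bias is $\le C^2/K^{1+\eta}$, and $\V(\widehat M_K^P\phi(X))\le 2V+2\Sigma^2/K$ follows by writing $\widehat M_K^P\phi(X)=M_X^P\phi(X)+(\widehat M_K^P-M_X^P)\phi(X)$ and applying $\V(A+B)\le 2\V(A)+2\V(B)$ together with the second-moment bound above. Setting the two dominant terms $C^2/K^{1+\eta}$ and $2V/J$ equal to $\varepsilon^2$ forces the optimal orders $K\asymp\varepsilon^{-2/(1+\eta)}$ and $J\asymp\varepsilon^{-2}$; the cross term $2\Sigma^2/(JK)$ is then automatically of smaller order. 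The main obstacle in this plan is the combinatorial bijection, which bridges the natural bound produced by the probabilistic argument (involving gaps to the global maximum) with the running-maximum gaps appearing in $C_P(X)$.
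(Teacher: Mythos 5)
Your proof is correct, and it reaches the bias estimate by a genuinely different route than the paper. The paper proves \eqref{BV_nested} by induction on $P$ through the two-argument Lemma~\ref{lemma_rec_max}: it writes $\max\{a,b\}=a+(b-a)^+$, replaces the positive part by the $C^1$ regularization $g_\varepsilon$, Taylor-expands with an integral remainder, passes to the limit $\varepsilon\to 0$ by dominated convergence, and controls the remainder with the indicator bound $\mathds{1}_{\theta\widehat{\theta}\le 0}\le |\widehat{\theta}-\theta|/|\theta|$ of Lemma~\ref{lemma::indicatrice_cv_markov}; the constant $C_P(X)$ with its running-maximum gaps $|E^{p'}_X-M^{p'-1}_X|$ then falls out of the induction with no further work. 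You instead expand directly around the almost surely unique argmax $p^\star$ (uniqueness does follow from (i): if the global maximum were attained at $i<j$ one would have $M^{j-1}_X=E^{j}_X$), kill the centered term, bound each $\E[(\widehat{E}^{q}_K-\widehat{E}^{p^\star}_K)^+\,|\,X]$ by a one-sided Chebyshev argument for a variable with strictly negative conditional mean, and then invoke a combinatorial bijection to convert gaps to the global maximum into the running-maximum gaps defining $C_P(X)$. Both routes share the final ingredients (the $2^\eta$-convexity bound and Jensen for the $(1+\eta)$-moment of the empirical mean), and your bijection is sound: sending each record of the running maximum to the next record and fixing the non-records maps $\{1,\dots,P\}\setminus\{p^\star\}$ bijectively onto $\{2,\dots,P\}$ with the required term-by-term domination, since $M^P_X$ dominates every $E^{p_{j+1}}_X$ and every $M^{q-1}_X$. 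What your approach buys is the absence of any smoothing or limiting argument, the extra information that the conditional bias is nonnegative, and a sharper intermediate constant $\sum_{q\ne p^\star}(M^P_X-E^{q}_X)^{-\eta}$ that is moreover symmetric in the labelling of the coordinates; the price is the combinatorial step needed only to match the constant as stated. What the paper's approach buys is modularity: the two-argument lemma isolates the analytic difficulty once and for all, and its auxiliary indicator estimates are reused in the antithetic variance analysis of Proposition~\ref{prop_antithetic}. Your treatment of the second-moment bound, of the bias--variance decomposition leading to \eqref{MSE_nested}, and of the optimal tuning of $J$ and $K$ coincides with the paper's.
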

\begin{remark} Let us note that the assumptions (i) and  $C<\infty$ of Theorem~\ref{thm_biais_variance_nested} are only needed to improve the upper bound on the bias. If it does not hold, we still have
  $$ \left|  \E \left(\left(\widehat{M}^P_K-M^P_X\right)\phi(X)\right) \right| \le  \E \left(\left|\left(\widehat{M}^P_K-M^P_X\right)\phi(X)\right| \right)\le \frac{\Sigma}{\sqrt{K}}, $$
from the right hand side of~\eqref{BV_nested} and Cauchy-Schwarz inequality. Note that this speed of $O(K^{-1/2})$ is the best rate of convergence without further assumption, as illustrated by the following example. Consider the case where $Y=(Y^1,Y^2)$ and, given~$X$, $Y^1$ and $Y^2$ are independent normal distribution with unit variance ($\sigma_1(X)=\sigma_2(X)=1$) and the same mean $m(X)$. Then, $M^2_X=m(X)$ and, given $X$, $\widehat{M}^2_K-M^2_X$ has the same law as $\frac 1 {\sqrt{K}} \max(G^1,G^2)$ where $G^1$ and $G^2$ are independent standard normal variables. Thus, we have for $\phi \equiv 1$: $\E\left[ \left|\widehat{M}^2_K-M^2_X\right| \right]=\frac{c}{\sqrt{K}}$ with $c=\E[|\max(G^1,G^2)|]$.
\end{remark}
\begin{remark} For practical applications such as the standard formula for the SCR interest rate module, one usually considers the positive part of the maximum. This amounts to add the coordinate $Y^{P+1}=0$ in our framework. Thus, if we assume in addition that $\Pb(M^P_X=0)=0$ and $\tilde{C}=\E\left[ \left(C_P(X)+\frac{2^\eta\Sigma^{1+\eta}_P(X)}{|M^P_X|^\eta}\right)|\phi(X)|\right]<\infty$, then 
$$\left|  \E \left(\left((\widehat{M}^P_K)^+-(M^P_X)^+\right)\phi(X)\right) \right| \leq \frac{\tilde{C}}{K^{\frac{1+\eta}2}} \ \text{ and }  \E\left(\left((\widehat{M}^P_K)^+-(M^P_X)^+\right)^2\phi^2(X)\right) \leq \frac{\Sigma^2}{K}.$$
\end{remark}
\begin{remark}
  Let us assume for simplicity that $\phi\equiv1$ and there exists $\underline{\sigma},\overline{\sigma} \in \R_+^*$ such that for all $p\in \{1,\dots,P\}$,
  $$ \underline{\sigma}\le \sigma_p(X) \le \overline{\sigma}, \ a.s.$$
  Then, the integrability condition~(ii) of Theorem~\ref{thm_biais_variance_nested} is equivalent to have $\E[|E_X^p-M_X^{p-1}|^{-\eta}]<\infty$ for all $p\in \{2,\dots,P\}$. Suppose now that $E_X^p-M_X^{p-1}$ admits a probability density~$f_p(x)$ that is continuous and does not vanish at~$0$.  Then, the integrability condition near~$0$ gives $$\int_{-\varepsilon}^{\varepsilon}|x|^{-\eta}f_p(x)dx<\infty \iff \eta<1.$$
This indicates that, in a quite general framework,  condition~(ii) of Theorem~\ref{thm_biais_variance_nested} is not satisfied for $\eta=1$ but may be satisfied for any $0<\eta<1$.
\end{remark}

The proof of Theorem~\ref{thm_biais_variance_nested} is a consequence of the next lemma, whose proof is postponed to Appendix~\ref{subsec_app_nested}. The analysis is rather standard, but the difficulty is to handle in the bias analysis the irregularity of the maximum when two (or more) arguments equal. This is why we need Assumption $(i)$ and the finiteness of~$C$ in Assumption~(ii). These assumptions are different from Assumptions~2 and~3 that are used by Giles and Goda~\cite{GiGo} in a similar context. With their assumptions, they obtain a bias in $O(1/K^{1-\delta}) $ for any arbitrary $0<\delta<1$. Here, we directly see the link between the integrability assumption and the bias in $O(1/K^{\frac{1+\eta}{2}})$. Besides, let us note  that we do not need to assume the boundedness of any moments of $Y^p\phi(X)$, $p\in \{1,\dots,P\}$ (Assumption~1 of~\cite{GiGo}) since we are using a different approach that does not make use of the Burkholder-Davis-Gundy inequality. 
\begin{lemma}
Let  $\eta \in (0,1]$. For $i\in \{1,2\}$, we consider real valued random variables $\widehat{\theta}_K^{i}$ and real valued functions $\varphi_i$ that satisfy the following conditions :
\begin{itemize}
\item[(i)] $\widehat{\theta}_K^{i}\xrightarrow[K \to \infty]{}\varphi_i(X)\ a.s$
\item[(ii)] There are nonnegative measurable functions $C_i$ and $\sigma_i^2$ such that for all $K\in \N^*$:
\begin{equation}
\left\vert \mathbb{E}\left[\widehat{\theta}_K^{i}-\varphi_i(X)|X\right]\right\vert\leq \frac{C_i(X)}{K^{\frac{1+\eta}2}},
\end{equation}
\begin{equation}
 \mathbb{E}\left[\left\vert\widehat{\theta}_K^{i}-\varphi_i(X)\right\vert^2|X\right]\leq \frac{\sigma_i^2(X)}{K}.
\end{equation}
\item[(iii)] $\mathbb{P}\left(\vert\varphi_{12}(X)\vert=0\right)=0$, where $\varphi_{21}(x):=\varphi_2(x)-\varphi_1(x)$.
\end{itemize}
Then, we have with
\begin{align}
C(X)&=\mathds{1}_{\varphi_{21}(X)<0}C_1(X)+\mathds{1}_{\varphi_{21}(X)>0}C_2(X)+2^\eta \frac{\sigma_1^{1+\eta}(X)+\sigma_2^{1+\eta}(X)}{\vert \varphi_{21}(X)\vert},\\
\sigma^2(X)&=\sigma_1^2(X)+\sigma_2^2(X),
\end{align}
the following estimates:
\begin{equation}
\left\vert \mathbb{E}\left[\max\{\widehat{\theta}_K^{1},\widehat{\theta}_K^{2}\}-\max\{\varphi_1(X),\varphi_2(X)\}|X\right]\right\vert\leq \frac{C(X)}{K^{\frac{1+\eta}2}},\label{weak_err_max}
\end{equation}
\begin{equation}
 \mathbb{E}\left[\left\vert\max\{\widehat{\theta}_K^{1},\widehat{\theta}_K^{2}\}-\max\{\varphi_1(X),\varphi_2(X)\}\right\vert^2|X\right]\leq \frac{\sigma^2(X)}{K}.
 \label{strong_err_max}
\end{equation}
\label{lemma_rec_max}
\end{lemma}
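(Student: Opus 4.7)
My plan is to dispatch the strong error first, since it is essentially immediate, and then invest the bulk of the work in the weak error where the non-smoothness of the maximum really bites.

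For the strong error \eqref{strong_err_max}, I would use the elementary pointwise bound $|\max(a,b)-\max(c,d)| \le \max(|a-c|,|b-d|) \le \sqrt{(a-c)^2+(b-d)^2}$. Squaring, taking conditional expectation given $X$, and applying hypothesis (ii) term by term gives $\E[|\max(\widehat{\theta}_K^1,\widehat{\theta}_K^2)-\max(\varphi_1(X),\varphi_2(X))|^2\,|\,X]\le(\sigma_1^2(X)+\sigma_2^2(X))/K$. So this step is completely routine.

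For the weak error \eqref{weak_err_max}, I would linearize via the identity $\max(a,b)=\tfrac12(a+b+|a-b|)$. Setting $D(X)=\varphi_{21}(X)$ and $\widehat{D}_K=\widehat{\theta}_K^2-\widehat{\theta}_K^1$, this writes
\begin{equation*}
\max(\widehat{\theta}_K^1,\widehat{\theta}_K^2)-\max(\varphi_1(X),\varphi_2(X)) = \tfrac12\bigl[(\widehat{\theta}_K^1-\varphi_1(X))+(\widehat{\theta}_K^2-\varphi_2(X))\bigr]+\tfrac12\bigl[|\widehat{D}_K|-|D(X)|\bigr].
\end{equation*}
The conditional expectation of the first bracket is bounded by $(C_1(X)+C_2(X))/K^{(1+\eta)/2}$ directly from hypothesis (ii). The real work lies in $\E[|\widehat{D}_K|-|D(X)|\,|\,X]$. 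Hypothesis (iii) lets me condition on the sign of $D(X)$. On $\{D(X)>0\}$, the identity $|y|=y+2y^-$ (with $y^-=(-y)^+$) yields $|\widehat{D}_K|-|D(X)|=(\widehat{D}_K-D(X))+2\widehat{D}_K^-$, and on $\{D(X)<0\}$ the symmetric decomposition holds. After collecting terms with the first bracket, what survives on $\{D>0\}$ is simply $\E[\widehat\theta_K^2-\varphi_2(X)\,|\,X]+\E[\widehat{D}_K^-\,|\,X]$, which explains the indicator structure of $C(X)$.

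The key estimate, which is the main obstacle, is controlling the "sign flip" term $\E[\widehat{D}_K^-\,|\,X]$ on $\{D(X)>0\}$. The trick is to notice that $\widehat{D}_K^->0$ forces $|\widehat{D}_K-D(X)|\ge D(X)$, so $\widehat{D}_K^- \le |\widehat{D}_K-D(X)|\,\mathbf{1}_{|\widehat{D}_K-D(X)|\ge D(X)}\le |\widehat{D}_K-D(X)|^{1+\eta}/D(X)^\eta$. Taking conditional expectation, then Jensen's inequality (valid since $(1+\eta)/2\le 1$), and finally Minkowski's inequality applied to $\widehat{D}_K-D(X)=(\widehat\theta_K^2-\varphi_2(X))-(\widehat\theta_K^1-\varphi_1(X))$ with hypothesis (ii) delivers
\begin{equation*}
\E[\widehat{D}_K^-\,|\,X] \le \frac{(\sigma_1(X)+\sigma_2(X))^{1+\eta}}{D(X)^\eta K^{(1+\eta)/2}} \le 2^\eta\frac{\sigma_1^{1+\eta}(X)+\sigma_2^{1+\eta}(X)}{|\varphi_{21}(X)|^\eta K^{(1+\eta)/2}},
\end{equation*}
where the last step uses $(a+b)^{1+\eta}\le 2^\eta(a^{1+\eta}+b^{1+\eta})$. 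The symmetric case $\{D(X)<0\}$ is identical after swapping the indices $1\leftrightarrow 2$, and combining all pieces produces exactly $C(X)/K^{(1+\eta)/2}$. The interpolation exponent $\eta\in(0,1]$ is precisely the degree of freedom that trades bias order against the negative moment of $|\varphi_{21}(X)|$ one needs to integrate in Theorem~\ref{thm_biais_variance_nested}; this is where hypothesis (iii) is crucial, and where Assumption~2--3 of Giles--Goda would otherwise be invoked.
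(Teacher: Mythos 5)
Your proof is correct, and for the weak error it takes a genuinely different route from the paper's. The paper writes $\max\{a,b\}=a+g_0(b-a)$ with $g_0(u)=u^+$, replaces $g_0$ by a $C^1$, piecewise-$C^2$ regularization $g_\varepsilon$, performs a Taylor expansion with integral remainder, and passes to the limit $\varepsilon\to 0$ via dominated convergence and a dedicated lemma on $\limsup_\varepsilon\int g_\varepsilon''$; the remainder is then reduced to $\E\bigl[\vert\widehat{\theta}_K^{21}-\varphi_{21}(X)\vert\mathds{1}_{\widehat{\theta}_K^{21}\varphi_{21}(X)\le 0}\,\vert\,X\bigr]$. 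Your decomposition $\max(a,b)=\tfrac12(a+b+|a-b|)$ combined with $|y|=y+2y^-$ reaches the same "sign-flip" quantity ($\widehat{D}_K^-$ on $\{D(X)>0\}$ is exactly $\vert\widehat{D}_K-D(X)\vert$ restricted to the sign-change event, up to the obvious bound) by purely algebraic, pointwise manipulations, with no smoothing, no Taylor remainder, and no limit interchange to justify. From that point on the two arguments coincide: the Markov-type bound $\mathds{1}_{\vert\widehat{D}_K-D(X)\vert\ge D(X)}\le \vert\widehat{D}_K-D(X)\vert^\eta/D(X)^\eta$ is the paper's Lemma~\ref{lemma::indicatrice_cv_markov}, and the $L^{1+\eta}$ triangle inequality plus Jensen plus the convexity bound $(a+b)^{1+\eta}\le 2^\eta(a^{1+\eta}+b^{1+\eta})$ is literally the paper's estimate~\eqref{strong_err12A}. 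Your approach is shorter and more elementary; what the regularization buys the authors is a systematic template (it reappears in spirit for the antithetic variance analysis), but for this lemma it adds machinery without adding strength. Two minor remarks: your exponent $\vert\varphi_{21}(X)\vert^\eta$ in the denominator is what the paper's proof actually derives and what Theorem~\ref{thm_nested} uses (the lemma statement's $\vert\varphi_{21}(X)\vert$ without the exponent appears to be a typo), and in your chain of inequalities Minkowski should be applied before Jensen (triangle inequality in $L^{1+\eta}$ first, then $\E[\vert Z\vert^{1+\eta}\vert X]\le\E[\vert Z\vert^{2}\vert X]^{(1+\eta)/2}$ on each term); the order as written is a slip of exposition, not of substance. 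The strong-error argument is the same as the paper's.
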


\begin{proof}[Proof of Theorem~\ref{thm_biais_variance_nested}]
  We first prove by induction on $P \ge 2$ that
  \begin{align}\left|  \E \left(\widehat{M}^P_K-M^P_X \bigg| X\right) \right| \leq \frac{C_P(X)}{K^{\frac{1+\eta}2}} \ \text{ and }  \E\left(\left(\widehat{M}^P_K-M^P_X\right)^2\bigg|X\right) \leq \frac{\Sigma^2_P(X)}{K}.\label{induc_hyp}
  \end{align}
  We apply Lemma~\ref{lemma_rec_max}, noticing that $\E[\widehat{E}^p_K-E^p_X|X]=0$ and $\E[(\widehat{E}^p_K-E^p_X)^2|X]=\frac{\sigma^2_p(X)}{K}$, for $p\in \{1,\dots,P\}$. First, this gives the result for $P=2$. Second, with the induction hypothesis for $P$, Lemma~\ref{lemma_rec_max} gives that~\eqref{induc_hyp} is satisfied for $P+1$ with
  $$ C_{P+1}(X)=C_P(X)+2^\eta\frac{\Sigma_P^{1+\eta}(X)+\sigma_{P+1}^{1+\eta}(X)}{|\widehat{E}^{P+1}_K-M^P_X|^\eta} \text{ and }  \Sigma^2_{P+1}(X)=\Sigma^2_P(X)+\sigma_{P+1}^2(X),$$
  which gives the claim.

  Since $\text{bias}(\widehat{I}_{J,K})=\E\left[\left(\widehat{M}^P_K-M^P_X\right) \phi(X)\right]$, we get $|\text{bias}(\widehat{I}_{J,K})|\le \frac{\E[C_P(X)|\phi(X)|]}{K^{\frac{1+\eta}2}}=\frac C {K^{\frac{1+\eta}2}}$. Similarly, we have
  \begin{align*}
    \V(\widehat{I}_{J,K})=\frac1J \V[\widehat{M}^P_K\phi(X)] &\le \frac 2J \V\left[\left(\widehat{M}^P_K-M^P_X\right) \phi(X)\right]+\frac 2J \V[M^P_X \phi(X)] \\
    & \le \frac 2J \E\left[\left(\widehat{M}^P_K-M^P_X\right)^2 \phi^2(X)\right]+\frac 2J \V[M^P_X \phi(X)],
  \end{align*}
  which leads to~\eqref{MSE_nested}.

  Last, we notice that for $c_1,c_2>0$, the minimization of $JK$ given $\frac{c_1}{K^{1+\eta}}+\frac{c_2}J=\varepsilon^2$ leads to $J= \frac{c_2}{(1+\eta)c_1} K^{1+\eta}$ and thus $K\sim c\varepsilon^{-\frac{2}{1+\eta}}$ and $J\sim c'\varepsilon^{-2}$ for some $c,c'>0$. Since $\frac1{JK}\le \frac 1J$ and $\frac1{JK}=O(\varepsilon^{2+\frac{2}{1+\eta}})$ is negligible with respect to $\frac 1 {K^2}$ and $\frac 1J$, this choice is asymptotically optimal: it gives $MSE(\widehat{I}_{J,K})=O(\varepsilon^2)$ with a computational cost in $O(\varepsilon^{-3-\frac{1-\eta}{1+\eta}})$. 
\end{proof}

\subsection{The Multilevel Monte-Carlo estimator}

The Multilevel Monte-Carlo (MLMC) is a general method to reduce the computational complexity of estimating an expected value with a biased estimator. Under suitable assumptions, it can even lead to the same asymptotic computational cost as an unbiased Monte-Carlo estimator.
It has originally been developed by Giles~\cite{Giles1} for Stochastic Differential Equations, when the solution is approximated by a discretization scheme such as the Euler scheme. Its application to the calculation of nested expectations has been developed by Haji-Ali~\cite{Haji},  Bujok et al.~\cite{BuHaRe} and Giles~\cite{Giles}. For a detailed presentation of the method, from its origins to its applications, we refer to Giles~\cite{Giles}. Let us explain its principle in few words and suppose that we are interested in computing $\E[\xi]$ with a family of estimators $\xi_l$ such that  $\E[\xi_l] \underset{l\to \infty}{\to} \E[\xi]$. We assume that  $\xi_l$ can be simulated and the larger is~$l$, the more the simulation of~$\xi_l$ requires computation time. Then, the basic idea is to take a large value of $L$ and to consider the Monte-Carlo estimator $\frac 1 {J} \sum_{j=1}^{J} \xi_{L,j}$. However, in many situations, it is possible to simulate jointly $(\xi_{l-1},\xi_l)$ in such a way that the variance of $\Delta_l=\xi_l-\xi_{l-1}$ is much smaller than the variance of $\xi_{l-1}$ and $\xi_l$. Then, observing that $\E[\xi_L]=\E[\xi_0]+\sum_{l=1}^L\E[\Delta_l]$, we consider the following MLMC estimator
\begin{equation}\label{MLMC_gen}
  \frac 1 {J_0} \sum_{j=1}^{J_0} \xi_{0,j} + \sum_{l=1}^L \frac 1 {J_l} \sum_{j=1}^{J_l}\Delta_{l,j},
\end{equation}
where the variables $\xi_{0,j},\Delta_{1,j}\dots,\Delta_{L,j}$ are sampled independently. This estimator has the same bias as $\frac 1 {J} \sum_{j=1}^{J} \xi_{L,j}$ but may have, for a given computational cost, a much lower variance. Since the computational cost of the simulation of $\xi_{0,j}$ is much lower than the one of~$\xi_{L,j}$, one may use a large number of simulations for the level 0 ($J_0>>J$) to reduce the statistical error. For the other levels $l\in \{1,\dots,L\}$, it is instead possible to use a relatively small number of simulations~$J_l$ thanks to the variance reduction given by the joint simulation of $(\xi_{l-1},\xi_l)$. The optimal tuning of $L$ and $J_0,\dots,J_L$ clearly depends on the context, and is analysed for a general framework in~\cite[Theorem 1]{Giles}. Here, we apply this method when $\xi_l$ is the nested Monte-Carlo estimator studied in Subsection~\ref{Subsec_NMC}. In this case, the bias comes from the approximation of the inner expectation by a second Monte-Carlo estimator.

We now present more precisely the MLMC estimator of~$I$ defined by~\eqref{def_I_max}. We consider $L\in \N^*$ that represents the number of levels and    $J_0,\dots,J_L\in \N^*$ that are the number of primary scenarios for each level. We consider $K_0,\dots,K_L \in \N^*$ that describe the numbers of inner simulations and are such that
\begin{equation}
  \forall l \in \{1,\dots,L\},\ K_l=K_02^l. \label{def_Kl}
\end{equation}
For each level $l\in \{0,\dots, L\}$, we consider $(X_{l,j}, 1\le j\le J_l)$ i.i.d.~random variables having the same distribution as~$X$, and  random variables $(Y_{l,j,k}, 1\le j\le J_l, 1\le k\le K_l)$  that are independent given $(X_{l,j}, 1\le j\le J_l)$ and such that $Y_{l,j,k}$ follows the distribution of~$Y$ given $X=X_{l,j}$. These random variables are assumed to be independent between levels, i.e. $(X_{l,j}, Y_{l,j,k}, 1\le j\le J_l, 1\le k\le K_l)_{l\in 0,\dots,L}$  are independent. Then, we define for $l\in \{0,\dots,L\}$ and $p\in \{1,\dots,P\}$:
\begin{align}
  \widehat{E}^p_{l,j,K}&=\frac{1}{K}\sum_{k=1}^{K} Y^p_{l,j,k},\ K\in \{1,\dots,K_l\}\\
  \widehat{M}^p_{l,j,K}&=\max(\widehat{E}^1_{l,j,K},\dots,  \widehat{E}^p_{l,j,K})
\end{align}
Then, the MLMC estimator of~$I$ is defined by
\begin{equation}
  \widehat{I}^{MLMC}=\frac{1}{J_0}\sum_{j=1}^{J_0}\widehat{M}^P_{0,j,K_0}\phi(X_{0,j})+\sum_{l=1}^L \frac{1}{J_l}\sum_{j=1}^{J_l}( \widehat{M}^P_{l,j,K_l} -\widehat{M}^P_{l,j,K_{l-1}})\phi(X_{l,j}). \label{Standard_Nested_MLMC_estim}
\end{equation}
Note that the sum $\frac{1}{J_0}\sum_{j=1}^{J_0}\widehat{M}^P_{0,j,K_0}\phi(X_{0,j})$ of level~$0$ is a nested Monte-Carlo with $J_0$ outer simulations and $K_0$ inner simulations, while the sum  $\frac{1}{J_l}\sum_{j=1}^{J_l}( \widehat{M}^P_{l,j,K_l} -\widehat{M}^P_{l,j,K_{l-1}})\phi(X_{l,j})$ is the difference of two nested Monte-Carlo estimators with $J_l$ outer simulations with $K_l$ and $K_{l-1}$ inner simulations that are computed with the same random variables. We recall in this context the heuristical explanation of the MLMC estimator~\eqref{Standard_Nested_MLMC_estim}. The level~$0$ uses a large number of outer simulations with a small number of inner simulation: this reduces the statistical error but leaves some bias. The other levels $l\in\{1,\dots,L\}$ are bias corrections that have a much smaller statistical error:  $\widehat{M}^P_{l,j,K_l}$ and $\widehat{M}^P_{l,j,K_{l-1}}$ are close to each other since they are constructed from the same random variables. The MLMC estimator uses this idea and we have to determine the values of $J_l$ and $L$ that are asymptotically optimal to estimate~$I$ with a precision~$\varepsilon>0$.

Let us assume that the assumptions of Theorem~\ref{thm_biais_variance_nested} hold. We have $\text{bias}(\widehat{I}^{MLMC})=\E\left[\left(\widehat{M}^P_{K_L}-M^P_X\right) \phi(X)\right]=O(K_L^{-\frac{1+\eta}2})=O(2^{-\frac{1+\eta}2 L})$. Besides, we have
$$\V\left(\left(\widehat{M}^P_{K_l}-\widehat{M}^P_{K_{l-1}}\right) \phi(X) \right)\le 2 \V\left(\left(\widehat{M}^P_{K_l}-M^P_X\right) \phi(X) \right)+ 2 \V\left(\left(\widehat{M}^P_{K_{l-1}}-M^P_X\right) \phi(X) \right)=O(K_l^{-1})=O(2^{-l})$$ and the computational cost of $( \widehat{M}^P_{l,j,K_l} -\widehat{M}^P_{l,j,K_{l-1}})\phi(X_{l,j})$ is $O(K_l)=O(2^l)$. We can thus apply Theorem~1~\cite{Giles}, which leads to the following result.
\begin{proposition}
  Let us assume that the assumptions of Theorem~\ref{thm_biais_variance_nested} hold for some $\eta \in(0,1]$. Then, by taking when $\varepsilon \to 0$
  \begin{equation}\label{reglage_multi}
    L=\left\lceil  \frac{2}{1+\eta} \frac{|\log(\varepsilon)|}{\log(2)}\right\rceil,  J_0=2^{\left\lceil\frac{ 2|\log(\varepsilon)|+|\log(|\log(\varepsilon)|)|}{\log(2)}  \right\rceil}=O(\varepsilon^{-2}|\log(\varepsilon)|) \text{ and } J_l=J_02^{-l}, l\in\{1,\dots L\},
  \end{equation}
  we have $MSE(\widehat{I}^{MLMC})=\E[(\widehat{I}^{MLMC}-I)^2]=O(\varepsilon^2)$ with a computational cost in $O(\varepsilon^{-2}\log^2(\varepsilon))$.

  If only the assumption $\Sigma^2<\infty$ of Theorem~\ref{thm_biais_variance_nested} holds, the same conclusion holds by taking~$\eta=0$ in~\eqref{reglage_multi}.
\end{proposition}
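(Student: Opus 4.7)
The plan is to reduce the claim to a direct application of the classical MLMC complexity theorem (Theorem~1 of~\cite{Giles}), using Theorem~\ref{thm_biais_variance_nested} to verify its hypotheses with explicit rates. First I would identify the three structural exponents. Writing the level-$l$ increment as $\Delta_l := (\widehat{M}^P_{l,j,K_l} - \widehat{M}^P_{l,j,K_{l-1}}) \phi(X_{l,j})$ for $l\ge 1$ and $\Delta_0 := \widehat{M}^P_{0,j,K_0}\phi(X_{0,j})$, I need a weak error exponent $\alpha$ for $\E[\Delta_0+\dots+\Delta_L] - I$, a variance exponent $\beta$ for $\V(\Delta_l)$, and a cost exponent $\gamma$ for the per-sample cost at level~$l$.

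Next I would verify each rate. The telescoping identity gives $\sum_{l=0}^L \E[\Delta_l]=\E[\widehat{M}^P_{K_L}\phi(X)]$, so by the bias estimate in~\eqref{BV_nested},
$$|\text{bias}(\widehat{I}^{MLMC})|\le \frac{C}{K_L^{(1+\eta)/2}}=O\bigl(2^{-\alpha L}\bigr),\qquad \alpha=\tfrac{1+\eta}{2}.$$
For the variance, applying the triangle inequality $\V(\Delta_l)\le \E[\Delta_l^2]$ and then the elementary bound $(a-b)^2\le 2(a-m)^2+2(b-m)^2$ with $m=M^P_X$, together with the strong error estimate in~\eqref{BV_nested}, yields
$$\V(\Delta_l)\le \frac{2\Sigma^2}{K_l}+\frac{2\Sigma^2}{K_{l-1}}=O(2^{-\beta l}),\qquad \beta=1.$$
Since computing $\Delta_l$ requires $K_l+K_{l-1}$ inner simulations, the per-sample cost is $O(2^l)$, giving $\gamma=1$.

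Finally I would invoke Giles' theorem. Because $\beta=\gamma=1$, we are in its borderline case, and the stated complexity $O(\varepsilon^{-2}\log^2\varepsilon)$ follows. The tuning in~\eqref{reglage_multi} is exactly the optimizer: $L$ is chosen so that $2^{-\alpha L}=O(\varepsilon)$, i.e.\ $L=\lceil\tfrac{2}{1+\eta}\log_2(1/\varepsilon)\rceil$, making the squared bias $O(\varepsilon^2)$; and $J_l\propto J_0 2^{-l}$ minimizes the total cost $\sum_l J_l K_l$ subject to $\sum_l \V(\Delta_l)/J_l=O(\varepsilon^2)$, with the logarithmic inflation of $J_0$ absorbing the $L$ terms of comparable magnitude. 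The $\eta=0$ case is identical: assumption~(ii) of Theorem~\ref{thm_biais_variance_nested} is no longer needed for the variance bound, and the weak error exponent becomes $\alpha=1/2$ (using $|\E[\Delta_l]|\le \E[|\Delta_l|]\le \Sigma/\sqrt{K_l}$), which again matches~\eqref{reglage_multi}.

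The main obstacle is conceptually minor: essentially all the analytical work is already packaged in Theorem~\ref{thm_biais_variance_nested}, so the proposition is a bookkeeping application of a known complexity theorem. The only point requiring slight care is the borderline regime $\beta=\gamma=1$, where one must check that the $\log$-factor in $J_0$ in~\eqref{reglage_multi} is precisely what is needed to keep the statistical error at $O(\varepsilon^2)$ while keeping the total cost at $O(\varepsilon^{-2}\log^2\varepsilon)$ rather than $O(\varepsilon^{-2})$.
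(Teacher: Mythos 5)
Your proposal is correct and follows essentially the same route as the paper: both identify the exponents $\alpha=\tfrac{1+\eta}{2}$, $\beta=1$, $\gamma=1$ from Theorem~\ref{thm_biais_variance_nested} (via the same elementary bound $\V(\Delta_l)\le 2\E[(\widehat{M}^P_{K_l}-M^P_X)^2\phi^2(X)]+2\E[(\widehat{M}^P_{K_{l-1}}-M^P_X)^2\phi^2(X)]$), invoke Theorem~1 of Giles in the borderline regime $\beta=\gamma$, and then verify directly that the tuning~\eqref{reglage_multi} gives $2^{-(1+\eta)L}\le\varepsilon^2$, $L/J_0=O(\varepsilon^2)$ and cost $LJ_0K_0=O(\varepsilon^{-2}\log^2\varepsilon)$, with the $\eta=0$ case handled by the Cauchy--Schwarz bias bound $\Sigma/\sqrt{K_L}$ exactly as in the paper.
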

\begin{proof}
  We just check that the parameters achieve the claim. From the bias-variance decomposition, we get by using Theorem~\ref{thm_nested},~\eqref{def_Kl} and~\eqref{reglage_multi} that there is a positive constant $C$ such that
  $$MSE(\widehat{I}^{MLMC})\le C\left( \frac 1 {K_L^{1+\eta}} + \frac 1 {J_0} + \sum_{l=1}^L \frac 1 {J_lK_l} \right)= C\left( \frac{2^{-(1+\eta) L}}{K_0^{1+\eta}} + \frac 1 J_0 +  \frac L {J_0K_0} \right). $$
  The choice of $L$ gives  $2^{-(1+\eta) L}\le \varepsilon^2$ and the choice of $J_0$ then gives $\frac L {J_0} =O(\varepsilon^2)$. Last the computational cost is given by $\sum_{l=0}^LJ_lK_l=LJ_0K_0=O(\varepsilon^{-2}\log^2(\varepsilon))$.
  In the case where we only know $\Sigma^2<\infty$, only the second statement of Equation~\eqref{BV_nested} holds, and we get
  $$\left|\E\left[\left(\widehat{M}^P_{K_L}-M^P_X\right) \phi(X)\right]\right|\le \frac{\Sigma}{\sqrt{K_L}}=\frac{\Sigma}{\sqrt{K_0}}2^{-L/2},$$
which gives the second claim with the same arguments. 
\end{proof}
\begin{remark}
  Let us note that the analysis of the computational cost gives that it is asymptotically bounded by $C\varepsilon^2\log^2(\varepsilon)$ for some constant $C>0$, but it does not analyse precisely this constant. Nonetheless, since this cost is $LJ_0K_0$, this constant can be chosen to be proportional to the number of levels. 

Thus, the analysis of the bias given by Theorem~\ref{thm_biais_variance_nested} under the integrability assumption $\E[C_P(X)|\phi(X)|]<\infty$ enables to reduce the number of levels and then to reduce this constant. 
\end{remark}

It is however possible to construct a better estimator by using the following MLMC antithetic estimator
\begin{equation}
  \widehat{I}^{MLMC}_A=\frac{1}{J_0}\sum_{j=1}^{J_0}\widehat{M}^P_{0,j,K_0}\phi(X_{0,j})+\sum_{l=1}^L \frac{1}{J_l}\sum_{j=1}^{J_l} \left( \widehat{M}^P_{l,j,K_l} -\frac{\widehat{M}^P_{l,j,K_{l-1}}+\widehat{M}^{P,\prime}_{l,j,K_{l-1}}}{2}\right)\phi(X_{l,j}), \label{Antithetic_MLMC_estim}
\end{equation}
where we set for $p\in \{1,\dots,P\}$,
\begin{equation}\label{def_prime}
   \widehat{E}^{p,\prime}_{l,j,K_{l-1}}=\frac{1}{K_{l-1}}\sum_{k=K_{l-1}+1}^{K_l} Y^p_{l,j,k} \text{ and }  \widehat{M}^{p,\prime}_{l,j,K_{l-1}}=\max(\widehat{E}^{1,\prime}_{l,j,K_{l-1}},\dots,  \widehat{E}^{p,\prime}_{l,j,K_{l-1}}).
\end{equation}
This is a rather natural idea to reduce the variance contribution of each level, see Section~9.1 of~\cite{Giles}. However, the irregularity of the maximum function makes the analysis of the variance more delicate as if it were a smooth function. Giles and Goda~\cite{GiGo} give an analysis of the variance that require the boundedness of any moments of $Y^p\phi(X)$, $p\in \{1,\dots,P\}$ (Assumption~1 of~\cite{GiGo}) and assumptions to control the probability that another component is close to the maximum (Assumptions~2 and~3 of~\cite{GiGo}). Here, our proof relies on a different argument. It only requires a moment condition that quantifies in a different way the probability that two or more arguments in the maximum are close to the maximum. The parameter $\eta \in (0,1]$ involved in this condition gives besides more flexibility to apply the MLMC method. Details are in the appendix (see Proposition~\ref{prop_antithetic}).  

\begin{remark}\label{Rk_Anti_gen}
  For the calculation of~\eqref{calc_risq} with a general function~$h$, the antithetic MLMC estimator is defined by
  \begin{align*}
    &\frac{1}{J_0}\sum_{j=1}^{J_0}h(\widehat{E}^{1}_{0,j,K_{0}},\dots,\widehat{E}^{P}_{0,j,K_{0}})\phi(X_{0,j})\\
    &+\sum_{l=1}^L \frac{1}{J_l}\sum_{j=1}^{J_l} \left( h(\widehat{E}^{1}_{l,j,K_{l}},\dots,\widehat{E}^{P}_{l,j,K_{l}})  -\frac{h(\widehat{E}^{1}_{l,j,K_{l-1}},\dots,\widehat{E}^{P}_{l,j,K_{l-1}}) + h(\widehat{E}^{1,\prime}_{l,j,K_{l-1}},\dots,\widehat{E}^{P,\prime}_{l,j,K_{l-1}})}{2}\right)\phi(X_{l,j}) .
  \end{align*}
  In particular, it is possible to estimate by MLMC the value of~\eqref{calc_risq} for different functions~$h$ with the same simulations. 
\end{remark}

\begin{theorem}\label{thm_MLMCA}
  Let $\eta \in(0,1]$. We assume that the assumptions of Theorem~\ref{thm_biais_variance_nested} hold and besides that
 \begin{equation*}
    \forall p \in \{2,\dots,P\},\ \E\left[ \frac{D_{2+\eta}(X)}{|E^p_X-M^{p-1}_X|^\eta} \phi^2(X) \right] <\infty,
 \end{equation*}
 where $D^p_{2+\eta}(X)=\E[|Y^p-\E[Y^p|X]|^{2+\eta}|X]$. Then, by taking when $\varepsilon\to 0$
  \begin{equation}\label{reglage_multi_Anti}
    L=\left\lceil  \frac{2}{1+\eta} \frac{|\log(\varepsilon)|}{\log(2)}\right\rceil,  J_0=2^{\left\lceil\frac{ 2|\log(\varepsilon)|}{\log(2)}  \right\rceil}=O(\varepsilon^{-2}) \text{ and } J_l=\lceil J_02^{-\left(1+\frac \eta 4\right)l} \rceil, l\in\{1,\dots L\},
  \end{equation}
we have $MSE(\widehat{I}^{MLMC}_A)=\E[(\widehat{I}^{MLMC}_A-I)^2]=O(\varepsilon^2)$ with a computational cost in $O(\varepsilon^{-2})$.
\end{theorem}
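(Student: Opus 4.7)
The plan is to use a bias--variance decomposition, $\text{MSE}(\widehat{I}^{MLMC}_A) = (\E[\widehat{I}^{MLMC}_A] - I)^2 + \V(\widehat{I}^{MLMC}_A)$, and handle each piece separately. Since the antithetic construction satisfies $\E[\widehat{M}^{P,\prime}_{K_{l-1}}\phi(X)] = \E[\widehat{M}^{P}_{K_{l-1}}\phi(X)]$, the telescoping bias equals that of the plain MLMC estimator, so Theorem~\ref{thm_biais_variance_nested} gives $|\E[\widehat{I}^{MLMC}_A]-I| \le C/K_L^{(1+\eta)/2}$, which the choice of~$L$ makes $O(\varepsilon)$. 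The cost satisfies $\sum_{l=0}^L J_l K_l \le \sum_l (J_0 2^{-(1+\eta/4)l}+1)\,K_0\,2^l$, a convergent geometric series in $2^{-\eta l/4}$, giving the claimed $O(\varepsilon^{-2})$ total cost. Everything therefore reduces to establishing the sharper per-level variance bound
\[ V_l := \V\!\Bigl[\bigl(\widehat{M}^P_{K_l} - \tfrac12(\widehat{M}^P_{K_{l-1}} + \widehat{M}^{P,\prime}_{K_{l-1}})\bigr)\phi(X)\Bigr] \le C\,2^{-(1+\eta/2)\,l}, \]
since then $V_l/J_l \le C\varepsilon^2\cdot 2^{-\eta l/4}$ sums to $O(\varepsilon^2)$ over the levels, producing the required MSE.

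The key structural observation for this bound is that the antithetic increment $Z_l := \widehat{M}^P_{K_l} - \tfrac12(\widehat{M}^P_{K_{l-1}} + \widehat{M}^{P,\prime}_{K_{l-1}})$ is almost surely nonpositive (convexity of the max applied to $\widehat{E}^p_{K_l} = \tfrac12(\widehat{E}^p_{K_{l-1}} + \widehat{E}^{p,\prime}_{K_{l-1}})$) and vanishes identically on the ``good'' event $\mathcal{G}_l := \{p^-_l = p^+_l\}$, where $p^-_l := \argmax_p \widehat{E}^p_{K_{l-1}}$ and $p^+_l := \argmax_p \widehat{E}^{p,\prime}_{K_{l-1}}$. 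Hence $\E[Z_l^2\phi^2(X)] = \E[Z_l^2 \mathds{1}_{\mathcal{G}_l^c}\phi^2(X)]$, and the plan is to apply H\"older conditionally on~$X$:
\[ \E\bigl[Z_l^2 \mathds{1}_{\mathcal{G}_l^c}\mid X\bigr] \le \bigl(\E[|Z_l|^{2+\eta}\mid X]\bigr)^{2/(2+\eta)}\,\Pb(\mathcal{G}_l^c\mid X)^{\eta/(2+\eta)}. \]
A Marcinkiewicz--Zygmund/Rosenthal estimate applied conditionally to the inner samples yields $\E[|\widehat{E}^p_K - E^p_X|^{2+\eta}\mid X] \le C K^{-(1+\eta/2)} D^p_{2+\eta}(X)$, and since $|Z_l|$ is bounded by a sum of such centred averages this controls the first factor by $CK_l^{-(1+\eta/2)}\sum_p D^p_{2+\eta}(X)$. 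For the second factor, on $\mathcal{G}_l^c$ at least one of $p^\pm_l$ differs from the true argmax $p^\star := \argmax_p E^p_X$, so Markov on the same $(2+\eta)$-moment gives $\Pb(\mathcal{G}_l^c\mid X) \le C\sum_{p\ne p^\star}K_{l-1}^{-(1+\eta/2)}(D^p_{2+\eta}(X)+D^{p^\star}_{2+\eta}(X))/|E^{p^\star}_X - E^p_X|^{2+\eta}$. Inserting both estimates, the two $(2+\eta)$-powers of $K_l$ combine into a single $K_l^{-(1+\eta/2)}$ and the gap denominator collapses from exponent $2+\eta$ to exponent~$\eta$. Finally, comparing each random gap $|E^{p^\star}_X - E^p_X|$ to a deterministic sequential gap $|E^{p'}_X - M^{p'-1}_X|$ (one uses $p' = p^\star$ when $p<p^\star$, and $p' = p$ when $p>p^\star$) reduces integration against $\phi^2(X)$ to the hypothesis and closes the variance bound.

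The main obstacle is exactly this sharpened per-level estimate: a crude bound $V_l = O(K_l^{-1})$, as in plain MLMC, would force the logarithmic number $L = O(|\log\varepsilon|)$ of levels into the total cost, spoiling the target $O(\varepsilon^{-2})$. The antithetic cancellation $Z_l = 0$ on $\mathcal{G}_l$ is what produces the essential extra factor $K_l^{-\eta/2}$, and extracting it rigorously requires the H\"older interpolation together with the careful matching of random argmax-deviation gaps to the sequential gaps appearing in the hypothesis; that matching, combined with a Marcinkiewicz--Zygmund bound (rather than a full Burkholder--Davis--Gundy argument, in line with the paper's strategy), is what the referenced Proposition~\ref{prop_antithetic} in the appendix accomplishes. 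Once $V_l \le C\,2^{-(1+\eta/2)l}$ is in hand, the result follows from Giles's standard multilevel telescoping exactly as in the previous proposition.
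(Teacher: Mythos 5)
Your outer architecture coincides with the paper's: the bias is the telescoped bias of the non-antithetic estimator and is controlled by Theorem~\ref{thm_biais_variance_nested}, the cost is a convergent geometric series, and everything reduces to the per-level variance bound $V_l=O\bigl(K_l^{-(1+\eta/2)}\bigr)$, which is exactly the content of Proposition~\ref{prop_antithetic}. Where you genuinely diverge is in how you prove that bound. The paper proceeds by induction on $p$ through the identity $\max\{a,b\}=a+(b-a)^+$ (Lemma~\ref{induction_MLMC_anti}), isolates the antithetic error in the explicit function $h(x,y)=\bigl(\tfrac{x+y}{2}\bigr)^+-\tfrac{x^++y^+}{2}=-\tfrac{|x|\wedge|y|}{2}\mathds{1}_{xy\le 0}$, and then removes the indicator pathwise via $\mathds{1}_{\widehat{H}H<0}\le |\widehat{H}-H|^\eta/|H|^\eta$, so that a single $(2+\eta)$-moment of the deviation sits directly over the sequential gap and no H\"older step is needed. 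You instead argue globally: $Z_l\le 0$ by convexity of the max, $Z_l=0$ on the event where the two half-sample argmaxes agree, then H\"older combined with a Markov bound on the probability of argmax disagreement. Both routes deliver the rate $K_l^{-(1+\eta/2)}$, and your good-event/probability-of-disagreement mechanism is in fact closer in spirit to the Giles--Goda argument that the paper explicitly distinguishes itself from, though you correctly drive it with the paper's $\eta$-integrability hypothesis rather than their assumptions.

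There is one concrete place where your version does not quite close on the stated hypothesis. After H\"older, your level-$l$ bound has the form $K_l^{-(1+\eta/2)}\bigl(\sum_q D^q_{2+\eta}(X)\bigr)^{2/(2+\eta)}\sum_{p\ne p^\star}\bigl(D^p_{2+\eta}(X)+D^{p^\star}_{2+\eta}(X)\bigr)^{\eta/(2+\eta)}/|E^{p^\star}_X-E^p_X|^\eta$: because $Z_l$ involves all $P$ components, the first H\"older factor carries \emph{every} $D^q_{2+\eta}(X)$, not only the one attached to the index whose sequential gap you match. Integrating against $\phi^2(X)$ therefore requires cross conditions of the type $\E\bigl[D^q_{2+\eta}(X)\,|E^{p'}_X-M^{p'-1}_X|^{-\eta}\phi^2(X)\bigr]<\infty$ for $q\ne p'$, which the theorem's assumption (one $p$ per gap) does not supply. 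As written your argument proves the result under a mildly stronger integrability condition; to recover the exact statement you would need to localize the $(2+\eta)$-moment of $Z_l$ to the disagreeing components before applying H\"older (or avoid the H\"older split altogether, as the paper's pathwise bound does). A secondary, purely bookkeeping point: the argmax $p^\star$ is $\sigma(X)$-measurable and a.s.\ unique under assumption~(i) of Theorem~\ref{thm_biais_variance_nested}, so the sum over $p\ne p^\star$ and the case split $p<p^\star$ versus $p>p^\star$ must be carried out $X$-by-$X$; this is fine but should be said.
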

\begin{proof}
  We have $\text{bias}(\widehat{I}_A^{MLMC})=\text{bias}(\widehat{I}^{MLMC})=O(2^{-\frac{1+\eta}2 L})$. By   Proposition~\ref{prop_antithetic}, the variance of each level satisfies
  $$ \V\left(\left(\widehat{M}^P_{K_l}-\frac{\widehat{M}^P_{K_{l-1}}+\widehat{M}^{P,\prime}_{K_{l-1}}}2\right) \phi(X) \right)=O(K_l^{-\left(1+\frac \eta 2 \right)})=O(2^{-l\left(1+\frac \eta 2 \right)}),$$
  and the computational cost of $\left(\widehat{M}^P_{K_l}-\frac{\widehat{M}^P_{K_{l-1}}+\widehat{M}^{P,\prime}_{K_{l-1}}}2\right) \phi(X) $ is in $O(K_l)=O(2^l)$. We are thus in the framework of Theorem~1 of~\cite{Giles}, and we just check that the choice of parameters~\eqref{reglage_multi_Anti} gives the claim. By using the bias variance decomposition, we have
  $$ MSE(\widehat{I}^{MLMC}_A) \le C\left( 2^{-(1+\eta) L} + \frac {1}{J_0} + \sum_{l=1}^L \frac 1 {J_lK_l^{1+\frac \eta 2}} \right)\le C\left(\varepsilon^2 +\varepsilon^2 \sum_{l=0}^L 2^{-\frac \eta 4 l}\right). $$
  Since $\sum_{l=0}^L 2^{-\frac \eta 4 l}\le \sum_{l=0}^\infty 2^{-\frac \eta 4 l}=\frac 1 {1- 2^{-\frac \eta 4}}$, we indeed have $MSE(\widehat{I}^{MLMC}_A)=O(\varepsilon^2)$.
 Observing that for $\varepsilon \in \R_+^*$ small enough, we have $J_02^{-\left(1+\frac \eta 4\right)L}\ge 1$ and thus $J_l\le 2 J_0 \times 2^{-\left(1+\frac \eta 4\right)l}$ for $l\in \{0,\dots,L\}$, we can upper bound the computational cost as follows
  $$\sum_{l=0}^L J_lK_l \le 2 J_0 K_0  \sum_{l=0}^L 2^{-\frac \eta 4 l} \le \frac {4 K_0 \varepsilon^{-2}} {1- 2^{-\frac \eta 4}}. \qedhere$$
\end{proof}

\begin{remark}We can easily extend Theorem~\ref{thm_MLMCA} if we assume that the assumption of Theorem~\ref{thm_biais_variance_nested} is true for some $\eta_1\in (0,1]$ and that
    \begin{equation*}
    \forall p \in \{2,\dots,P\},\ \E\left[ \frac{D^p_{2+\eta_2}(X)}{|E^p_X-M^{p-1}_X|^{\eta_2}} \phi^2(X) \right]<\infty,
    \end{equation*}
    for some $\eta_2>0$. If we then take
    $$     L=\left\lceil  \frac{2}{1+\eta_1} \frac{|\log(\varepsilon)|}{\log(2)}\right\rceil,  J_0=2^{\left\lceil\frac{ 2|\log(\varepsilon)|}{\log(2)}  \right\rceil}=O(\varepsilon^{-2}) \text{ and } J_l=\lceil J_02^{-\left(1+\frac {\eta_2} 4\right)l} \rceil, l\in\{1,\dots L\},$$
    we get in the same way that $MSE(\widehat{I}^{MLMC}_A)=O(\varepsilon^2)$ with a computational cost in $O(\varepsilon^{-2})$. However, roughly speaking, the integrability assumption of Theorem~\ref{thm_biais_variance_nested} for the bias deals with the integrability of $\frac{1}{|E^p_X-M^{p-1}_X|^{\eta_1}}$ when $|E^p_X-M^{p-1}_X|$ is close to~$0$, similarly as the assumption for the variance estimate. Thus, it is rather natural to consider $\eta_1=\eta_2$, and  we state Theorem~\ref{thm_MLMCA} in this case for sake of simplicity.
\end{remark}

\begin{remark} In all the presentation of the MLMC method, we have considered $K_l=K_02^l$. We could more generally consider $K_l=K_0\nu^l$, with $\nu\ge 2$. In this case, the antithetic estimator~\eqref{Antithetic_MLMC_estim} would then be defined by
  $$\widehat{I}^{MLMC}_A=\frac{1}{J_0}\sum_{j=1}^{J_0}\widehat{M}^P_{0,j,K_0}\phi(X_{0,j})+\sum_{l=1}^L \frac{1}{J_l}\sum_{j=1}^{J_l} \left( \widehat{M}^P_{l,j,K_l} -\frac{1}{\nu} \sum_{\tilde{\nu}=1}^\nu\widehat{M}^{P,\tilde{\nu}}_{l,j,K_{l-1}}\right)\phi(X_{l,j}),$$
where  $\widehat{M}^{P,\tilde{\nu}}_{l,j,K_{l-1}}$ is the estimator of the maximum obtained with the samples $Y^p_{l,j,k}$ with $(\tilde{\nu}-1) K_{l-1}+1 \le k \le \tilde{\nu} K_{l-1}$. In the context of MLMC for Stochastic Differential Equations, Giles~\cite[Section 4.1]{Giles1} proposes a heuristic method to determine $\nu$ and uses $\nu=4$ in his experiments. Tuning the parameter~$\nu$ will not improve the asymptotic rate of convergence in $O(\varepsilon^{-2})$ given by Theorem~\ref{thm_MLMCA}, but may improve the convergence by a multiplicative factor. For simplicity and clarity, we have kept $\nu=2$ through all the paper and we leave the optimization in~$\nu$ for further research.   
\end{remark}

\subsection{Least Squares Monte Carlo techniques for Nested Expectations}

In this paragraph, we aim at presenting briefly the classical technique of regression in our context, i.e. for the calculation of~$I$. For simplicity, we only consider here regressors that are indicator functions. 

Let $N_r \in \N^*$ be the number of regressors. We consider $B_1,\dots,B_{N_r}\in \mathcal{G}$ disjoint measurable sets of the space where~$X$ takes values, and we define for $n\in \{1,\dots, N_r\}$ and $p \in \{1,\dots, P\},$
$$ \alpha^p_n = \E[Y^p|X\in B_n]=\frac{\E[Y^p \mathds{1}_{X \in B_n}]}{\Pb(X \in B_n)} \quad (\text{with the convention } 0/0=0).$$
Then we have
$$ \forall p \in \{1,\dots,P\}, \ \E\left[ \left(Y^p-\sum_{n=1}^{N_r} \alpha^p_n  \mathds{1}_{X \in B_n} \right)^2\right]= \min_{\alpha_1,\dots,\alpha_{N_r}\in \R} \E\left[ \left(Y^p-\sum_{n=1}^{N_r} \alpha_n  \mathds{1}_{X \in B_n} \right)^2\right], $$
i.e. $\sum_{n=1}^{N_r} \alpha^p_n  \mathds{1}_{X \in B_n}$ is the $L^2$ projection of~$Y^p$ on~$\{\sum_{n=1}^{N_r} \alpha_n  \mathds{1}_{X \in B_n} : \alpha_1,\dots,\alpha_{N_r}\in \R\}$. It is a natural proxy of~$E^p_X$, which is the $L^2$ projection on the larger space of $\sigma(X)$-measurable random variables. We then define $\gamma^P_n=\max_{p=1,\dots,P}\alpha^p_n$, so that ~$\sum_{n=1}^{N_r} \gamma^P_n  \mathds{1}_{X \in B_n}$ approximates $M^P_X$.

Let us consider $(X_j,Y_j)_{1\le j\le J}$ an i.i.d. sample following the distribution of $(X,Y)$. We define
$$ \widehat{\alpha}^p_{n,J} = \frac{\sum_{j=1}^J Y^p_j \mathds{1}_{X_j \in B_n}}{\sum_{j=1}^J \mathds{1}_{X_j \in B_n}}
\quad (\text{with the same convention } 0/0=0)$$
and have similarly
$$\forall p \in \{1,\dots,P\}, \ \frac 1 J \sum_{j=1}^J \left(Y^p_j-\sum_{n=1}^{N_r}\widehat{\alpha}^p_{n,J}  \mathds{1}_{X_j \in B_n} \right)^2= \min_{\alpha_1,\dots,\alpha_{N_r} \in \R}\frac 1 J \sum_{j=1}^J \left(Y^p_j-\sum_{n=1}^{N_r} \alpha_{n}  \mathds{1}_{X_j \in B_n} \right)^2.  $$
We define $ \widehat{\gamma}^P_{n,J}=\max_{p=1,\dots,P}\widehat{\alpha}^p_{n,J}$, so that $\sum_{n=1}^{N_r}\widehat{\gamma}^P_{n,J}  \mathds{1}_{X_j \in B_n}$ approximates $M^P_{X_j}$. Thus, we define the Least Squares Monte-Carlo estimator of~$I$ by
\begin{equation}\label{def_ILSMC}\widehat{I}^{LSMC}=\frac 1 J \sum_{j=1}^J  \phi(X_j) \sum_{n=1}^{N_r} \widehat{\gamma}^P_{n,J}  \mathds{1}_{X_j \in B_n}. 
\end{equation}
We are interested in estimating the MSE of this estimator. The next proposition gives a framework to analyse it, which is useful to determine asymptotically the number of regressors and the number of Monte-Carlo samples to reach a given precision~$\varepsilon>0$.

\begin{proposition}\label{prop_MSE_LSMC}
  For $p=1,\dots,P$, we set $\sigma_p(x)=\V(Y^p|X=x)$, and assume that there exists $\overline{\sigma},\overline{\phi}\in \R_+^*$ such that for all $x\in G$, $\sigma^2_p(x)\le \overline{\sigma}^2$ and $|\phi(x)|\le \overline{\phi}$. Then, we have
  $$ \E[(\widehat{I}^{LSMC}-I)^2] \le 2 \overline{\phi}^2 \left( \frac{\overline{\sigma}^2N_rP+ \E[(M^P_X)^2]}J + \sum_{p=1}^P \E\left[\left( E^p_X -\sum_{n=1}^{N_r}\alpha^p_{n}  \mathds{1}_{X \in B_n}  \right)^2\right]\right).$$
  We now suppose in addition that $X$ takes values in $G=[0,1]^d$ and that:
  \begin{enumerate}
  \item $N_r=n_r^d$ for some $n_r\in \N$, and for any $n\in \{1,\dots,N_r\}$,
    $$B_n=\left[\frac{i_1}{n_r}, \frac{i_1+1}{n_r}\right)\times\dots\times \left[\frac{i_d}{n_r}, \frac{i_d+1}{n_r}\right).$$
      where $i_1,\dots,i_d \in \{0,\dots,n_r-1\}$ are the unique integers determined by the decomposition in base~$n_r$ of $n-1$, i.e. $n-1=\sum_{k=1}^d i_k n_r^{k-1}$,
  \item for any $p\in\{1,\dots,P\}$, the function $G \ni x \mapsto E^p_x=\E[Y^p|X=x]$ is Lipschitz continuous with constant $L$ for the $\|\|_\infty$ norm on $\R^d$. 
  \end{enumerate}
  Then, we have
  $$  \E[(\widehat{I}^{LSMC}-I)^2] \le 2 \overline{\phi}^2 \left( \frac{\sigma^2 N_r P }J + \frac{PL^2}{N_r^{2/d}} \right). $$
With this upper bound, taking $J\sim c \varepsilon^{-d-2}$ and $N_r\sim c' \varepsilon^{-d}$ for some constants $c,c'>0$ is an asymptotic optimal choice to have  $\E[(\widehat{I}^{LSMC}-I)^2] =O(\varepsilon^2)$, with an overall computational cost in $O(\varepsilon^{-d-2})$.
\end{proposition}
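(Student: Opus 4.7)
The proof naturally splits into three phases: a generic MSE decomposition yielding the first displayed bound, a geometric specialisation to the tensor-product partition yielding the second, and an elementary Lagrangian-type optimisation yielding the scaling of $J$ and $N_r$.

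For the first bound, I would introduce the plain Monte--Carlo proxy $I^{MC}=\frac{1}{J}\sum_{j=1}^{J}\phi(X_j)M^{P}_{X_j}$, which is unbiased for $I$ with variance at most $\overline{\phi}^2\E[(M^P_X)^2]/J$, and decompose $\widehat{I}^{LSMC}-I=(\widehat{I}^{LSMC}-I^{MC})+(I^{MC}-I)$. Applying $(a+b)^2\le 2a^2+2b^2$ converts the variance of $I^{MC}$ into the $2\overline{\phi}^2\E[(M^P_X)^2]/J$ summand of the bound. For the residual $\widehat{I}^{LSMC}-I^{MC}=\frac{1}{J}\sum_j\phi(X_j)(\widehat{M}_{X_j}-M^P_{X_j})$, I would apply Cauchy--Schwarz on the empirical mean together with the $L^2$-contraction of the maximum,
$$\bigl|\max_p a_p-\max_p b_p\bigr|^2\le\max_p(a_p-b_p)^2\le\sum_{p=1}^P(a_p-b_p)^2,$$
which by exchangeability of the $X_j$'s reduces the task to bounding $\E[(\widehat{\alpha}^p_{n(X_1),J}-E^p_{X_1})^2]$ for each $p\in\{1,\dots,P\}$, where $n(x)\in\{1,\dots,N_r\}$ denotes the index of the cell of the partition containing $x$.

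The core technical step is then a bias--variance analysis conditional on the design $\mathcal{X}=(X_1,\dots,X_J)$. On $\{X_1\in B_n\}$ one has $N_n=\sum_i\mathds{1}_{X_i\in B_n}\ge 1$; the conditional variance bound $\V(\widehat{\alpha}^p_{n,J}\mid\mathcal{X})\le\overline{\sigma}^2/N_n$, combined with the exchangeability identity $\E[\mathds{1}_{X_1\in B_n}/N_n]=\Pb(N_n\ge 1)/J\le 1/J$ (obtained by summing $\E[\mathds{1}_{X_j\in B_n}/N_n]$ over $j$: these are all equal by symmetry and sum to $\Pb(N_n\ge 1)$), produces after summation over $n$ a sampling contribution of order $\overline{\sigma}^2N_r/J$. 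The remaining bias piece is
$$\E[\widehat{\alpha}^p_{n,J}\mid\mathcal{X}]-E^p_{X_1}=\frac{1}{N_n}\sum_{i:X_i\in B_n,\,i\ne 1}(E^p_{X_i}-E^p_{X_1}),$$
and a Cauchy--Schwarz step followed by a two-point analogue of the above exchangeability identity converts its $L^2$ norm into a multiple of the projection error $\E[(E^p_X-\widetilde{E}^p_X)^2]$, with $\widetilde{E}^p_X:=\sum_n\alpha^p_n\mathds{1}_{X\in B_n}$. Summing over $p$ delivers the first inequality (possibly with slightly different numerical constants coming from the bias--variance cross term). The main delicate point is taming the random denominator $N_n$, which is handled precisely by the combinatorial identities above.

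For the second inequality, in the tensor partition each half-open cube $B_n$ has $\|\cdot\|_\infty$-diameter $1/n_r=N_r^{-1/d}$, so the Lipschitz hypothesis on $x\mapsto E^p_x$ yields for every $x\in B_n$ the pointwise estimate $|E^p_x-\alpha^p_n|=|E^p_x-\E[E^p_X\mid X\in B_n]|\le L/n_r$; hence $\E[(E^p_X-\widetilde{E}^p_X)^2]\le L^2/N_r^{2/d}$ and summing over $p$ produces the $PL^2/N_r^{2/d}$ contribution. For the final asymptotic statement, equating the two remaining error terms $\overline{\sigma}^2N_rP/J\sim \varepsilon^2$ and $PL^2/N_r^{2/d}\sim\varepsilon^2$ forces $N_r\sim c'\varepsilon^{-d}$ (from the projection error) and then $J\sim c\varepsilon^{-d-2}$ (from the sampling error); the computational cost, dominated by sampling the $J$ pairs $(X_j,Y_j)$, is $O(J)=O(\varepsilon^{-d-2})$, as claimed.
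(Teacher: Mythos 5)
Your argument is correct and follows the paper's skeleton for the outer steps (the decomposition through the plain Monte--Carlo proxy, the factor $2\overline{\phi}^2$, the contraction of the maximum reducing everything to the per-component errors, the $L/n_r$ Lipschitz estimate, and the balancing of $N_r$ and $J$), but it differs genuinely in the one technical step that carries the weight. Where the paper simply invokes Theorem~8.2.4 of Gobet's book --- a black-box bound on the \emph{empirical} $L^2$ error of least-squares regression, whose clean constant comes from the Pythagoras identity satisfied by the empirical projection --- you reprove that bound from scratch by conditioning on the design, splitting into conditional variance and conditional bias, and taming the random cell counts $N_n$ with the exchangeability identities $\E[\mathds{1}_{X_1\in B_n}/N_n]=\Pb(N_n\ge 1)/J\le 1/J$ and $\E[\mathds{1}_{X_1\in B_n}T_n/N_n]\le \E[T_n]/J$ for symmetric sums $T_n$ over the points falling in $B_n$. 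This is sound: the variance piece reproduces $\overline{\sigma}^2 N_r/J$ exactly, and the bias piece, after the $(a+b)^2\le 2a^2+2b^2$ splitting around $\alpha^p_n$, is controlled by the projection error --- though with a multiplicative constant (a factor $4$ rather than $1$) that is worse than the one displayed in the proposition, as you anticipate. The trade-off is clear: the paper's route gives the sharp constant but relies on an external regression theorem, while yours is elementary and self-contained, at the price of a constant that makes your first displayed inequality slightly weaker than the stated one (the orders of magnitude, the second bound up to a constant, and the final $J\sim c\varepsilon^{-d-2}$, $N_r\sim c'\varepsilon^{-d}$ conclusion are all unaffected; in fact for the Lipschitz case you can recover the exact constant by bounding the conditional bias pointwise by $L/n_r$ before squaring). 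If you want your proof to yield the proposition verbatim rather than up to constants, you would need to exploit that $\widehat{\alpha}^p_{\cdot,J}$ is the empirical least-squares minimizer, i.e.\ the orthogonality of the empirical residual to the span of the indicators, which is exactly what the cited theorem encapsulates.
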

\noindent In comparison with the MLMC estimator, it is worth to notice that $\widehat{I}^{LSMC}$ suffers from the curse of dimensionality. This is a well-known weakness of Least Squares Monte Carlo, see e.g.~\cite[Section 8.2]{Gobet}, which is not related to our particular problem. The larger is the dimension of~$G$ (the space where $X$ takes values), the more it requires computational effort. As we will see, for the problem of the calculation of the SCR for ALM management, this is particularly detrimental. The MLMC antithetic estimator~\eqref{Antithetic_MLMC_estim} has the clear advantage to converge with the same computational cost $O(\varepsilon^{-2})$ as an unbiased Monte-Carlo estimator, independently from the dimension of the problem. 

\begin{proof}
From the definition of $\widehat{I}^{LSMC}$~\eqref{def_ILSMC}, we get by using $(a+b)^2\le 2a^2 +2b^2$ and Jensen's inequality: 
\begin{align*}
  \E[(\widehat{I}^{LSMC}-I)^2]&=\E\left[\left(\widehat{I}^{LSMC}-\frac 1 J \sum_{j=1}^J  \phi(X_j) M^P_{X_j} +\frac 1 J \sum_{j=1}^J  \phi(X_j)M^P_{X_j} -I\right)^2\right] \\
  &\le \E\left[ \frac 2 J \sum_{j=1}^J  \phi^2(X_j) \left(\sum_{n=1}^{N_r} \widehat{\gamma}^P_{n,J}  \mathds{1}_{X_j \in B_n} -M^P_{X_j}\right)^2\right]+2\frac{\V(\phi(X)M^P_X)}{J} \\
  &\le 2 \overline{\phi}^2\left( \E \left[\frac 1 J \sum_{j=1}^J   \left(\sum_{n=1}^{N_r} \widehat{\gamma}^P_{n,J}  \mathds{1}_{X_j \in B_n} -M^P_{X_j}\right)^2\right]+\frac{\E[(M^P_X)^2]}{J}\right). 
\end{align*}
Now, Theorem~8.2.4~\cite{Gobet} gives 
  $$  \E\left[ \frac 1J \sum_{j=1}^J  \left( E^p_{X_j} -\sum_{n=1}^{N_r}\widehat{\alpha}^p_{n,J}  \mathds{1}_{X_j \in B_n}  \right)^2\right] \le \frac{\sigma^2 N_r}J +\E\left[\left( E^p_X -\sum_{n=1}^{N_r}\alpha^p_{n}  \mathds{1}_{X \in B_n}  \right)^2\right]. $$
We recall that $\widehat{\gamma}^P_{n,J}=\max_{p=1,\dots,P}\widehat{\alpha}^p_{n,J}$ and observe that for $X_j\in B_n$, we have
  $$(M^P_{X_j}-\max_{p=1,\dots,P}\widehat{\alpha}^p_{n,J})^2\le \max_{p=1,\dots,P}(E^p_{X_j}-\widehat{\alpha}^p_{n,J})^2\le \sum_{p=1}^P(E^p_{X_j}-\widehat{\alpha}^p_{n,J})^2$$
since $|\max_{p=1,\dots,P}a_p -\max_{p=1,\dots,P}b_p |\le \max_{p=1,\dots,P}|a_p-b_p|$ for any $a,b\in\R^P$. This gives the first upper bound.

We now consider the case $G=[0,1]^d$ with the related assumptions. Then, for $X\in B_n$, we have for any $p$
\begin{align*}
  |E^p_X-\alpha^p_n|&=\left|E^p_X-\int_{x\in B_n}E^p_x\Pb(X\in dx|X \in B_n)\right|\\
  &\le \int_{x\in B_n} |E^p_X-E^p_x|\Pb(X\in dx|X \in B_n)\le \frac{L}{n_r}=\frac{L}{N_r^{1/d}},
\end{align*}
since $\|X-x\|_\infty\le \frac 1 {n_r}$ for $X,x\in B_n$. This gives the second bound. To have this upper bound smaller than $C \varepsilon^2$ for some constant $C>0$, one must at least have $N_r\ge c_1\varepsilon^{-d}$ and $J\ge c_2 N_r\varepsilon^{-2}$ for some constants $c_1,c_2>0$, which leads to take $N_r\sim c' \varepsilon^{-d}$ and $J \sim \varepsilon^{-d-2}$.

Last, we observe that the computational cost to find $n$ such that $x\in B_n$ is constant since $i_k=\lfloor n_r x_k\rfloor$ and $n=1+\sum_{k=1}^di_kn_r^{k-1}$. Therefore, computing all the $2N_r$ sums $\sum_{j=1}^J Y^p_j \mathds{1}_{X_j \in B_n}$ and  $\sum_{j=1}^J \mathds{1}_{X_j \in B_n}$ that define $ $ can be achieved with a computational cost of $O(J)$, and the calculation of~\eqref{def_ILSMC} costs similarly $O(J)$. Since $J\sim c \varepsilon^{-d-2}$, we get the claim.
\end{proof}

\subsection{Numerical results on a toy example: the Butterfly Call Option with the Black-Scholes model}



The goal of this section is to illustrate the theoretical results on a simple case where the conditional expectations are known explicitly. Thus, we consider an asset following the Black-Scholes model:
$$S_t=S_0\exp\left(\sigma W_t-\frac{\sigma^2}2t\right), \ t\ge 0,$$
where $W$ is a standard Brownian motion and $\sigma>0$ is the volatility. 
We consider a butterfly option with payoff at time $T>0$:
\begin{align*}
\psi(S_T)=(S_T-K_1)^++(S_T-K_2)^+-2\left(S_T-\frac{K_1+K_2}{2}\right)^+,
\end{align*}
where $0<K_1<K_2$. The price of this butterfly option at time $t\in[0,T]$ is given by
$$\E[\psi(S_T)|S_t]=\mathrm{Call}^{\mathrm{BS}}(T-t,S_t,K_1)+\mathrm{Call}^{\mathrm{BS}}(T-t,S_t,K_2)-2\mathrm{Call}^{\mathrm{BS}}\left(T-t,S_t,\frac{K_1+K_2}2\right)=:\mathrm{Butterfly}(T-t,S_t),$$
with $\mathrm{Call}^{\mathrm{BS}}(t,s,K)=s \mathcal{N}(\frac 1 {\sigma \sqrt{t}}\ln(s/K) + \frac \sigma 2 \sqrt{t} )- K\mathcal{N}(\frac 1 {\sigma \sqrt{t}}\ln(s/K) - \frac \sigma 2 \sqrt{t} )$, where $\mathcal{N}$ is the cumulative distribution function of the standard normal distribution.

Now, we consider multiplicative upward and downward shocks $s^{up/down}$ on the asset value that occur instantaneously at time~$t$. We want to compute the worst loss between these shocks when it is positive. Since the Black-Scholes model is multiplicative with respect to the spot value, these shocks amount to multiply the asset by $1+  s^{up}$ and $1+s^{down}$, with $-1<s^{down}<0<s^{up}$.  Hence, setting $X=S_t$, $Y^1=\left(\psi(S_T)- \psi((1+s^{up})S_T)\right)$ and $Y^2=\left(\psi(S_T)- \psi((1+s^{down})S_T) \right)$  we want to compute the following quantity :
\begin{equation*}
I=\E\left[\max\left\{\E[Y^1|X],\E[Y^2|X],0 \right\}\right].
\end{equation*}
We are thus indeed in our general framework with $P=3$ and $Y^3=0$ and $\phi(x)=1$, and we have
$$M^3_X=\max\left\{\mathrm{Butterfly}(T-t,(1+s^{up})X), \mathrm{Butterfly}(T-t,(1+s^{down})X) ,0\right\}.$$
Since $X$ follows a log-normal distribution, the exact value of $I$ can be thus obtained by numerical integration.

{\bf Numerical values.} In all our numerical experiments, we consider the initial price $S_0=100$, the volatility $\sigma=0.3$, the strikes $K_1=S_0+a$ and $K_2=S_0-a$ with $a=50$, the option maturity $T=2$ years and perform the shocks at $t=1$ year. In our tests, we take $s^{up}=0.2$ and $s^{down}=-0.2$.

Figure~\ref{fig:biais} illustrates the bias $\E[\widehat{M}^3_K-M^3_X]$ in function of $K$ with a log-scale. The expectation is approximated by the Nested Monte-Carlo with $J=10^4$ to get a negligible statistical error. As a comparison, the function $K\mapsto 1/K$ is drawn, and we observe that two curves are quite parallel, which indicates that the bias behaves asymptotically like $c/K$. Also, we have drawn in Figure~\ref{fig:variance} the variance of $\widehat{M}^3_{K_l}-\frac{\widehat{M}^3_{K_{l-1}}+\widehat{M}^{3,\prime}_{K_{l-1}}}2$ in function of~$K_l$, and we observe a behaviour in $K_l^{-3/2}$. Thus, it is reasonable to apply then the Multilevel method with $\eta=1$ to determine the parameters in Equation~\eqref{reglage_multi_Anti}. We have drawn in Figure~\ref{fig:cv_MLMC_Anti} the RMSE in function of the computational cost (defined by $\sum_{l=0}^LJ_lK_l$) for different values of $\eta\le 1$. We observe a behaviour in $\varepsilon^{-2}$, which is in line with Theorem~\ref{thm_MLMCA}. The RMSE is calculated empirically, and we have runned many times the MLMC estimator to do so. 
\begin{figure}[h!]
 \begin{minipage}[t]{0.48\textwidth}
   \includegraphics[width=\textwidth]{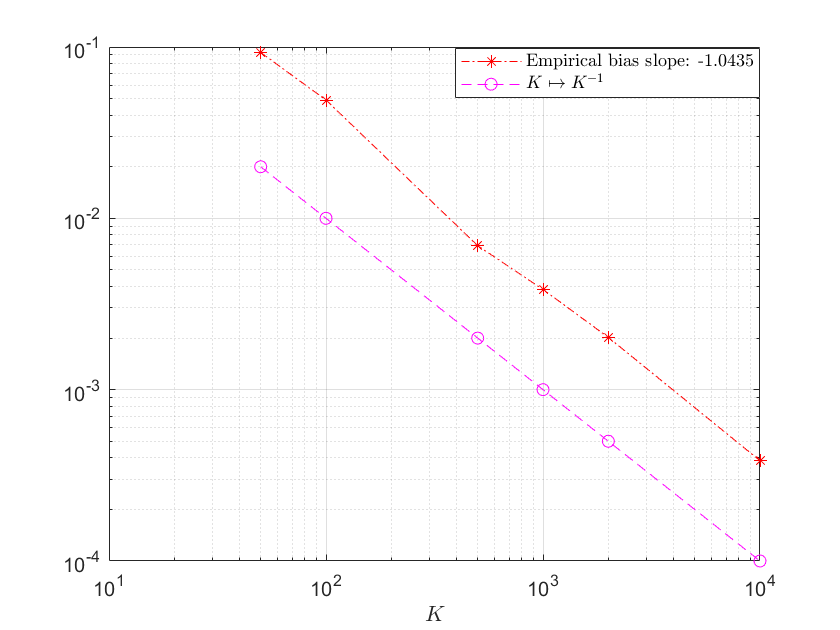}
\caption{Bias behaviour of the nested estimator\label{fig:biais}} 
 \end{minipage}
 \
 \begin{minipage}[t]{0.48\textwidth}
    \includegraphics[width=\textwidth]{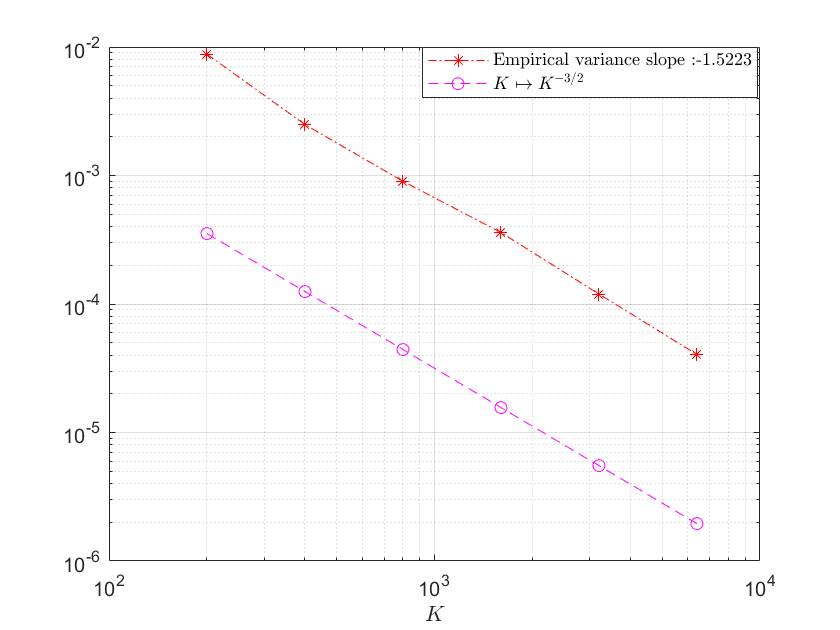}
    \caption{ $\V \left(\widehat{M}^3_{K_l}-\frac{\widehat{M}^3_{K_{l-1}}+\widehat{M}^{3,\prime}_{K_{l-1}}}2\right)$ in function of~$K_l$\label{fig:variance}}
  \end{minipage}
\end{figure}

\begin{figure}[h!]
 \begin{minipage}[b]{0.48\textwidth}
    \includegraphics[width=\textwidth]{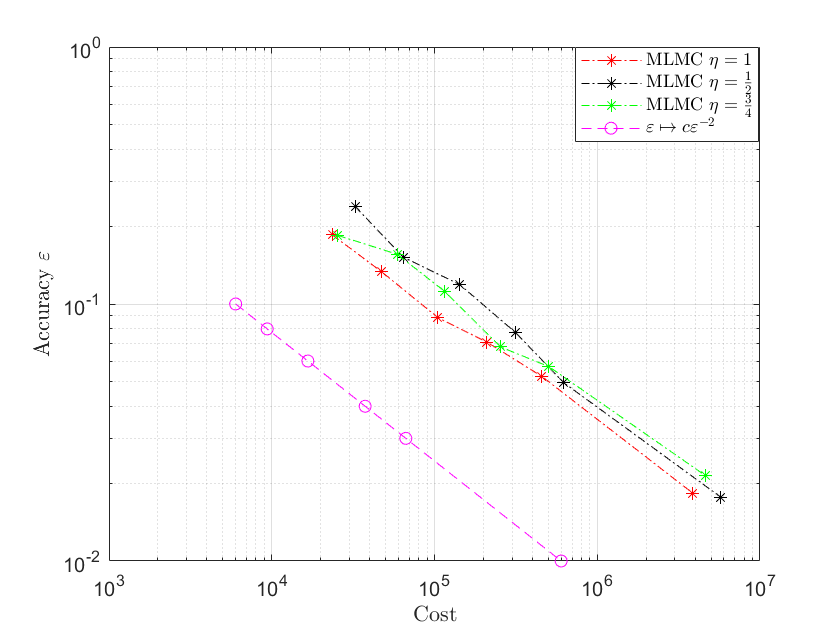}
    \caption{Empirical RMSE of the MLMC antithetic estimator as a function of the cost (log-scale) for different values of~$\eta$. \label{fig:cv_MLMC_Anti}}
 \end{minipage}
 \ 
  \begin{minipage}[b]{0.48\textwidth}
    \includegraphics[width=\textwidth]{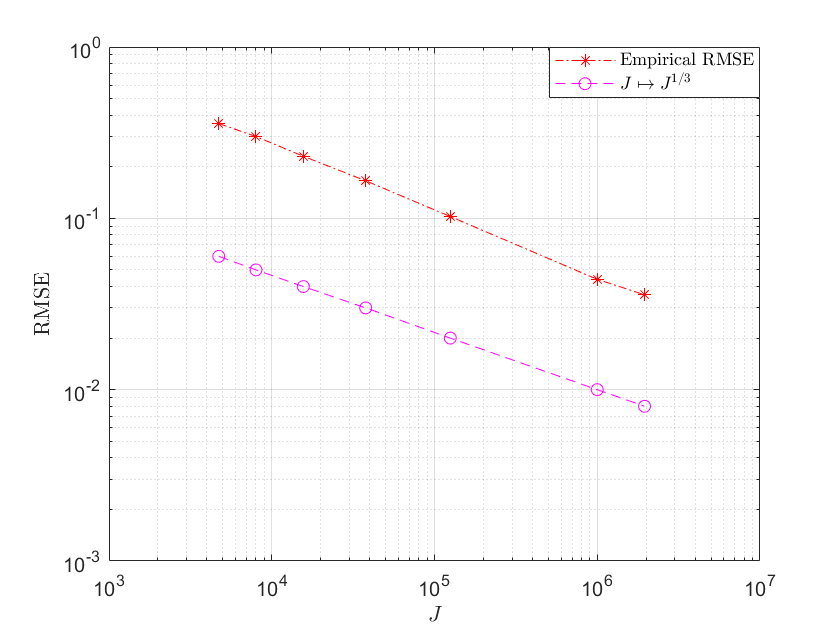}
    \caption{Empirical RMSE of the LSMC  estimator  as a function of the number of samples $J$ (log-scale) \label{fig:cv_LSMC}}
    \end{minipage}
\end{figure}
We now present the implementation of LSMC estimator. We note that $X'=\frac 1 \sigma \log(X)+\sigma/2$ is a standard normal distribution. It therefore takes with a probability greater than 99\% its values in $[-3,3]$. We notice that $\E[Y^p|X]=\E[Y^p|X']$ and take the following regressors
\begin{equation*}
\mathds{1}_{X'\in[-3+6j/N_r,-3+6(j+1)/N_r]},\quad j=0,\ldots N_r-1.
\end{equation*}
Up to a translation, we are thus in the framework of Proposition~\ref{prop_MSE_LSMC}. In Figure~\ref{fig:cv_LSMC}, we have plotted the RMSE as a function of the number of samples~$J$, which is also the computational cost of the method. The behaviour is in line with the theoretical result given by Proposition~\ref{prop_MSE_LSMC} with $d=1$.

We already see on this one-dimensional example that the MLMC estimator has some benefit in terms of convergence with respect to the LSMC estimator. As we will see in the next section  for the $SCR$ estimation, this benefit is much more  important when $X$ takes values in a high-dimensional space.

\section{Calculation of the SCR with the Standard Formula in an ALM model}\label{Sec_ALM}

In this section, we want to illustrate and compare the MLMC and LSMC methods on a more realistic example for the application to insurance. Namely, we consider the case of Asset Liability Management (ALM) for life insurance contracts. We are interested in computing the SCR with the standard formula after $t$ years. This example is of practical interest and the conditional expectations that are at stake are typically high-dimensional. In fact, the process that determines the ALM strategy is really path-dependent and involves book values, market values, crediting rates, etc. To illustrate this, we will use the recent ALM model that we have developed in~\cite{AlChIn}. In this section, we mainly focus on the interest rate module of the SCR with the standard formula, but discuss also the calculation of the SCR on the market risk in the last subsection. 

\subsection{The ALM model in a nutshell}\label{subsec_ALM}

In this section, we briefly present the ALM model developed in~\cite{AlChIn} and refer to this paper for the full details. We consider an insurance company that handles a life insurance, namely a General Account guaranteed with participation contracts. We consider a runoff portfolio with an initial Mathematical Reserve $MR_0$ corresponding to policyholders' deposit. The Capitalization Reserve (a buffer for capital gains on bonds imposed by the French legislation) and the Profit Sharing Reserve (a buffer for capital gains on stocks to smooth the crediting rate) are empty at time~$0$, i.e. $CR_0=0$ and $PSR_0=0$. At time~$0$, the insurance company invests $MR_0$ in two asset classes, stocks and riskless bonds, with respective weights $w^S_0\in[0,1]$ and $w^b_0=1-w^S_0$.  Thus, the initial Market Value and Book Value in stock (resp. in bonds) is given by
$$MV^S_0=BV^S_0=w^S_0MR_0 \text{ (resp. } MV^b_0=BV^b_0=w^b_0MR_0 ). $$
During all the ALM strategy, the insurance company invests in an equity asset  $(S_t)_{t\ge 0}$ that may be a stock index or more generally an average of stocks with weights corresponding to the investment of the insurance company. It therefore has $\phi^S_0=MV^S_0/\phi^S_0$ equity assets at time~$0$. The insurance company also invests in bonds, and we assume that this investment is made with an equally weighted portfolio of bonds with maturities $1,\dots,n$. We introduce some notation to precise this: we denote $P(t,t+i)$ the price of a Zero-Coupon bond at time~$t$ with maturity $t+i$, $B(t,n,c)=\sum_{i=1}^n c P(t,t+i)+P(t,t+n)$ the price at time~$t$ of a bond with constant coupon $c\in \R$, unit nominal value and maturity~$n$ and for ${\bf c}=(c^i)_{i\in \{1,\dots,n\}} \in \R^n$ we denote by
$$ \bar{B}(t,n,{\bf c})=\frac 1 n \sum_{i=1}^n B(t,i,c^i)$$
the value of an equally weighted portfolio of bonds with coupons~${\bf c}$. During all the ALM strategy, we assume that bonds are bought at par with the swap rate $c_{swap}(t,n)=\frac{1-P(t,t+n)}{\sum_{i=1}^n P(t,t+i)}$. We set ${\bf c}_0=(c_{swap}(0,i))_{i\in\{1,\dots,n\}}$ and have $\bar{B}(0,n,{\bf c}_0)=1$. At time~$0$, the insurance company has then $\phi^b_0=MV^b_0=MV^b_0/\bar{B}(0,n,{\bf c}_0)$ assets  $\bar{B}(0,n,{\bf c}_0)$.

We assume that the portfolio is handled up to time $T\in \N^*$ and that it is static on each period $(t-1,t)$, $t\in \{1,\dots,T\}$. At each time~$t$, it is reallocated in such a way to have at the end of the reallocation
$$\phi^S_t=\frac{w^S_t MV_t}{S_t}, \ \phi^b_t=\frac{w^b_t MV_t}{\bar{B}(t,n, {\bf c }_t)} $$
quantities of equity assets and bonds, where $MV_t$ denotes the market value of the portfolio at time~$t$ and $w^S_t=1-w^b_t\in[0,1]$ is the target weight decided for ALM strategy. The coupons ${\bf c}_t \in \R^n$ are determined by the reallocation procedure that we describe now and takes into account the specificities of life insurance contracts. We decompose this reallocation in five steps:
\begin{enumerate}
\item {\em Calculation of the cash inflows and book value movements related to the bonds.} Since the portfolio composition is unchanged on $(t-1,t)$, the insurer receives $\frac{\phi^b_{t-1}}n \left(1 + \sum_{i=1}^nc^i_{t-1} \right)$ corresponding to the nominal value of the expiring bonds and the coupons. The value of the matured bonds~$\frac{\phi^b_{t-1}}n$ is removed from the book value of bonds $BV^b_t$.
\item {\em Payment of the policyholders that exit their contract.} The proportion of policyholders that exit on $[t-1,t]$ is denoted by~$p^e_{t-1}$. It is modelled as the sum of a deterministic part related to the relevant life table and of a dynamic part modelling surrenders $DSR(\Delta_{t-1})=DSR_{max}\mathds{1}_{\Delta_t\le \alpha}+DSR_{max}\frac{\beta-\Delta_t}{\beta-\alpha} \mathds{1}_{\alpha< \Delta_t<\beta}$, where $\Delta_{t-1}$ is the difference between the crediting rate to policyholders $r_{ph}(t-1)$ and a competitor rate $r^{comp}_{t-1}$. We assume that policyholders exit uniformly on  $[t-1,t]$, and the amount to pay is thus $p^e_{t-1}MR_{t-1}(1+r_G/2)$, where $r_G$ is the minimum guaranteed rate. This means that they are remunerated with this rate on the last period.
\item {\em Reallocation step.} At this step, the market value of the portfolio is given by
  $$MV_t=G_t+\phi^S_{t-1}S_t + \frac{\phi^b_{t-1}}{n}\sum_{i=1}^{n-1}B(t,i,c^{i+1}_{t-1}),  $$
  where $G_t$ is the liquidity gap that corresponds to the difference between the cash inflows and outflows of the two first steps. The second term $\phi^S_{t-1}S_t$ represents the market value of equity assets, and the last term the market value of bond assets. Note that a bond at time~$t-1$ with maturity~$i+1$ and coupon $c^{i+1}_{t-1}$ becomes at time~$t$ a bond with maturity~$i$ with the same coupon.

  The portfolio is reallocated with the prescribed weights~$w^S_t\in [0,1]$ and $w^b_t=1-w^S_t$ given by the ALM strategy. The amount of equity assets to hold is thus given by $\phi^S_t=w^S_tMV_t/S_t$. If this quantity is greater than $\phi^S_{t-1}$, there is a purchase of $\phi^S_t-\phi^S_{t-1}$ equity assets which increases the book value $BV^S_t$ by $(\phi^S_t-\phi^S_{t-1})S_t$. If this quantity is lower than $\phi^S_{t-1}$, there is a sell of $\phi^S_{t-1}-\phi^S_{t}$ equity assets which decreases the book value $BV^S_t$ by the factor $\phi^S_t/\phi^S_{t-1}$. This generates capital gain or loss on stocks that is registered, since capital gain has to be redistributed to policyholders with a participation rate $\pi_{pr}\in [0,1]$.

  The reallocation in bonds follows the same principles but is more involved. At the end of this step, the portfolio in bonds is made with $\phi^b_t$ combinations of bonds~$\bar{B}(t,n,{\bf c}_t)$. Since the bonds are bought at par, there is a precise relation between ${\bf c}_t$, ${\bf c}_{t-1}$ and the swap rates at time~$t$. According to the French legislation rules, the capital gain or loss on bonds is stored in the Capitalization Reserve and is separated from the ALM portfolio. Details can be found in~\cite{AlChIn}.

\item {\em Determination of the crediting rate.} This step determines the policyholders' earning rate $r_{ph}(t)$ on the period $(t-1,t)$. Due to regulatory constraint, it has to be greater than the minimum guaranteed rate $r_G$ and the amount distributed to policyholders has to be greater than the proportion $\pi_{pr}$ of the gains (participation rate). Besides the insurance company compares $r_{ph}(t)$ with a competitor rate $r^{comp}_t$ (typically the market short rate) and tries at best to have $r_{ph}(t)\ge r^{comp}_t$ to avoid dynamic surrenders. We call ``target rate'', the maximum rate given by these three constraints. 

  The amount to distribute is typically made with the coupons, the capital gain or loss on stocks and possibly dividends. To smooth these gains along the years, the insurance company uses a Profit Sharing Reserve. In addition, the insurance company may also want to realize a part of latent gain or loss on stocks. In the model developed in~\cite{AlChIn}, the amount to distribute depends on all these quantities. We distinguish four cases (from the best to the worst) to determine $r_{ph}(t)$.
  \begin{enumerate}[label=(\Alph*)]
  \item The target rate can be distributed without using latent gain or by realizing all the latent loss on stocks.
  \item The target rate can be distributed by using latent gain or without realizing all the latent loss on stocks. The proportion of gain or loss is determined accordingly.
  \item The target rate cannot be reached with the available amount, but the minimum guaranteed rate can be distributed. The insurance company then uses all the latent gain in order to serve the best possible rate.
  \item The minimum guaranteed rate cannot be reached with the available amount. Then, the insurance company clears out the Profit Sharing Reserve and credit the policyholders with the lowest rate above $r_G$ that also satisfies participation rate constraints. 
  \end{enumerate}
  Once $r_{ph}(t)$ is determined, the Mathematical Reserve of the remaining policyholders is updated accordingly: $MR_t=MR_{t-1}(1-p^e_{t-1})(1+r_{ph}(t))$. The Profit Sharing Reserve and the book value of stocks are also modified according to the case. The shareholder's margin  can be calculated as well as the profit and loss $P\&L_t$ generated on the period $(t-1,t)$, which is defined as the sum of the shareholder's margin and the interest generated by the Capitalization Reserve.  Again, all the details can be found in~\cite{AlChIn}.

\item {\em Externalization of the Capitalization Reserve and of Shareholders' margin from the accounting.} This last step is a technical  accounting operation that slightly change the quantities of assets and the book values, while keeping unchanged the target weights $w^S_t$ and $w^b_t$.
\end{enumerate}
\noindent The last step at the final time~$T$ follows the same lines: instead of being reallocated, the portfolio is cleared and policyholders get back the remaining Mathematical Reserve $MR_T$.

\subsection{The Solvency Capital Requirement with the standard formula}\label{subsec_STD}

We now present the main lines of the SCR calculation with the standard formula as indicated by the EIOPA~\cite{EIOPA,EIOPA2}. Let us denote by $(\F_t,t\ge 0)$ the filtration representing the market information at time~$t\ge 0$ and $\Q$ the pricing measure. We consider a short-rate model $(r_t,t\ge 0)$ for interest rates and define at time~$t\in \{0,\dots, T-1\}$ the Basic Own Funds by
$$BOF_t=\E^\Q\left[\sum_{u=t+1}^T e^{-\int_t^u r_s ds}P\&L_u\bigg|\F_t\right],$$
i.e. the expected value of the discounted future profits and losses.
The principle of the standard formula is to apply shocks on each asset class (equity, interest rate, etc.) and evaluate the variation of Basic Own Funds. Then, the SCR on market risk is obtained by using a given formula that aggregates all risk modules. In this paper, we focus on the interest rate module, where upward and downward shocks are prescribed by the regulator. The methodology to apply these shocks is described in Section 2.5 of~\cite{AlChIn}. We have used in our simulations the shocks specified in~\cite{EIOPA}. At time~$t$, the SCR value of the interest module is then defined by
$$SCR^{int}_t=\max\{BOF_t-BOF^{\textup{upward shock}}_t,BOF_t-BOF^{\textup{downward shock}}_t ,0\},$$
where shocks are applied at time~$t$ on the interest-rate curve. We also set
$$SCR^{up}_t=\max\{BOF_t-BOF^{\textup{upward shock}}_t,0\} \text{ and } SCR^{down}_t=\max\{BOF_t-BOF^{\textup{downward shock}}_t,0\}, $$
so that $SCR^{int}_t=\max(SCR^{down}_t,SCR^{up}_t)$.
At time $t=0$, $SCR^{int}_0$ is a number that can be calculated by Monte-Carlo. This has been investigated in~\cite{AlChIn}. However, for the ALM strategy, it may be useful to have quantitative insights on the evolution of the SCR along the time to asses the cost of capital. Thus, in this paper we are interested with the valuation of
\begin{equation}\label{I_SCR}
  I=\E^\Pb[\max\{BOF_t-BOF^{\textup{upward shock}}_t,BOF_t-BOF^{\textup{downward shock}}_t ,0\}],
\end{equation}
the average value under the historical (or real) probability of the $SCR$ at time~$t$. If we denote by $\frac{d\Pb}{d\Q}\bigg|_{\F_t}$ the change of probability, we have
\begin{equation}\label{SCR_int_histo}
  I=\E^\Q\left[\frac{d\Pb}{d\Q}\bigg|_{\F_t} \max\{BOF_t-BOF^{\textup{upward shock}}_t,BOF_t-BOF^{\textup{downward shock}}_t ,0\}\right], 
\end{equation}
and we are precisely in the framework of Section~\ref{Sec_Math} if $X$ denotes a random variable that represents all the market information up to time~$t$ (i.e. $\sigma(X)=\F_t$).
The equity module of the SCR is similarly defined by
$$SCR^{eq}_t=\max\{BOF_t-BOF^{\textup{equity shock}}_t,0\},$$
where the equity shock amounts to a strong decrease of $S$ immediately after~$t$. Usually, the maximum with zero is useless since the shock is always negative. 
Last the standard formula that defines the SCR on market risk as follows (see Articles 164 and 165 of~\cite{DelReg}):
\begin{align}\label{def_SCR_mkt}
SCR^{mkt}_t=\sqrt{(SCR^{eq}_t)^2_t+(SCR^{int})^2_t+2\varepsilon SCR^{eq}_tSCR^{int}_t}
\end{align}
where $\varepsilon=0$ if the interest-rate exposure is due to the upward-shock on interest rates and $\varepsilon=\frac{1}{2}$ if it is due to the downward shock of the interest rate module. Thus, the expected value of the SCR is given by:
$$\E^{\Pb}[SCR^{mkt}_t]=\E^\Q\left[\frac{d\Pb}{d\Q}\bigg|_{\F_t}  \sqrt{(SCR^{eq}_t)^2_t+(SCR^{int})^2_t+2\varepsilon SCR^{eq}_tSCR^{int}_t}\right] .$$

\subsection{The stock and short-rate models}\label{subsec_MODEL}

We consider $(W_t,Z_t)_{t\ge 0}$ a standard two-dimensional Brownian motion under~$\Q$. Following~\cite{AlChIn}, we assume that the equity assets follows a Black-Scholes model and that the short interest rate follows a Vasicek++ (or Hull and White) model:
\begin{align}\label{model}
  \frac{dS_t}{S_t}&=r_tdt +\sigma_S dW_t\\
  r_t&=x_t+\varphi(t),\text{ with } dx_t=k(\theta-x_t)dt+\sigma_r (\gamma dW_t+\sqrt{1-\gamma^2}dZ_t),
\end{align}
where $\gamma \in [-1,1]$ tunes the dependence between equity and interest rates. We assume $k,\theta,\sigma_S,\sigma_r>0$. As explained in~\cite{AlChIn}, the shift function~$\varphi:\R_+\to \R$ is particularly convenient to implement the shocks prescribed by the EIOPA. Mainly, shocks amounts to modify the shift, leaving the dynamics of~$x$ unchanged, which makes easy to calculate the ALM strategies in both normal and shocked cases on each sample.

A new feature with respect to~\cite{AlChIn} is that we now also consider the dynamics under the real-world probability~$\Pb$. We assume here for simplicity the following basic change of probability
\begin{equation}\label{Chg_prob} \frac{d\Pb}{d\Q}\bigg|_{\F_t} =\exp \left( \lambda^W W_t+ \lambda^Z  Z_t -\frac12( (\lambda^W)^2+(\lambda^Z)^2 +2\gamma\lambda^W\lambda^Z
  )t \right)=:L_t,
\end{equation}
with $\lambda^W,\lambda^Z \in \R$. By the Cameron-Martin theorem, $dW^{\Pb}_t=dW_t-\lambda^W_tdt$ and $dZ^{\Pb}_t=dZ_t-\lambda^Zdt$ are independent Brownian motions under~$\Pb$. We then have the following dynamics under~$\Pb$:
\begin{align*}
  \frac{dS_t}{S_t}&=(r_t+\lambda^W \sigma_S)dt +\sigma_S dW^\Pb_t\\
  r_t&=x_t+\varphi(t),\text{ with } dx_t=k\left(\theta+\sigma_r \frac{\gamma \lambda^W+\sqrt{1-\gamma^2}\lambda^Z }{k}-x_t\right)dt+\sigma_r (\gamma dW^\Pb_t+\sqrt{1-\gamma^2}dZ^\Pb_t),
\end{align*}
To run this asset model with the ALM model described in Subsection~\ref{subsec_ALM}, we have to be able to sample $S_t$, $r_t$ and the change of probability $L_t$ at each time $t\in \N$. It is possible to do it exactly by using the following recurrence formula
\begin{align*}
  S_t&=S_{t-1}\exp \left(\int_{t-1}^tx_udu + \int_{t-1}^t\varphi(u)du  +\sigma_S(W_t-W_{t-1})-\frac{\sigma_S^2}2 \right),\\
  x_t&=x_{t-1}e^{-k}+\theta(1-e^{-k})+\sigma_r \int_{t-1}^te^{-k(t-u)}(\gamma dW_u+\sqrt{1-\gamma^2}dZ_u),\\
  L_t&=L_{t-1}\exp \left( \lambda^W (W_t-W_{t-1})+ \lambda^Z  (Z_t-Z_{t-1}) -\frac12( (\lambda^W)^2+(\lambda^Z)^2 +2\gamma\lambda^W\lambda^Z) \right),
\end{align*}
and 
$$\int_{t-1}^t x_udu =\frac1k (x_{t-1}-x_t)+ \theta +\frac{\sigma_r}k[\gamma (W_t-W_{t-1})+\sqrt{1-\gamma^2}(Z_t-Z_{t-1}) ].$$
The law of $(W_t-W_{t-1},Z_t-Z_{t-1}, \int_{t-1}^te^{-k(t-u)}(\gamma dW_u+\sqrt{1-\gamma^2}dZ_u))$ is a centered Normal distribution with covariance
$$\begin{bmatrix}1 & 0  & \gamma \frac{1-e^{-k}}{k } \\
0 & 1  &  \sqrt{1-\gamma^2} \frac{1-e^{-k}}{k }  \\
\gamma \frac{1-e^{-k}}{k } & \sqrt{1-\gamma^2} \frac{1-e^{-k}}{k }  & \frac{1-e^{-2k}}{2k }\\
\end{bmatrix}.$$
This is the same law as $\left(G_1,G_2,\gamma \frac{1-e^{-k}}{k } G_1+\sqrt{1-\gamma^2} \frac{1-e^{-k}}{k }G_2+\sqrt{\frac{1-e^{-2k}}{2k }-\left(\frac{1-e^{-k}}{k }\right)^2} G_3\right)$, where $G_1,G_2,G_3$ are independent standard Normal variables.
Once this triplet is sampled exactly, we can calculate easily~$(S_t,x_t,L_t)$ using the formulas above.

\subsection{Numerical experiments I: comparison between methods to calculate~$\E[SCR^{int}_t]$
}\label{subsec_NUMI}

We now present our numerical results for the computation of~$I$ defined by~\eqref{I_SCR}. We use the following parameters for the ALM model and for the asset model. They are summarized in Tables~\ref{Model_param} and~\ref{LM_param}. Unless specified, we also consider $\Pb=\Q$, i.e. that the real and risk-neutral probability are the same. We will discuss however later on the impact of this change of probability for the SCR. 
\begin{table}[H]
\centering  
\begin{tabular}{|l|r|}
  \hline
  Stock model & Short-rate model  \\
  \hline
  $S_0=1$ & $r_0=\theta=0.02$  \\
  $\sigma_S=0.1$ & $\sigma_r=0.01$  \\
   $\gamma=0$ & $k=0.2$  \\
  \hline
\end{tabular} 
\caption{Market-model parameters}\label{Model_param}
\end{table}

\begin{table}[H]
\centering  
\begin{tabular}{|l|l|}
  \hline
  Management Parameters & Liability Parameters \\
  \hline
  Target allocation in stock $w^s_t=0.05$ & Dynamic surrenders triggering thresholds $\beta=-0.01$ and $\alpha=-0.05$\\
  Target allocation in bond $w^b_t=0.95$ & Maximum lapse dynamic surrender rate $DSR_{max}=0.3$   \\
  Participation rate $\pi_{pr}=0.9$ &  Deterministic constant exit rate $\underline{p}=0.05$ \\
  Minimum guaranteed rate $r^G=0.015$ & Time horizon: $T=30$ years\\
  Competitor rate $r^{comp}_t=r_t$& \\
  Smoothing coefficient of the PSR: $\bar{\rho}=0.5$ &  \\
  Bond portfolio maximal maturity $n=20$ &  \\ 
  Projection Horizon $T=30$ &  \\ 
  \hline
\end{tabular} 
\caption{Liability and management parameters}\label{LM_param}
\end{table}
\begin{table}[H]
\centering  
\end{table}

The implementation of the MLMC antithetic estimator is easily made by using~\eqref{reglage_multi_Anti}. Instead, the implementation of the LSMC raises some issues. The main one is how to choose the regressors. In fact, the ALM model presented in Subsection~\ref{subsec_ALM} is truly  path-dependent, and one needs to know $(r_{t'},S_{t'})_{t'\in \{1,\dots,t\}}$ to determine the book values, the different reserves and the Bond portfolio at time~$t$.  Thus, $SCR^{int}_t$ depends on all the past before~$t$. For $t=1$ the dimension of the regression space is equal to~$2$ and the choice of the regressors $r_1$ and $S_1$ is obvious. When $t$ gets larger, this is no longer the case and in view of the theoretical complexity result of Proposition~\ref{prop_MSE_LSMC} one cannot afford to use all the $2t$ regressors. It is then important to select few regressors. We explain now the procedure that we have used.

\subsubsection{Selection of the regressors for the LSMC estimator}\label{parag_selec}
  In Table~\ref{risk_factor_description} we have listed $12$~relevant risk-factors for the insurance company. We will select the most relevant ones for the SCR interest rate module by using a forward selection procedure. 
 \begin{table}[h!]
 \centering
\begin{tabular}{ll}
\hline
\multicolumn{1}{|l|}{Attribute}  & \multicolumn{1}{l|}{Risk-factor description} \\ \hline
\multicolumn{1}{|l|}{$X_t^1=S_t$} & \multicolumn{1}{l|}{Equity asset value}                   \\ \hline
\multicolumn{1}{|l|}{$X_t^2=r_t$} & \multicolumn{1}{l|}{Short rate}                    \\ \hline
\multicolumn{1}{|l|}{$X_t^3=\phi^S_t$} & \multicolumn{1}{l|}{Position in Stock}       \\ \hline
\multicolumn{1}{|l|}{$X_t^4=\phi^b_t$} & \multicolumn{1}{l|}{Position in bonds}       \\ \hline
\multicolumn{1}{|l|}{$X_t^5=BV^b_t$} & \multicolumn{1}{l|}{Book value of bonds}       \\ \hline
\multicolumn{1}{|l|}{$X_t^6=BV^S_t$} & \multicolumn{1}{l|}{Book value of equity assets}       \\ \hline
\multicolumn{1}{|l|}{$X_t^7=MR_t$} & \multicolumn{1}{l|}{Mathematical Reserve}       \\ \hline
\multicolumn{1}{|l|}{$X_t^8=PSR_t$} & \multicolumn{1}{l|}{Profit sharing reserve}       \\ \hline
\multicolumn{1}{|l|}{$X_t^9=CR_t$} & \multicolumn{1}{l|}{Capitalization Reserve}       \\ \hline
\multicolumn{1}{|l|}{$X_t^{10}=MV_t$} & \multicolumn{1}{l|}{Portfolio market value }       \\ \hline
\multicolumn{1}{|l|}{$X_t^{11}=\phi^b_t\bar{B}(t,n,{\bf c_t})$} & \multicolumn{1}{l|}{Market value of bonds }       \\ \hline
\multicolumn{1}{|l|}{$X_t^{12}=\phi^S_tS_t$} & \multicolumn{1}{l|}{Market value of equity assets}       \\ \hline
\end{tabular}
\captionof{table}{Non exhaustive list of risk factors}\label{risk_factor_description}
 \end{table}
 
 To do so, we sample $J_v$ scenarios up to time~$t$ of the ALM model. This produces in particular $J_v$ samples of $(X_t^{1,j},\ldots,X_t^{12,j})_{j=1,\ldots,J_v}$. Then, we approximate for each scenario the value of the interest rate module of the SCR, $SCR_t^{int,j}$, by using a Nested Monte-Carlo with~$K$ of secondary scenarios. We note $\widehat{SCR}^{Nested,j}_t$ these approximations (we drop for readability the superscript ``$int$'' in Paragraph~\ref{parag_selec} since we only consider the interest rate module).  In our numerical application, we have taken $J_v=2000$ validation scenarios and $K=10^4$ inner scenarios. Let $\widehat{SCR}:\R^{12}\to \R$ be a function approximating the SCR from the values of~$X$.  We now consider the empirical RMSE, i.e.
\begin{align*}
\sqrt{\frac{1}{J_v}\sum_{i=1}^{J_v}(\widehat{SCR}^{Nested,j}_t -\widehat{SCR}(X^j_t))^2}
\end{align*} 
as a criterion to assess the accuracy of the regression function  $\widehat{SCR}$.

We start by selecting the first variable. Up to a linear rescaling of the sample we may assume without loss of generality that all the variables are in $[0,1]$. We consider the 12~possible regressor functions for $l\in \{1,\dots,12\}$,
$$\sum_{i=0}^{n_r-1} \widehat{\alpha}^l_i \mathds{1}_{X^l_t\in [i/n_r,(i+1)/n_r)}, \text{ with } \widehat{\alpha}^l_i=\frac{\sum_{j=1}^{J_v}\widehat{SCR}^{Nested,j}_t\mathds{1}_{X^{l,j}_t\in [i/n_r,(i+1)/n_r)}   }{\sum_{j=1}^{J_v} \mathds{1}_{X^{l,j}_t\in [i/n_r,(i+1)/n_r)}},$$
      and select $l_1^*\in \{1,\dots,12\}$ that achieves the lowest RMSE. Once $l_1$ is selected, we consider the following 11~regressor functions for $l\in \{1,\dots,12\} \setminus \{l_1 \}$:     
\begin{align*}
 & \sum_{i_1,i_2=0}^{n_r-1} \widehat{\alpha}^{l_1,l}_{i_1,i_2} \mathds{1}_{X^{l_1}_t\in [i_1/n_r,(i_1+1)/n_r)}\mathds{1}_{X^{l}_t\in [i_2/n_r,(i_2+1)/n_r)} , \\
      &\text{ with } \widehat{\alpha}^{l_1,l}_{i_1,i_2}=\frac{\sum_{j=1}^{J_v}\widehat{SCR}^{Nested,j}_t\mathds{1}_{X^{l_1,j}_t\in [i_1/n_r,(i_1+1)/n_r)} \mathds{1}_{X^{l,j}_t\in [i_2/n_r,(i_2+1)/n_r)}  }{\sum_{j=1}^{J_v} \mathds{1}_{X^{l_1,j}_t\in [i_1/n_r,(i_1+1)/n_r)} \mathds{1}_{X^{l,j}_t\in [i_2/n_r,(i_2+1)/n_r)}}.
\end{align*}
We then select the regressor $l_2\in \{1,\dots,12\} \setminus \{l_1 \}$ that gives the smallest RMSE. We then proceed similarly to select the next variables. We have run this selection for $t=10$ with $n_r=5$. Table~\ref{BA} shows the result of this algorithm and indicate the Book values of bonds as the more significant variable to approximate the SCR module on interest rates. We notice that the RMSE is significantly reduced by using the second variable. In contrast, the third variable moderately improves the criterion. Since the number of variables is also a limitation then for the use of the LSMC estimator, we do not go further in the selection procedure. 
Figure~\ref{fig:regression} illustrates the approximation of the values of~$\widehat{SCR}^{Nested,j}_t$ by the regression function with the two first regressors.  
\begin{figure}[h!]
  \includegraphics[width=0.8\textwidth]{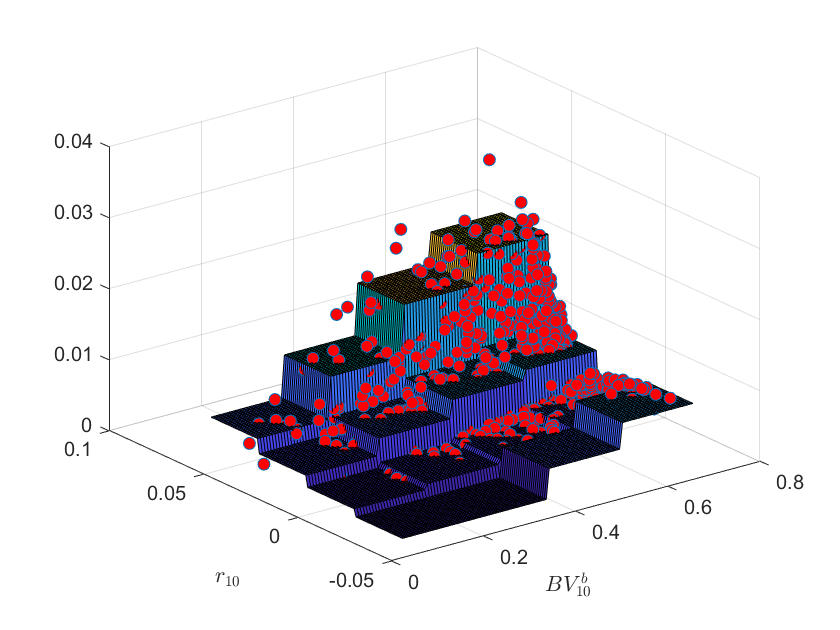}
  \caption{Plot of the points $\widehat{SCR}^{Nested,j}_t$ and of the estimated regression function with the two first selected regressors.}\label{fig:regression}
\end{figure}
\begin{table}[h]
\centering
\begin{tabular}{|l|l|l|}
\hline
 & Best attribute &   $RMSE$  \\\hline
First variable & $BV^b_t$ ($l_1=5$)  & 0.8739 \\\hline
Second variable & $r_t$ ($l_2=2$) &0.5292       \\\hline
Third variable & $S_t$ ($l_3=1$) &0.5084      \\\hline      
\end{tabular}
  \captionof{table}{Result of the Forward selection procedure for $SCR^{int}_t$ with $t=10$.}\label{BA}  
\end{table}


\subsubsection{Comparison between MLMC and LSMC estimators}\label{Comp_MLMC_LSMC}
We focus on the calculation of $\E[SCR^{int}_t]$ with $t=10$ years.  We now compare and test numerically the MLMC antithetic estimator $\widehat{I}^{MLMC}_A$ defined by~\eqref{Antithetic_MLMC_estim} with the LSMC estimator~$\widehat{I}^{LSMC}$ defined by~\eqref{def_ILSMC}, using the local cube basis and the regressors selected by the procedure described in Paragraph~\ref{parag_selec}.

In Figure~\ref{RMSE_different_estimator_10Y}, we have drawn the Root Mean Square Error of the estimator $\widehat{I}^{MLMC}_A$ and of the estimators $\widehat{I}^{LSMC}$ obtained by using the first, the two first and the three first selected regressors. In order to derive the RMSE of the different estimators, as no closed formulas is available in this framework, we rely on a full nested Monte-Carlo procedure based on a fixed simulation budget of $\Gamma=10^8$ sample paths to approximate the true value of~$I$. The allocation between primary and secondary scenarios correspond  to $M\approx\Gamma^{\frac{2}{3}}$ primary samples and $K\approx\Gamma^{\frac{1}{3}}$ inner scenarios, as prescribed by Theorem~\ref{thm_nested} for $\eta=1$. To compute the RMSE of the different estimators, we produce $N_{batch}=10$ independent simulations $(\widehat{I}^*_j)_{j=1,\ldots,N_{batch}}$  and indicate the empirical RMSE $\sqrt{\frac{1}{N_{batch}}\sum_{j=1}^{N_{batch}}\vert \widehat{I}^*_j-I\vert^2 }$. We plot the empirical RMSE's of the different estimators as a function $J$ (with $J:=\sum_{l=0}^LJ_lK_l$ for the MLMC estimator). This represents the number of samples, as well as the computational cost (in log-scale) that is in $O(J)$ for both estimators.

\begin{figure}[h!]\centering
    \includegraphics[width=0.7\textwidth]{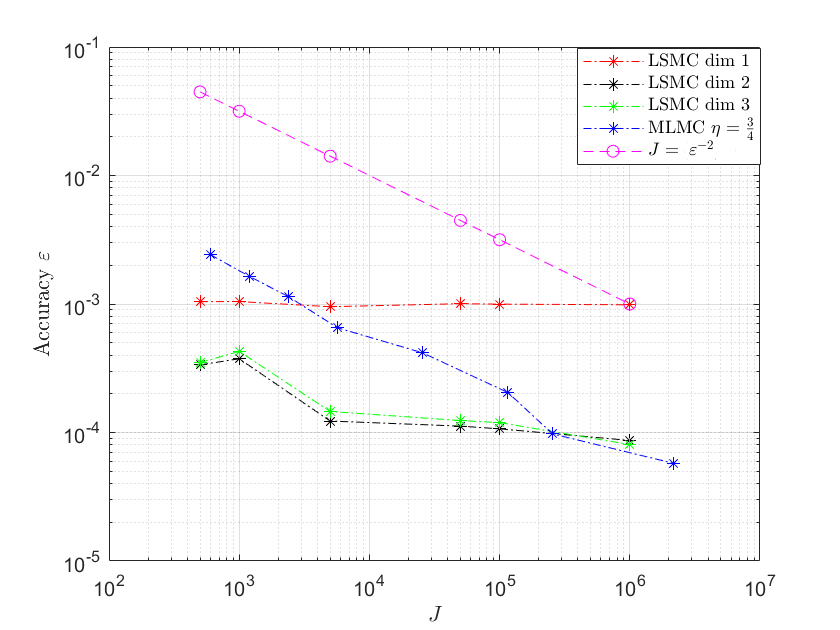}
    \caption{Empirical RMSE's of the LSMC and MLMC estimators in function of the computational effort $J$ (with $J:=\sum_{l=0}^LJ_lK_l$ for the MLMC estimator). The computational time needed for the forward selection used by the LSMC estimator is not taken into account in this plot.  }
    \label{RMSE_different_estimator_10Y}  
\end{figure}

Concerning the LSMC estimators, we notice that the estimator with two regressors does much better than the estimator with one regressor. Instead, the interest of using a third regressor is tiny. We also observe that the RMSE does not really decrease after $10^4$ samples on our example. This is due to the regression error: since we approximate $SCR^{int}_t$ by a function of two or three variables, there is no way to go beyond a certain level of precision. This is particularly noticeable for $t\ge 10$ years: the projection of the balance sheet through the ALM model is truly non-Markovian and the history up to time~$t$ cannot be summarized by two or three variables. In comparison, the convergence of the MLMC antithetic estimator is in line with Theorem~\ref{thm_MLMCA} and is asymptotically more accurate than the LSMC estimator. Besides, the MLMC estimator avoids the step of selecting regressors that requires computational time and may be determinant for the accuracy of the LSMC estimator. Last, we notice that for a same level of precision, the computational time required by the MLMC is slightly smaller than the one required by the LSMC estimator. More precisely, the computational time needed for $J=2\times 10^5$ (where the three estimators have quite the same accuracy) are  9950 seconds for $\widehat{I}^{MLMC}_A$, 11230 seconds for $\widehat{I}^{LSMC}$ with two regressors $(n_r=23)$ and  12650 seconds for $\widehat{I}^{LSMC}$ with three regressors $(n_r=13)$.

\begin{figure}[h!]
\centering
    \includegraphics[width=0.5\textwidth]{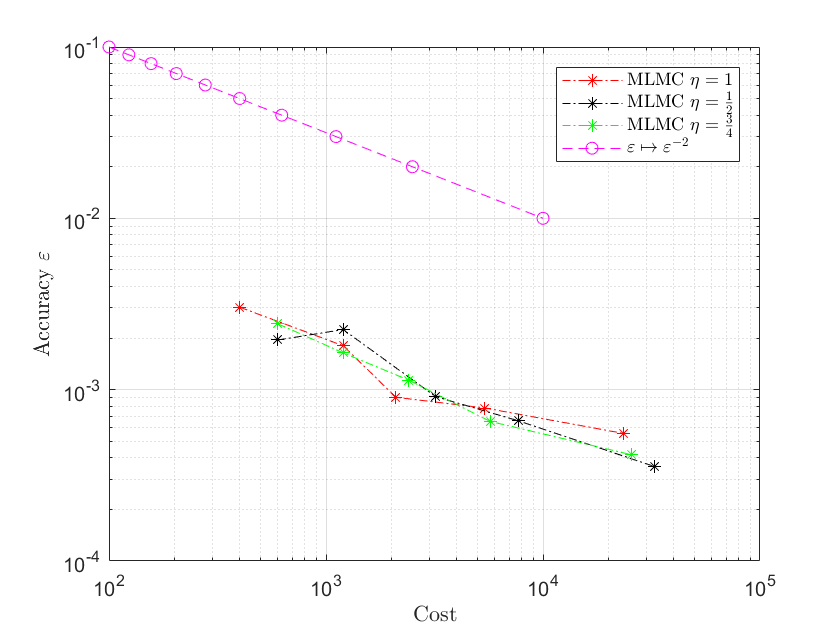}
    \caption{Convergence of the MLMC estimator for different values of $\eta$ in function of the computational cost $\sum_{l=0}^LJ_lK_l$.}
    \label{eta_ALM}  
\end{figure} 
We now make a comment on the choice of the parameter $\eta$ for the MLMC antithetic estimator. We recall that $\eta$ is, roughly speaking, related to the probability that two (or more) arguments of the maximum function are close to the maximum, see Theorems~\ref{thm_nested} and~\ref{thm_MLMCA}. Heuristically, the smaller is this probability, the larger can be $\eta$, which then reduces the number of levels and then the computational cost. In Figure~\ref{eta_ALM}, we have plotted the convergence of the MLMC antithetic estimator for $\eta \in \{1/2,3/4,1\}$ in function of the theoretical computational cost $\sum_{l=0}^L J_lK_l$. Basically, the three estimators converge, but the one obtained with $\eta=1$ does not seem to be asymptotically in $O(\varepsilon^{-2})$ while the two others are in line with the theoretical convergence in $O(\varepsilon^{-2})$. This shows the interest of the parameter~$\eta$ in a practical application and explains why we have chosen to take $\eta= 3/4$ is our experiments in Figure~\ref{RMSE_different_estimator_10Y}.

\begin{remark}
   In all our numerical experiments, we have used for the MLMC antithetic estimator the parameters given by Theorem~\ref{thm_MLMCA}. However, in practice, it may be wise when doing a MLMC to estimate the variance $\hat{V}_l$ of each level (with a few number of samples) and then to optimize $J_0,\dots,J_L$ with these values (i.e. to minimize the variance proxy of the estimator $\sum_{l=0}^L\hat{V}_l/J_l$ under the constraint $\sum_{l=0}^LJ_lK_l=C\varepsilon^{-2}$ on the computational cost, for some constant $C>0$). For sake of clarity and of reproducibility of our results, we do not have consider this empirical improvement of the MLMC antithetic estimator. 
\end{remark}

\subsubsection{Comparison between LSMC and the use of Neural Network}

In Paragraph~\ref{Comp_MLMC_LSMC}, we have noticed that a significant drawback of the LSMC estimator with respect to the MLMC antithetic estimator is that it requires first to select regressors. Beyond the computational time needed by this selection, there is a significant regression error.
\begin{table}[h!]
\centering
\begin{tabular}{|l|l|}
\hline
Input &Feature   \\\hline
$X_t^{1}$ &Bond Book-Value $BV^b_t$ \\\hline
$X_t^{2}$ &  Stock Book-value $BV^s_t$   \\\hline
$X_t^{3}$ & Position in bond $\phi^b_t$   \\\hline
$X_t^{4}$ & Position in stock $\phi^s_t$   \\\hline  
$X_t^{5}$ & Profit-Sharing Reserve level $PSR_t$   \\\hline   
$X_t^{6}$ & Mathematical Reserve level $MR_t$   \\\hline   
$X_t^{7}$ & Capitalization Reserve level $CR_t$   \\\hline   
$X_t^{8}$ & Spread crediting rate/competing rate  $\Delta_t$   \\\hline  
$X_t^{9}$ & Stock price  $S_t$   \\\hline  
$X_t^{10}$ & Interest-rate $r_t$   \\\hline 
\end{tabular}
  \captionof{table}{Inputs feature of the Neural Network}  
  \label{Input_NN}
\end{table}
A natural idea to skip the selection step in the regression is to use Neural Networks (NN). We have implemented a feedforward neural network with one hidden-layer. The hidden-layer is made with $10$ or $50$~neurons. The activation function used is the sigmoid function. To train the network, we generate $J$ outer scenarios $(X_{t,j})_{1\le j\le J}$ for which only one inner simulation $Z_j$ is performed and represents the maximal variation of the discounted P\&L due to the upward and downward shocks. Then, one minimizes
\begin{equation}\label{NN_tominimize}\frac 1 J \sum_{j=1}^J (Z_j-\text{NN}(X_{t,j}))^2,
\end{equation}
where $\text{NN}$ is the function generated by the neural network, so that it approximates the desired conditional expectation. The input features of the network have not been pre-processed. More precisely, the Neural Networks approximate a function on the 10-dimensional space defined by the inputs of Table~\ref{Input_NN}. The optimization above has to be enough for detecting the relevant variables for the approximation of the conditional expectation. Once the NN has been obtained, we then estimate $\E[SCR^{int}_t]$ simply by the empirical mean $\hat{I}^{NN}:=\frac{1}{J} \sum_{j=1}^J \text{NN}(X_{t,j}) $. Since we only use the NN on the training sample, the standard problem of overfitting is not an issue for our application. We compare the RMSE of this estimator with the RMSE obtained with the LSMC method. The aim of our procedure is first to assess if a  NN with a whole range of input features is able to select relevant attributes and second to compare with the LSMC method with well-chosen features.

\begin{table}[h!]
\hspace*{1cm}
\begin{minipage}[b]{.99\textwidth}
   \centering
 \begin{tabular}{|l|l|l|l|l|l|}
\hline 
 $J$ & LSMC dim 1& LSMC dim 2& LSMC dim 3& NN: 10 neurons& NN: 50 neurons \\\hline
500        & 1.0e-3& 3.36e-4&3.50e-4&7.075e-4&7.56e-4         \\\hline
$10^3$        & 1.0e-3& 3.75e-4&4.28e-4&6.46e-4&3.017e-4 \\\hline
$5\times10^3$        & 9.52e-4& 1.23e-4&1.46e-4&1.63e-4&1.8153e-4 \\\hline
$5\times10^4$        & 1.0e-3& 1.12e-4&1.24e-4&6.60e-5&7.29e-5 \\\hline
$10^5$        & 9.97e-4& 1.068e-4&1.19e-4&6.32e-5&6.92e-5 \\\hline
$10^6$        & 9.86e-4& 8.67e-5&8.06e-5&4.22e-5&4.50e-5         \\\hline
\end{tabular} 
 \captionof{table}{ RMSE of $\E[SCR^{int}_t]$ for $t=10$ given by the Neural Network (one hidden layer with the indicated number of neurons) and the LSMC in function of~$J$}   \label{cost_d1}
\end{minipage}
\end{table}

Table~\ref{cost_d1} indicates the RMSE of the estimator with the different methods. First, we notice that there is no need on our example to consider many neurons: a simple layer with $10$ neurons in enough and do as well as the NN with 50 neurons in terms of RMSE. We notice also that the estimator given by the NN is slightly better than the one obtained with the LSMC with two or three regressors when the training sample gets large. However, the use of neural networks present serious drawbacks. First, it requires to store all the samples to achieve the minimization of~\eqref{NN_tominimize} while the LSMC (and also MLMC) estimator only uses once each sample. Second, the time needed by the minimization (indicated in Table~\ref{cost_training}) is important, making at the end this method less competitive than MLMC. Note that one could be then tempted to train the NN on a smaller size of samples and then use it for large~$J$: one would then face the problem of overfitting, which we want to avoid.
 \begin{table}[h!]
\hspace*{1cm}
\begin{minipage}[b]{.99\textwidth}
   \centering
 \begin{tabular}{|l|l|l|l|l|l|}
\hline 
$J$  & $500$ & $10^3$ & $5\times 10^3$ & $5\times 10^4$  &$ 10^5$  \\ \hline
Time (s)&  18.8 & 30.4 & 50.5 & 273.1 &1693 \\ \hline
 \end{tabular}
 \captionof{table}{Time required in seconds for the optimization of~\eqref{NN_tominimize} for a neural network with 10 neurons.}  
 \label{cost_training}
\end{minipage}
 \end{table}
 
To sum up, the use of NN can indeed be useful to reduce the approximation error observed by using LSMC estimators. However, it both demands memory and computational time, making the gain with respect to the LSMC not obvious. The MLMC estimator presents the clear advantage to avoid this issue of function approximation, and to avoid any storage of data.

\subsection{Numerical experiments II: some insights on the ALM}\label{subsec_NUMII}

We now present some applications of the MLMC antithetic estimator for the ALM. One of the major issue in ALM is to determine the optimal asset allocation between the different asset class backed to the insurance portfolio. For that reason, it is crucial to evaluate precisely the amount of SCR required by the strategy. 
We are interested in calculating $\E[SCR^{mkt}_t]$, $\E[SCR^{eq}_t]$ and  $\E[SCR^{int}_t]$, see Subsection~\ref{subsec_STD} for the definition of these modules.  Since $SCR^{eq}_t$ and $SCR^{int}_t$ are random variables and the aggregation formula~\eqref{def_SCR_mkt} is not linear,  $\E[SCR^{mkt}_t]$ cannot be obtained by applying this formula to $\E[SCR^{eq}_t]$ and $\E[SCR^{int}_t]$. Note that by using the MLMC Antithetic estimator, it is possible and easy to calculate at the same time all these expectations, see Remark~\ref{Rk_Anti_gen} for the general expression of these estimators that we use for different functions $h$. At each level~$l$, one simulates $J_l$ primary scenarios up to time~$t$. Then, one simulates for each primary scenario $K_l$ secondary scenarios, on which we perform four different evolutions: the first one without any shock, the second one with the equity shock, the third one with the upward shock on interest rates and the last one with the downward shock on interest rates. Then, one computes the corresponding empirical means related to the calculation of $SCR^{eq}_t$ and $SCR^{int}_t$ and use the aggregation formula~\eqref{def_SCR_mkt} for $SCR^{mkt}_t$.  Let us note that the discontinuity induced by the coefficient~$\varepsilon$ may in principle deteriorate the MLMC estimation. However, we have thus run the MLMC estimator with a regularization of this coefficient and we have noticed a tiny impact of the regularization. This can be heuristically understood from Figure~\ref{fig:activation}: the activation of $\varepsilon$ may occur on a wide range (perhaps the whole range) of values of $SCR^{int}$, which smooths the phenomenon. 

\begin{figure}[h!]
 \begin{minipage}[t]{0.48\textwidth}
   \includegraphics[width=\textwidth]{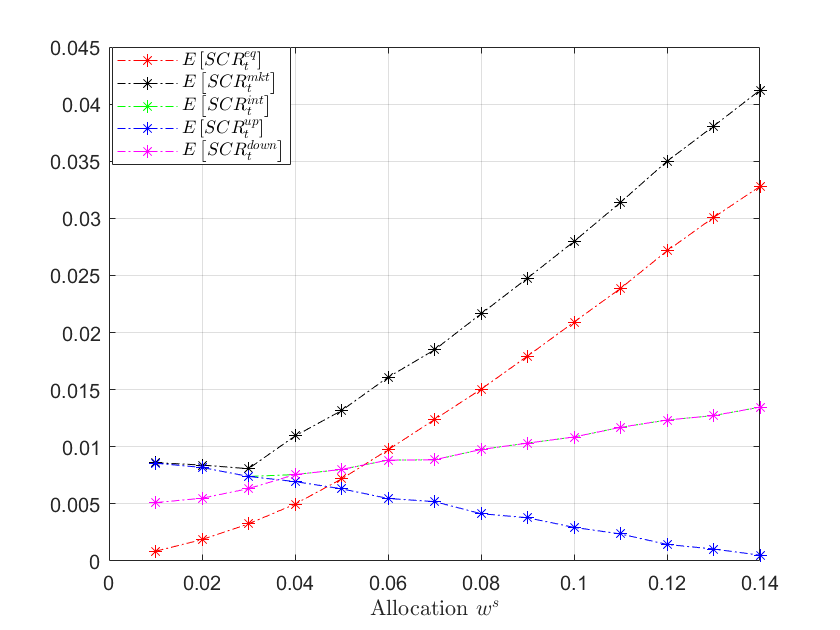}
\caption{Values of the SCR modules in function of the constant allocation weight $w^S$ in equity for $t=0$. } \label{SCR0}
 \end{minipage}
 \ 
 \begin{minipage}[t]{0.48\textwidth}
    \includegraphics[width=\textwidth]{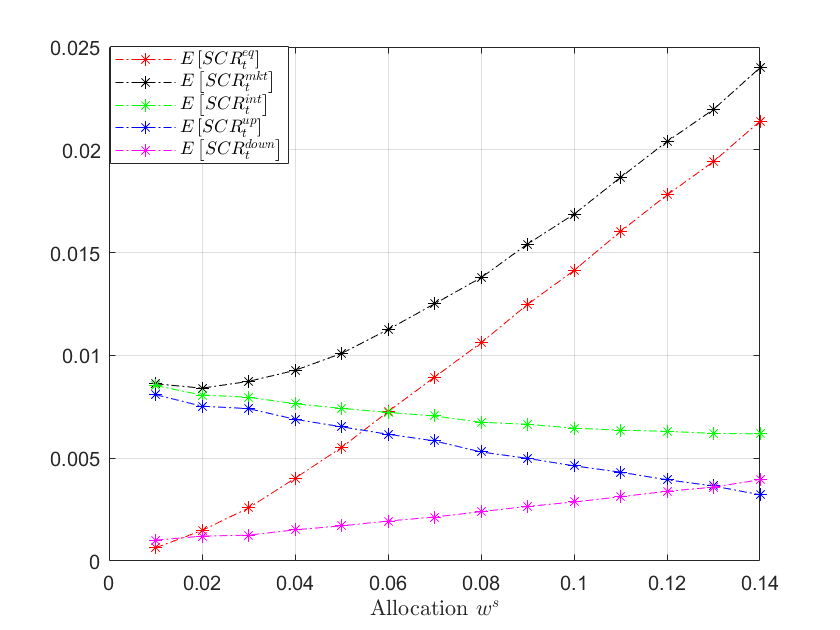}
    \caption{ Expected values of the SCR modules in function of the constant allocation weight $w^S$ in equity for $t=10$. }\label{SCR10}
 \end{minipage}
 \begin{minipage}[t]{0.48\textwidth}
   \includegraphics[width=\textwidth]{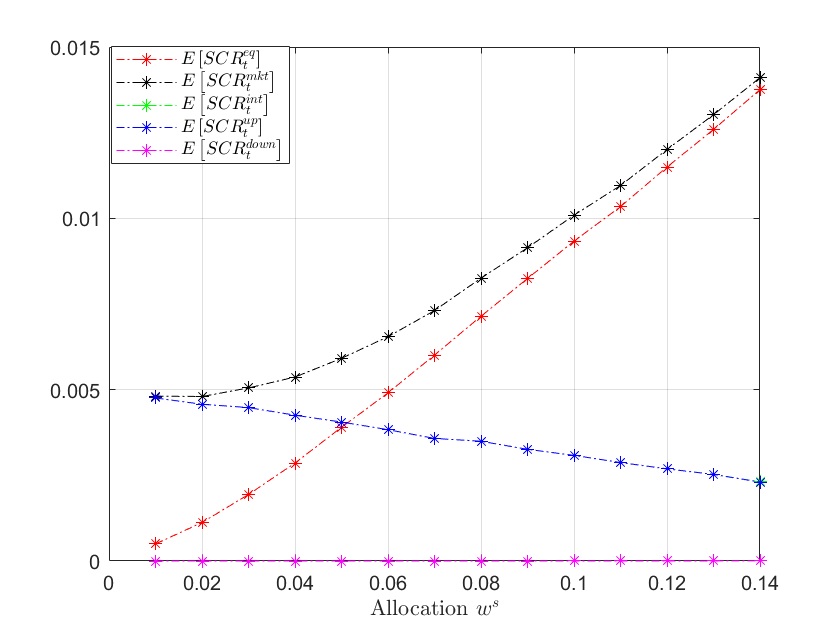}
\caption{Expected values of the SCR modules in function of the constant allocation weight $w^S$ in equity for $t=20$.} \label{SCR20}
 \end{minipage}
 \ \
 \begin{minipage}[t]{0.48\textwidth}
    \includegraphics[width=\textwidth]{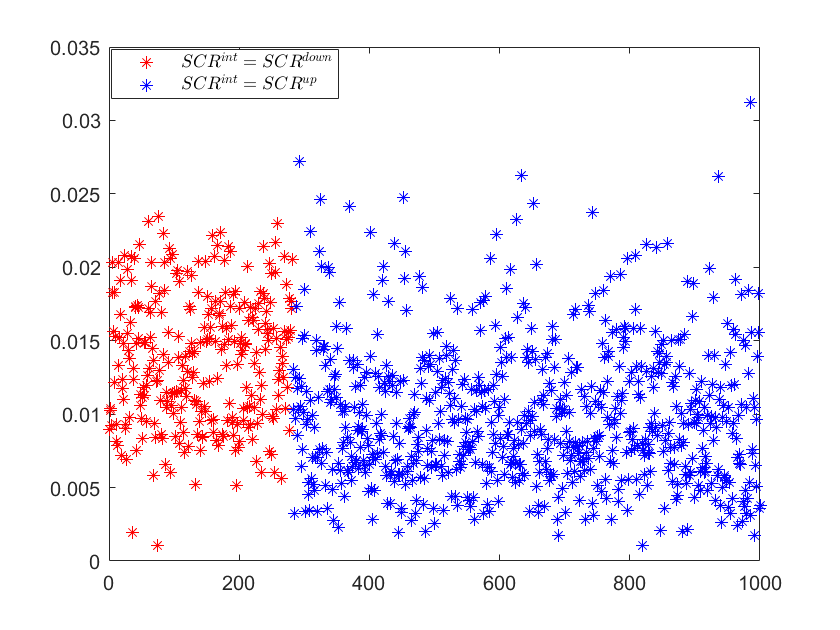}
    \caption{Sample of $1000$ approximated values of $SCR^{int}_{10}$ given by a nested estimator: in red (resp. blue) are the values obtained when the downward shock is greater (resp. smaller) than the upward shock on interest rates.}\label{fig:activation}
 \end{minipage}
\end{figure}

Figures~\ref{SCR0}, \ref{SCR10} and \ref{SCR20} illustrate respectively the different values of the SCR modules $\E[SCR^{mod}_t]$ for  $mod\in \{mkt, eq, int, up, down\}$ in function of the constant allocation weight $w^S$ in equity for $t=0$, $t=10$ and $t=20$ years (at $t=0$, we can remove the expectation). As one may expect, these values globally decrease with respect to the time since we are considering a run-off portfolio with an exit rate greater than $5\%$. We notice several interesting points. 
\begin{itemize}
\item At time $t=10$, the value of  $\E[SCR^{int}_t]$, which is equal to $\E[\max(SCR^{up}_t,SCR^{down}_t)]$,  is significantly larger than $\max(\E[SCR^{up}_t],\E[SCR^{down}_t])$. This shows that deterministic proxy values of SCR modules may induce errors. We no longer observe this phenomenon at time $t=20$ because the greater shock is always given by the upward shock and we have then $SCR^{int}_{20}\approx SCR^{up}_{20}$ (the green and blue curves coincide). This is explained in the next point.
\item The main effects of the shocks on the interest rate are the following. The upward (resp.~downward) shock leads to an immediate decrease (resp.~increase) of the portfolio market value, but on the long run higher (resp.~lower) rates gives a better (resp. worse) profitability. Here, we are considering a run-off portfolio with final maturity $T=30$. Thus, as $t$ increases, the effect on the long run of this shocks get less important making the immediate effect on market value dominant. Thus, at $t=20$ the downward shock is harmless while the upward shock gets painful. This explains why we observe  $SCR^{int}_{20}\approx SCR^{up}_{20}$.
\item The aggregation formula~\eqref{def_SCR_mkt} somehow encourages to have $SCR^{int}$ and $SCR^{eq}$ of the same order: if $SCR^{int}>>SCR^{eq}$ it is possible to invest more in equity to have a better average return with a moderate increase of $SCR^{mkt}$, and if $SCR^{int}<<SCR^{eq}$, one should reduce the investment in equity to reduce $SCR^{mkt}$. Therefore, it is interesting to look at the allocation that is such that $\E[SCR^{int}_t]=\E[SCR^{eq}_t]$. We see from Figures~\ref{SCR0}, \ref{SCR10} and \ref{SCR20} that the corresponding weight $w^S$ evolves slightly. We get $w^S\approx 0.05$ for $t=0$, $w^S\approx 0.06$ for $t=10$ and $w^S\approx 0.05$ for $t=20$, which is still a relative variation of~20\%. This shows that a better evaluation of the SCR along the time may lead to significant adjustments on the investment strategy.  
\end{itemize}

Besides the calculation of the SCR, we can also use of the MLMC Antithetic estimator to calculate the sensitivity of the SCR with respect to some variations of parameters or market prices. These sensitivities are interesting information for management and they can also be useful to calculate quickly a proxy value of the SCR. For example, suppose that we have computed the value of $\E[SCR^{mkt}_{10}]$ and, after few days, we want to estimate the new value of  $\E[SCR^{mkt}_{10}]$ by taking into account the small variation on the equity and on the interest rates. Then, we can do this easily if one has computed the values of the sensitivities $\frac{\E[SCR^{mkt}_{10}(r_0+\delta r_0)- SCR^{mkt}_{10}(r_0)] }{\delta r_0}$ and $\frac{\E[SCR^{mkt}_{10}(S_0+\delta S_0)- SCR^{mkt}_{10}(S_0)] }{\delta S_0}$, where implicitly all values are kept constant but respectively $r_0$ and $S_0$. Note that these sensitivities can be computed with the MLMC antithetic estimator with the same samples that are needed for the estimation of~$SCR^{mkt}$. Table~\eqref{sensi} indicates the sensitivities obtained with our parameters with $\delta r_0=0.001$ and $\delta S_0 = 0.01$.
 \begin{table}[h!]
\centering
\begin{minipage}[b]{0.8\textwidth}
   \centering
 \begin{tabular}{|c|c|}
\hline 
 $\frac{\E[SCR^{mkt}_{10}(S_0+\delta S_0)- SCR^{mkt}_{10}(S_0)] }{\delta S_0}$& $\frac{\E[SCR^{mkt}_{10}(r_0+\delta r_0)- SCR^{mkt}_{10}(r_0)] }{\delta r_0}$  \\\hline
          0.0228 & -0.0845         \\\hline
\end{tabular} 
 \captionof{table}{Sensitivities of the $SCR^{mkt}_{10}$ with $w^S=0.05$.}  
 \label{sensi}
\end{minipage}
 \end{table}

 Last, the MLMC estimation is a tool for example to analyse how the $SCR$ depends on the risk premia of stocks and interest rates. If the evaluation of the SCR has to be performed for regulatory reasons under a risk neutral framework, it is more relevant for ALM to calculate $\E^{\Pb}[SCR^{mod}_t]$ under the real probability, which corresponds to the average value of the own funds that will be necessary at time~$t$. From equations~\eqref{SCR_int_histo} (generalized to any other SCR module) and~\eqref{Chg_prob}, it is possible to see the impact of the risk premia $\lambda^W$ and $\lambda^Z$ on each SCR module. In Figures~\ref{SCR_int_lambdaZ} and~\ref{SCR_eq_lambdaW}, we have indicated the more  remarkable ones: the dependence of  $\E^{\Pb}[SCR^{int}_t]$ on $\lambda^Z$ and of $\E^{\Pb}[SCR^{eq}_t]$ on $\lambda^W$.  We notice that the larger is $\lambda^Z$, the larger is $\E^{\Pb}[SCR^{int}_t]$. This can be understood as follows. A higher $\lambda^Z$ leads to a higher mean reverting level for the short rate $r$. Thus, under the real probability measure, bonds are better remunerated on the time interval $[0,t]$, and at time~$t$ the amount of savings (mathematical reserve) is higher. Since the evaluation of $SCR^{int}_t$ is risk neutral, $\lambda^Z$ has then no incidence on this evaluation. Thus, we observe a higher value of $\E^{\Pb}[SCR^{int}_t]$ simply because the mathematical reserve at time~$t$ is higher because of better returns.  The same interpretation holds for $\E^{\Pb}[SCR^{eq}_t]$: the higher is $\lambda^W$, the higher is the amount of savings at time~$t$ and therefore the higher is  $\E^{\Pb}[SCR^{eq}_t]$. Last, let us mention that for the simulations of Figures~\ref{SCR_int_lambdaZ} and~\ref{SCR_eq_lambdaW}, we have run independently the MLMC algorithm for each time~$t$. A natural question then is if we could handle all these calculations together to spare some computational time. This is left for further research.

\begin{figure}[h!]
  \begin{minipage}[t]{0.48\textwidth}
   \includegraphics[width=\textwidth]{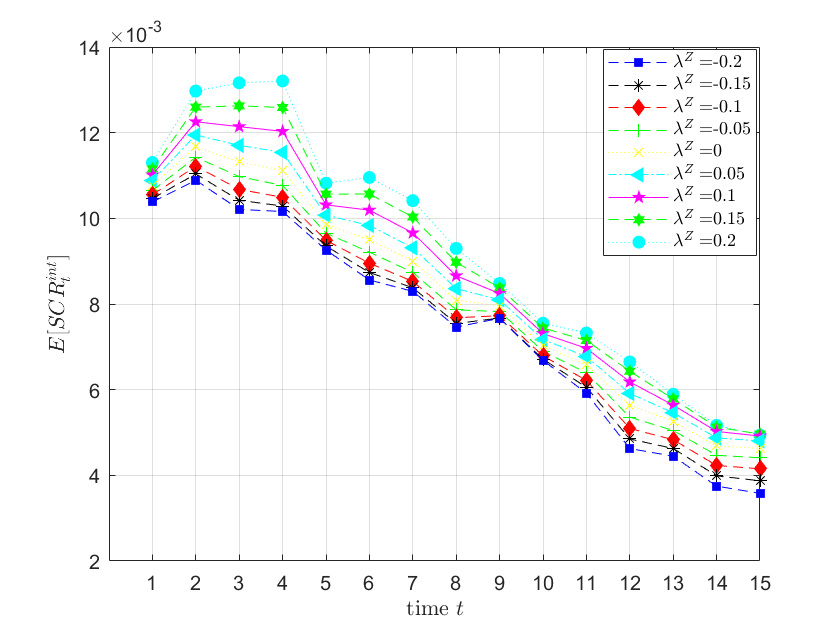}
\caption{Estimated values of~$\E^{\Pb}[SCR^{int}_t]$ in function of $t$ for different risk premia $\lambda^Z$.} 
    \label{SCR_int_lambdaZ}
  \end{minipage}
  \ 
   \begin{minipage}[t]{0.48\textwidth}
   \includegraphics[width=\textwidth]{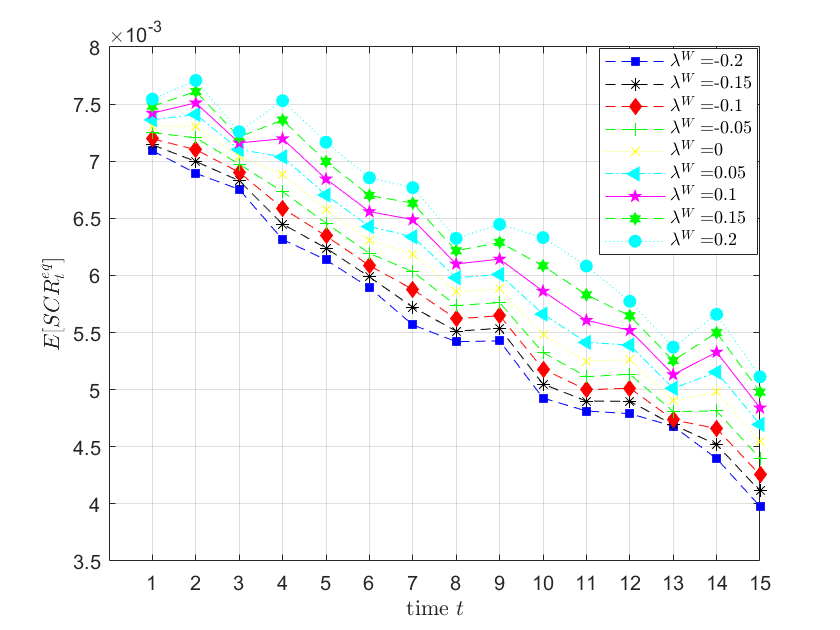}
\caption{Estimated values of~$\E^{\Pb}[SCR^{eq}_t]$ in function of $t$ for different risk premia $\lambda^W$. } \label{SCR_eq_lambdaW}
  \end{minipage} 
\end{figure}

\newpage
\appendix

\section{Technical proofs for Theorems~\ref{thm_nested} and~\ref{thm_MLMCA} }\label{Appendix_MLMC}

\subsection{Preliminary results}
In this section, we gather elementary but useful results for the analysis of the nested and the multilevel Monte-Carlo estimators.

\begin{proposition}
  Let $g_0(u)=\max \{u,0\}$ and, for $\varepsilon>0$, $g_\varepsilon(u)=\frac{u^2}{2\varepsilon}\mathds{1}_{u\in[0,\varepsilon]}+(u-\frac{\varepsilon}{2})\mathds{1}_{u>\varepsilon}$. The function $g_\varepsilon$ is $C^1$ and piecewise $C^2$ with 
\begin{align*}
g^{\prime}_\varepsilon(u)&=\frac{u}{\varepsilon}\mathds{1}_{u\in[0,\varepsilon]}+\mathds{1}_{u>\varepsilon}, \ g^{\prime\prime}_\varepsilon(u)=\frac{1}{\varepsilon}\mathds{1}_{u\in[0,\varepsilon]}
\end{align*}
Moreover, $g_\varepsilon$ is 1-Lipschitz and we have
\begin{align*}
g_\varepsilon\leq g_0\leq g_\varepsilon+\frac{\varepsilon}{2}.
\end{align*}
In addition, for any $\theta,\widehat{\theta}$ such that $\theta\leq\widehat{\theta}\in \mathbb{R}$ we have: 
\begin{equation}
\forall a \in \R,  0\leq\int_{\theta}^{\widehat{\theta}}g^{\prime\prime}_\varepsilon(t+a)dt\leq \int_\mathbb{R}g^{\prime\prime}_\varepsilon(t)dt=1.
\label{int_g_eps_1}
\end{equation}
Finally, the following asymptotic properties holds : $\forall\ u\in \mathbb{R}\ g_\varepsilon(u)\xrightarrow[\varepsilon \to 0]{}g_0(u)$,  $\forall\ u\in \mathbb{R}\ g^{\prime}_\varepsilon(u)\xrightarrow[\varepsilon \to 0]{}\mathds{1}_{u>0}  $
and $ \int_\mathbb{R}g_\varepsilon^{\prime\prime}(u)\varphi(u) du\xrightarrow[\varepsilon \to 0]{}\varphi(0) $ for any function $\varphi:\R \to \R$ that is right-continuous at~$0$. 
\label{g_eps_limit}
\end{proposition}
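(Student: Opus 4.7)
The plan is to verify each claim by direct computation, exploiting the piecewise polynomial form of $g_\varepsilon$. The function has three natural regions: $u\le 0$, $u\in[0,\varepsilon]$, and $u\ge \varepsilon$, on each of which $g_\varepsilon$ is a polynomial of degree at most~$2$. First I would differentiate piecewise to obtain the stated formulas for $g'_\varepsilon$ and $g''_\varepsilon$, then check that $g'_\varepsilon$ is continuous at the junction points $u=0$ (where both sides give $0$) and $u=\varepsilon$ (where both sides give $1$), which establishes $g_\varepsilon \in C^1(\R)$. The discontinuity of $g''_\varepsilon$ only occurs at $u=0$ and $u=\varepsilon$, so $g_\varepsilon$ is piecewise $C^2$ as claimed. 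The $1$-Lipschitz property follows from $0 \le g'_\varepsilon(u) \le 1$ for all~$u$.

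Next I would establish $g_\varepsilon \le g_0 \le g_\varepsilon + \varepsilon/2$ by checking the three regions directly: for $u\le 0$ both $g_\varepsilon$ and $g_0$ vanish; for $u\in[0,\varepsilon]$ one has $g_0(u)=u$ and $g_\varepsilon(u)=u^2/(2\varepsilon)$, and the inequalities $u^2/(2\varepsilon)\le u \le u^2/(2\varepsilon)+\varepsilon/2$ reduce to $(u-\varepsilon)^2\ge 0$ (for the right one) and $u(2\varepsilon-u)\ge 0$ (for the left one); for $u\ge \varepsilon$, $g_0(u)-g_\varepsilon(u)=\varepsilon/2$ exactly. For the integral bound~\eqref{int_g_eps_1}, nonnegativity of $g''_\varepsilon$ gives the lower bound, and $\int_\theta^{\widehat\theta} g''_\varepsilon(t+a)\,dt = \int_{\theta+a}^{\widehat\theta+a} g''_\varepsilon(s)\,ds \le \int_\R g''_\varepsilon(s)\,ds = \frac{1}{\varepsilon}\cdot \varepsilon = 1$ gives the upper bound.

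Finally I would handle the asymptotic properties. Pointwise convergence $g_\varepsilon(u)\to g_0(u)$ is immediate by considering $u<0$, $u=0$, and $u>0$ separately and letting $\varepsilon\to 0$ in the explicit formulas; similarly for $g'_\varepsilon(u)\to \mathds{1}_{u>0}$, noting that for any fixed $u\ne 0$ and small enough~$\varepsilon$, one is outside the transition region $[0,\varepsilon]$. For the convergence $\int_\R g''_\varepsilon(u)\varphi(u)\,du = \frac{1}{\varepsilon}\int_0^\varepsilon \varphi(u)\,du \to \varphi(0)$, this is the standard fact that right Cesàro averages of a function converge to its right-limit at the origin: writing
\begin{equation*}
\frac{1}{\varepsilon}\int_0^\varepsilon \varphi(u)\,du - \varphi(0) = \frac{1}{\varepsilon}\int_0^\varepsilon \bigl(\varphi(u)-\varphi(0)\bigr)\,du,
\end{equation*}
the right-continuity of $\varphi$ at~$0$ makes the integrand uniformly small on $[0,\varepsilon]$ as $\varepsilon\to 0$, which concludes.

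No step is truly an obstacle here; the argument is a sequence of direct verifications. The one place requiring a small amount of care is the last item, where we must invoke only right-continuity (not continuity) of~$\varphi$, reflecting the fact that $g''_\varepsilon$ is supported in $[0,\varepsilon]$ and hence behaves as a one-sided approximation of the Dirac mass at~$0$.
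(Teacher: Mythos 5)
Your proof is correct; each claim is verified by the direct piecewise computations one would expect, and the only delicate point (invoking only right-continuity of $\varphi$, since $g''_\varepsilon$ is supported in $[0,\varepsilon]$ and thus acts as a one-sided approximation of the Dirac mass) is handled properly. The paper states this proposition without proof, treating it as an elementary verification, so your argument is exactly the intended one.
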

\begin{lemma}
Let $ \theta,\widehat{\theta}\in \mathbb{R}$ and $\left\{a_\varepsilon\right\}$ an arbitrary function that converges to $a$ as $\varepsilon\rightarrow 0$, then: 
\begin{equation}
\limsup_{\varepsilon\rightarrow 0}\left\vert \int_{\theta}^{\widehat{\theta}}g^{\prime\prime}_\varepsilon(t+a_\varepsilon)dt\right\vert\leq \left\vert \mathds{1}_{\theta\leq-a\leq \widehat{\theta}}-\mathds{1}_{\widehat{\theta}\leq-a\leq\theta}\right\vert
\end{equation}
 \label{lemma:integral_limit_eps}
\end{lemma}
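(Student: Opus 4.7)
The plan is to use the explicit form $g''_\varepsilon(u) = \varepsilon^{-1}\mathds{1}_{u\in[0,\varepsilon]}$ from Proposition~\ref{g_eps_limit}, turn the integral into a Lebesgue measure of an interval intersection, and then do a short case analysis on the position of $-a$ relative to $\theta$ and $\widehat{\theta}$. Concretely, assume first that $\theta\le\widehat{\theta}$. Then
$$\int_{\theta}^{\widehat{\theta}} g''_\varepsilon(t+a_\varepsilon)\,dt
\;=\; \frac{1}{\varepsilon}\,\lambda\bigl([\theta,\widehat{\theta}]\cap[-a_\varepsilon,-a_\varepsilon+\varepsilon]\bigr)\;\in\;[0,1],$$
where $\lambda$ denotes the Lebesgue measure; this already recovers~\eqref{int_g_eps_1}. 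When $\theta>\widehat{\theta}$, the integral is the negative of the same expression with $\theta$ and $\widehat{\theta}$ swapped, so taking absolute value makes the discussion symmetric in $\theta,\widehat{\theta}$, and the right-hand side $|\mathds{1}_{\theta\le -a\le\widehat{\theta}}-\mathds{1}_{\widehat{\theta}\le -a\le\theta}|$ is likewise symmetric. So it suffices to treat the case $\theta\le\widehat{\theta}$ and show that $\limsup_{\varepsilon\to 0}\varepsilon^{-1}\lambda\bigl([\theta,\widehat{\theta}]\cap[-a_\varepsilon,-a_\varepsilon+\varepsilon]\bigr)\le \mathds{1}_{-a\in[\theta,\widehat{\theta}]}$ when $\theta<\widehat{\theta}$, and both sides are $0$ when $\theta=\widehat{\theta}$.

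The case $\theta=\widehat{\theta}$ is immediate: the integral is identically zero and the RHS collapses to $0$ (even if $-a=\theta=\widehat{\theta}$, the two indicators coincide). When $\theta<\widehat{\theta}$, I split into three subcases. If $-a<\theta$ or $-a>\widehat{\theta}$, then since $a_\varepsilon\to a$, for $\varepsilon$ sufficiently small the interval $[-a_\varepsilon,-a_\varepsilon+\varepsilon]$ lies entirely outside $[\theta,\widehat{\theta}]$, so the intersection is empty and the integral equals $0$; the RHS is also $0$, giving the bound. If $-a\in[\theta,\widehat{\theta}]$ (whether strictly interior or at an endpoint), the RHS equals $1$ and the desired inequality is just the trivial a~priori bound $\varepsilon^{-1}\lambda(\cdot)\le 1$ from the first display.

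I do not anticipate any real obstacle; the only point demanding a little care is the symmetrization in $\theta,\widehat{\theta}$ to handle the case $\theta>\widehat{\theta}$ (where the integral changes sign) together with the boundary situations $-a=\theta$ or $-a=\widehat{\theta}$, but both are handled by the trivial bound once the integral is rewritten as a measure of an intersection of intervals.
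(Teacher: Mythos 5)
Your proposal is correct and follows essentially the same route as the paper: rewrite the integral as $\varepsilon^{-1}$ times the Lebesgue measure of $[\theta,\widehat{\theta}]\cap[-a_\varepsilon,\varepsilon-a_\varepsilon]$, then split on whether $-a$ lies outside or inside $[\theta,\widehat{\theta}]$, using emptiness of the intersection for small $\varepsilon$ in the first case and the trivial bound from~\eqref{int_g_eps_1} in the second. Your explicit treatment of the orientation $\theta>\widehat{\theta}$ and of the degenerate case $\theta=\widehat{\theta}$ merely spells out what the paper dismisses as ``without loss of generality.''
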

\begin{proof}[Proof of \Cref{lemma:integral_limit_eps}]
 Without loss of generality, we assume that $\theta<\widehat{\theta}$. 
 First we have that :
 \begin{align*}
\int_\theta^{\widehat{\theta}}g^{\prime\prime}_\varepsilon(t+a_\varepsilon)dt=\frac{1}{\varepsilon}\int_{A_\varepsilon}1 dt
\end{align*}
where $A_\varepsilon=[\theta,\widehat{\theta}]\cap[-a_\varepsilon,\varepsilon-a_\varepsilon]$. Hence, if $-a<\theta$ or $\widehat{\theta}<-a$, it exists $\varepsilon_0>0$ such that $\forall\varepsilon\in[0,\varepsilon_0], A_\varepsilon=\emptyset$. In this case: 
\begin{align*}
\limsup_{\varepsilon\rightarrow 0}\left\vert \int_{\theta}^{\widehat{\theta}}g^{\prime\prime}_\varepsilon(t+a_\varepsilon)dt\right\vert=0
\end{align*}
Otherwise (i.e if $\theta\leq -a\leq \widehat{\theta}$) we always have: $A_\varepsilon\subset[-a_\varepsilon,\varepsilon-a_\varepsilon] $. Therefore, we have
$\limsup_{\varepsilon\rightarrow 0}\left\vert \int_{\theta}^{\widehat{\theta}}g^{\prime\prime}_\varepsilon(t+a_\varepsilon)dt\right\vert\leq  1$ by~(\ref{int_g_eps_1}). Thus we obtain 
 \begin{align*}
 \limsup_{\varepsilon\rightarrow 0} \left\vert \int_{\theta}^{\widehat{\theta}}g^{\prime\prime}_\varepsilon(t+a_\varepsilon)dt\right\vert \leq \mathds{1}_{\left\{\theta\leq-a\leq \widehat{\theta}\right\}}. \quad  \qedhere
 \end{align*}
\end{proof}

\begin{lemma}
\label{lemma::indicatrice_cv_markov}
Let $\theta,\widehat{\theta}\in \R$. Then, we have
\begin{align}
& \left\vert \mathds{1}_{\theta \leq 0\leq \widehat{\theta}}-\mathds{1}_{\widehat{\theta}\leq 0 \leq \theta}\right\vert \le
\mathds{1}_{\theta\widehat{\theta} \leq 0},
\label{ecart_indic} \\
& \mathds{1}_{\theta\widehat{\theta}\leq 0}\leq\frac{\vert\widehat{\theta}-\theta\vert}{\vert\theta\vert} \text{ for } \theta \not = 0.
\label{indic_bound}
\end{align}
\end{lemma}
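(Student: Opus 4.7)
The plan is to prove the two inequalities separately by direct, elementary case analysis; there is no substantial obstacle in either.

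For inequality~\eqref{ecart_indic}, let $A=\{\theta\leq 0\leq\widehat{\theta}\}$ and $B=\{\widehat{\theta}\leq 0\leq\theta\}$. The left-hand side is $|\mathds{1}_A-\mathds{1}_B|=\mathds{1}_{A\triangle B}\leq\mathds{1}_{A\cup B}$, so it is enough to show that $A\cup B\subseteq\{\theta\widehat{\theta}\leq 0\}$. On $A$, $\theta\leq 0$ and $\widehat{\theta}\geq 0$, so $\theta\widehat{\theta}\leq 0$; symmetrically on $B$. This proves~\eqref{ecart_indic}.

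For inequality~\eqref{indic_bound}, fix $\theta\neq 0$. If $\theta\widehat{\theta}>0$ the indicator vanishes and the inequality is trivial because the right-hand side is nonnegative. If $\theta\widehat{\theta}\leq 0$, then $\theta$ and $\widehat{\theta}$ have opposite (weak) signs, so $|\widehat{\theta}-\theta|=|\theta|+|\widehat{\theta}|\geq|\theta|$. Dividing by $|\theta|$ gives $\frac{|\widehat{\theta}-\theta|}{|\theta|}\geq 1=\mathds{1}_{\theta\widehat{\theta}\leq 0}$, which completes the proof.

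Since both inequalities follow from a short sign analysis, no step is particularly delicate; the only point worth stating carefully is the identity $|\mathds{1}_A-\mathds{1}_B|=\mathds{1}_{A\triangle B}$ that reduces~\eqref{ecart_indic} to a set inclusion.
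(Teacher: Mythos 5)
Your proof is correct and follows essentially the same elementary sign analysis as the paper: the paper likewise disposes of \eqref{ecart_indic} by noting it is an equality except when $\theta=\widehat{\theta}=0$, and proves \eqref{indic_bound} by observing that opposite signs force $|\widehat{\theta}-\theta|\ge|\theta|$. Your rephrasing via $|\mathds{1}_A-\mathds{1}_B|=\mathds{1}_{A\triangle B}\le\mathds{1}_{A\cup B}$ is a clean way to organize the first case analysis but does not change the substance of the argument.
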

\begin{proof}
  Inequality~\eqref{ecart_indic} is an equality when $\theta\not=0$ or $\theta=0,\widehat{\theta}\not = 0$ and an obvious inequality when  $\theta=\widehat{\theta} = 0$. 
  The right hand side of~\eqref{indic_bound} is nonnegative. When $\theta<0$ and $\widehat{\theta}\ge 0$ (resp. $\theta>0$ and $\widehat{\theta}\le 0$) , we have $|\widehat{\theta} - \theta|=\widehat{\theta} - \theta\ge -\theta =|\theta|$ (resp. $|\widehat{\theta} - \theta|=-\widehat{\theta} + \theta\ge \theta =|\theta|$).  
\end{proof}


\begin{lemma}\label{lem_speedLLN}
  Let $(Z_k)_{k\in \N^*}$ be an i.i.d. sequence of square integrable real valued random variables. Let $\mu=\E[Z_1]$ and $\sigma=\sqrt{\V[Z_1]}$. For $\gamma \in [1,2]$, we define $D_\gamma=\sigma^\gamma$ and for $\gamma>2$, $D_\gamma=\E[|Z_1-\mu|^\gamma] \in [0,+\infty]$. Then, we have 
  $$ \E\left[\left|\frac 1 K \sum_{k=1}^K Z_k -\mu \right|^\gamma\right]\le C_\gamma \frac{D_\gamma}{K^{\gamma/2}},$$
with $C_\gamma=1$ for $\gamma\in [1,2]$ and $C_\gamma=\left(2\sqrt{\gamma-1}\right)^\gamma$ for $\gamma>2$.
\end{lemma}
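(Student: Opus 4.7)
My plan is to split the argument according to whether $\gamma \le 2$ or $\gamma > 2$, since the two regimes call on different tools.

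For $\gamma \in [1,2]$, the bound comes directly from Jensen's inequality applied to the concave function $x \mapsto x^{\gamma/2}$ on $\R_+$. Writing $\bar Z_K = \frac{1}{K}\sum_{k=1}^K Z_k$, I would note that
\begin{equation*}
\E\left[|\bar Z_K-\mu|^\gamma\right] = \E\left[\left((\bar Z_K-\mu)^2\right)^{\gamma/2}\right] \le \left(\E\left[(\bar Z_K-\mu)^2\right]\right)^{\gamma/2} = \left(\frac{\sigma^2}{K}\right)^{\gamma/2} = \frac{D_\gamma}{K^{\gamma/2}},
\end{equation*}
which yields the claim with $C_\gamma = 1$ in this range.

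For $\gamma > 2$, the sums $M_n = \sum_{k=1}^n (Z_k - \mu)$ form a square-integrable martingale (with respect to the natural filtration), so one can invoke the Burkholder--Davis--Gundy inequality in its sharp form for $\gamma \ge 2$:
\begin{equation*}
\E\left[|M_K|^\gamma\right] \le \left(2\sqrt{\gamma-1}\right)^\gamma\, \E\left[\left(\sum_{k=1}^K (Z_k-\mu)^2\right)^{\gamma/2}\right].
\end{equation*}
The bracketed term is then controlled by Jensen applied to the convex function $x \mapsto x^{\gamma/2}$:
\begin{equation*}
\left(\sum_{k=1}^K (Z_k-\mu)^2\right)^{\gamma/2} = K^{\gamma/2}\left(\frac{1}{K}\sum_{k=1}^K (Z_k-\mu)^2\right)^{\gamma/2} \le K^{\gamma/2 - 1}\sum_{k=1}^K |Z_k-\mu|^\gamma,
\end{equation*}
so taking expectations gives $\E\bigl[\bigl(\sum_{k=1}^K (Z_k-\mu)^2\bigr)^{\gamma/2}\bigr] \le K^{\gamma/2}\, D_\gamma$. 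Combining and dividing by $K^\gamma$ yields $\E[|\bar Z_K - \mu|^\gamma] \le (2\sqrt{\gamma-1})^\gamma D_\gamma / K^{\gamma/2}$, as required.

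The only real subtlety is the value of the constant: the bound $C_\gamma = (2\sqrt{\gamma-1})^\gamma$ is precisely the sharp BDG constant for real-valued discrete martingales with $\gamma \ge 2$ (a well-known refinement, cf.\ the Burkholder/Davis literature), so the main thing to be careful about is citing the correct version of BDG. Everything else is a short application of Jensen and a convexity rearrangement of the square function.
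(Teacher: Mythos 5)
For $\gamma\in[1,2]$ your argument coincides with the paper's: a single application of Jensen's inequality to the concave map $x\mapsto x^{\gamma/2}$, giving $C_\gamma=1$. For $\gamma>2$ the two proofs genuinely differ: the paper's proof is a one-line citation of Corollary~2.5 of \cite{GHJVW}, which is exactly the stated moment bound for i.i.d.\ averages with the constant $(2\sqrt{\gamma-1})^\gamma$, whereas you rederive it from a square-function (BDG-type) inequality followed by the power-mean bound $(\sum_{k=1}^K d_k^2)^{\gamma/2}\le K^{\gamma/2-1}\sum_{k=1}^K|d_k|^\gamma$. Your algebra is correct and this is the standard Marcinkiewicz--Zygmund route, so your version is more self-contained than the paper's. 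The one assertion you should not make as stated is that $(2\sqrt{\gamma-1})^\gamma$ is ``precisely the sharp BDG constant for real-valued discrete martingales'': the sharp constants in the discrete square-function inequality are not known in that closed form, and the classical Burkholder bound $\E[|M_K|^\gamma]\le(\gamma-1)^\gamma\,\E[(\sum_k d_k^2)^{\gamma/2}]$ is \emph{weaker} than the one you invoke as soon as $\gamma>5$, so your constant does not follow from the textbook statement of BDG. It is nonetheless valid for the i.i.d.\ centered sums needed here: symmetrizing and applying Khintchine's inequality conditionally on the values $|Z_k-\mu|$ (the Khintchine constant is at most $\sqrt{\gamma-1}$ for $\gamma\ge2$) yields $\E[|S_K|^\gamma]\le(2\sqrt{\gamma-1})^\gamma\,\E[(\sum_k(Z_k-\mu)^2)^{\gamma/2}]$; alternatively, the Burkholder--Rio inequality $\|S_K\|_\gamma^2\le(\gamma-1)\sum_{k=1}^K\|Z_k-\mu\|_\gamma^2$ gives the lemma directly, with the better constant $(\sqrt{\gamma-1})^\gamma$ and no square function at all. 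Replace the appeal to ``the sharp BDG constant'' by one of these precise statements (or by the explicit reference the paper uses) and your proof is complete.
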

\begin{proof}
For $\gamma\in [1,2]$, We have from Jensen inequality $\E\left[\left|\frac 1 K \sum_{k=1}^K Z_k -\mu \right|^\gamma\right]\le\E\left[\left|\frac 1 K \sum_{k=1}^K Z_k -\mu \right|^2\right]^{\gamma/2}=\sigma^\gamma/K^{\gamma/2}$. For $\gamma>2$, this result is stated in Corollary 2.5~\cite{GHJVW}
\end{proof}

\subsection{Nested Monte-Carlo estimator}\label{subsec_app_nested}

\begin{proof}[Proof of Lemma~\ref{lemma_rec_max}]
Let $\bar{b}(X)=\mathbb{E}\left[\max\{\widehat{\theta}_K^{1},\widehat{\theta}_K^{2}\}-\max\{\varphi_1(X),\varphi_2(X)\}|X\right]$. Since  $ \max\{a,b\}=a+g_0(b-a)$ for $a,b\in\R$, we get 
 \begin{equation}
 \bar{b}(X)=\mathbb{E}\left[\widehat{\theta}_K^{1}-\varphi_1(X)+g_0(\widehat{\theta}_K^{21})-g_0(\varphi_{21}(X))|X\right]
 \end{equation}
 Now, we observe that
 \begin{equation}\label{smooth_b}
   \bar{b}(X)= \lim_{\varepsilon\rightarrow 0} \bar{b}_\varepsilon(X) \text{ with } \bar{b}_\varepsilon(X)=\mathbb{E}\left[\widehat{\theta}_K^{1}-\varphi_1(X)+g_\varepsilon(\widehat{\theta}_K^{21})-g_\varepsilon(\varphi_{21}(X))|X\right].
 \end{equation}
 Since $g_\varepsilon$ is 1-Lipschitz, we have
\begin{align*}
\vert g_\varepsilon(\widehat{\theta}_K^{21})-g_\varepsilon(\varphi_{21}(X))\vert\leq \vert\widehat{\theta}_K^{21}-\varphi_{21}(X)\vert\leq \vert\widehat{\theta}_K^{1}-\varphi_{1}(X)\vert+\vert\widehat{\theta}_K^{2}-\varphi_{2}(X)\vert.
\end{align*}
Then, we get~\eqref{smooth_b} by using the integrability assumption~(ii) and Lebesgue's dominated convergence theorem.
Since $g_\varepsilon$ is $C^1$ and piecewise $C^2$, we can make a Taylor expansion to obtain: 
\begin{equation*}
\bar{b}_\varepsilon(X)=\mathbb{E}\left[\widehat{\theta}_K^{1}-\varphi_1(X)+g_\varepsilon^{\prime}(\varphi_{21}(X))\left(\widehat{\theta}_K^{21}-\varphi_{21}(X)\right)+\int_{\varphi_{21}(X)}^{\widehat{\theta}_K^{21}}\left(\widehat{\theta}_K^{21}-t\right)g_\varepsilon^{\prime\prime}(t)dt|X\right]
\end{equation*}
 Then, since $g_\varepsilon^{\prime}(\varphi_{21}(X))$ is $\sigma(X)$-measurable, using Proposition \ref{g_eps_limit} we get
 \begin{align*}
   \lim_{\varepsilon\rightarrow 0}\mathbb{E}\left[\widehat{\theta}_K^{1}-\varphi_1(X)+g_\varepsilon^{\prime}(\varphi_{21}(X))\left(\widehat{\theta}_K^{21}-\varphi_{21}(X)\right)|X\right]&=\mathbb{E}\left[\widehat{\theta}_K^{1}-\varphi_1(X)+\mathds{1}_{\varphi_{21}(X)>0}\left(\widehat{\theta}_K^{21}-\varphi_{21}(X)\right)|X\right] \\
   &=\mathbb{E}\left[\mathds{1}_{\varphi_{21}(X)\le0}\left(\widehat{\theta}_K^{1}-\varphi_1(X)\right)+\mathds{1}_{\varphi_{21}(X)>0}\left(\widehat{\theta}_K^{2}-\varphi_{2}(X)\right)|X\right]
 \end{align*}
 Using condition~(ii) we get :
 \begin{equation}
\left\vert\mathbb{E}\left[\mathds{1}_{\varphi_{21}(X)\le0}\left(\widehat{\theta}_K^{1}-\varphi_1(X)\right)+\mathds{1}_{\varphi_{21}(X)>0}\left(\widehat{\theta}_K^{2}-\varphi_{2}(X)\right) |X\right]\right\vert\leq \frac{\mathds{1}_{\varphi_{21}(X)\le0} C_1(X)+\mathds{1}_{\varphi_{21}(X)>0}C_2(X)}{K^{\frac{1+\eta}2}} \label{biais_1A}
 \end{equation}
 Now we focus on the remainder in the Taylor decomposition. Using Lemma~\ref{lemma:integral_limit_eps} and the dominated convergence theorem and then Lemma~\ref{lemma::indicatrice_cv_markov} with $\varphi_{21}(X)\not=0$ a.s. (Assumption (iii)), we get
\begin{align*}
\lim_{\varepsilon\rightarrow 0}\left\vert\mathbb{E}\left[\int_{\varphi_{21}(X)}^{\widehat{\theta}_K^{21}}\left(\widehat{\theta}_K^{21}-t\right)g_\varepsilon^{\prime\prime}(t)dt|X\right]\right\vert&\leq \mathbb{E}\left[\vert\widehat{\theta}_K^{21}-\varphi_{21}(X)\vert\limsup_{\varepsilon\rightarrow 0}\left| \int_{\varphi_{21}(X)}^{\widehat{\theta}_K^{21}}g_\varepsilon^{\prime\prime}(t)dt\right| |X\right]\\
&\leq \mathbb{E}\left[\vert\widehat{\theta}_K^{21}-\varphi_{21}(X)\vert\vert\mathds{1}_{\varphi_{21}(X)\leq 0\leq\widehat{\theta}_K^{21}}-\mathds{1}_{\widehat{\theta}_K^{21}\leq 0\leq \varphi_{21}(X)}\vert|X\right]\\
&\leq  \mathbb{E}\left[\vert\widehat{\theta}_K^{21}-\varphi_{21}(X)\vert\mathds{1}_{\widehat{\theta}_K^{21}\varphi_{21}(X)\leq 0}|X\right]\\ 
&\leq \mathbb{E}\left[\frac{\vert\widehat{\theta}_K^{21}-\varphi_{21}(X)\vert^{1+\eta}}{\vert\varphi_{21}(X)\vert^\eta}|X\right]=\frac{\mathbb{E}\left[\vert\widehat{\theta}_K^{21}-\varphi_{21}(X)\vert^{1+\eta}|X\right]}{\vert\varphi_{21}(X)\vert^\eta}
\end{align*}
Now, we use the norm inequality $\mathbb{E}\left[\vert\widehat{\theta}_K^{21}-\varphi_{21}(X)\vert^{1+\eta}|X\right]^{\frac 1 {1+\eta}}\le \mathbb{E}\left[\vert\widehat{\theta}_K^{1}-\varphi_{1}(X)\vert^{1+\eta}|X\right]^{\frac 1 {1+\eta}} + \mathbb{E}\left[\vert\widehat{\theta}_K^{2}-\varphi_{2}(X)\vert^{1+\eta}|X\right]^{\frac 1 {1+\eta}}$ and the convexity of $x\mapsto x^{1+\eta}$ to get
\begin{align}
  \mathbb{E}\left[\vert\widehat{\theta}_K^{21}-\varphi_{21}(X)\vert^{1+\eta}|X\right]&\leq 2^\eta
  \left( \mathbb{E}\left[\vert\widehat{\theta}_K^{1}-\varphi_{1}(X)\vert^{1+\eta}|X\right] + \mathbb{E}\left[\vert\widehat{\theta}_K^{2}-\varphi_{2}(X)\vert^{1+\eta}|X\right]  \right) \nonumber \\
  &
  \leq 2^\eta\frac{\sigma_1^{1+\eta}(X)+\sigma_2^{1+\eta}(X)}{K^{\frac{1+\eta}{2}}}
\label{strong_err12A}
\end{align}
With~\eqref{smooth_b}, equations~\eqref{biais_1A} and~\eqref{strong_err12A} give the bias estimate~\eqref{weak_err_max}.

We now focus on the variance. The proof is straightforward using condition (ii) and the inequality $|\max\{a_1,a_2\}-\max\{b_1,b_2\}|\le\max\{|a_1-b_1|,|a_2-b_2|\}$ : 
\begin{align*}
 \mathbb{E}\left[\left\vert\max\{\widehat{\theta}_K^{1},\widehat{\theta}_K^{2}\}-\max\{\varphi_1(X),\varphi_2(X)\}\right\vert^2|X\right]&\leq \mathbb{E}\left[\max\{\vert\widehat{\theta}_K^{1}-\varphi_1(X)\vert^2,\vert\widehat{\theta}_K^{2}-\varphi_2(X)\vert^2\}|X\right] \\
 &\leq  \mathbb{E}\left[\vert\widehat{\theta}_K^{1}-\varphi_1(X)\vert^2|X\right]+\mathbb{E}\left[\vert\widehat{\theta}_K^{2}-\varphi_2(X)\vert^2|X\right]\\
 &\le \frac{\sigma_1^2(X)+\sigma_2^2(X)}{K}= \frac{\sigma^2(X)}{K}.\qedhere
\end{align*}
\end{proof}

\subsection{Antithetic MLMC estimator}

In this section we prepare the proof of Theorem~\ref{thm_MLMCA} and start with a useful preliminary lemma.

\begin{lemma}
\label{induction_MLMC_anti}
Let $p\ge 2$ and $K\in 2\N^*$. With the notation introduced in~\eqref{Notation1} and~\eqref{Notation2}, the following property holds:
\begin{align*}
\E\left[\left\vert \widehat{M}^p_{K}-\frac{\widehat{M}^p_{K/2}+\widehat{M}^{p,\prime}_{K/2}}{2}\right\vert^2|X\right]& \leq 2 \E\left[\left\vert \widehat{M}^{p-1}_{K}-\frac{\widehat{M}^{p-1}_{K/2}+\widehat{M}^{p-1,\prime}_{K/2}}{2}\right\vert^2|X\right]\\
&+2\E\left[h^2\left(\widehat{M}^{p-1}_{K}-\widehat{E}^p_{K/2},\widehat{M}^{p-1,\prime}_{K/2}-\widehat{E}^{p,\prime}_{K/2}\right)|X\right],
\end{align*}
where $h(x,y)=\left(\frac{x+y}{2}\right)^+-\frac{(x)^++(y)^+}{2}$.
\end{lemma}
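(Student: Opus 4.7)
The strategy is to reduce the case of $p$ maxima to the case of $p-1$ maxima by peeling off the last coordinate $\widehat{E}^p$. The key algebraic identity is
\[
\max(a,b)=b+(a-b)^+,
\]
which I would apply to $\widehat{M}^p_K = \max(\widehat{M}^{p-1}_K,\widehat{E}^p_K)$, and similarly to $\widehat{M}^p_{K/2}$ and $\widehat{M}^{p,\prime}_{K/2}$. Combined with the basic fact $\widehat{E}^p_K = \tfrac12(\widehat{E}^p_{K/2}+\widehat{E}^{p,\prime}_{K/2})$ coming from the definition of the empirical mean in~\eqref{Notation1}--\eqref{Notation2}, this lets the $\widehat{E}^p$ contributions cancel exactly when one subtracts the antithetic average. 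What remains is
\[
\widehat{M}^p_K-\frac{\widehat{M}^p_{K/2}+\widehat{M}^{p,\prime}_{K/2}}{2}
= (\widehat{M}^{p-1}_K-\widehat{E}^p_K)^+ \;-\; \frac{u^++v^+}{2},
\]
where $u=\widehat{M}^{p-1}_{K/2}-\widehat{E}^p_{K/2}$ and $v=\widehat{M}^{p-1,\prime}_{K/2}-\widehat{E}^{p,\prime}_{K/2}$.

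Next, I would insert the intermediate quantity $\bar A := \tfrac{u+v}{2} = \tfrac12(\widehat{M}^{p-1}_{K/2}+\widehat{M}^{p-1,\prime}_{K/2})-\widehat{E}^p_K$ and split
\[
(\widehat{M}^{p-1}_K-\widehat{E}^p_K)^+ - \frac{u^++v^+}{2}
= \Bigl[(\widehat{M}^{p-1}_K-\widehat{E}^p_K)^+ - \bar A^+\Bigr] + \Bigl[\bar A^+ - \frac{u^++v^+}{2}\Bigr].
\]
The second bracket is exactly $h(u,v)$ by the definition of $h$. For the first bracket, the 1-Lipschitz continuity of $x\mapsto x^+$ gives
\[
\bigl|(\widehat{M}^{p-1}_K-\widehat{E}^p_K)^+ - \bar A^+\bigr|
\leq \bigl|\widehat{M}^{p-1}_K - \tfrac12(\widehat{M}^{p-1}_{K/2}+\widehat{M}^{p-1,\prime}_{K/2})\bigr|,
\]
which is precisely the quantity appearing at level $p-1$ in the statement. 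Applying $(a+b)^2\leq 2a^2+2b^2$ and taking conditional expectation with respect to $X$ yields the announced inequality.

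The proof is essentially an algebraic reduction, so no analytical obstacle arises. The only subtlety to watch for is that $\widehat{M}^{p-1}_K$ is not the average of $\widehat{M}^{p-1}_{K/2}$ and $\widehat{M}^{p-1,\prime}_{K/2}$ (the maximum does not commute with averaging); this is exactly why the telescoping through the intermediate quantity $\bar A$ is needed, and it is what pushes the ``defect from linearity'' one level down, producing the recursive structure on which the variance estimate of Proposition~\ref{prop_antithetic} will be built.
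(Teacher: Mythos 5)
Your proof is correct and follows essentially the same route as the paper's: the identity $\max(a,b)=b+(a-b)^+$ to peel off $\widehat{E}^p$, the exact cancellation via $\widehat{E}^p_{K}=\tfrac12(\widehat{E}^p_{K/2}+\widehat{E}^{p,\prime}_{K/2})$, the insertion of $\bigl(\tfrac{u+v}{2}\bigr)^+$ as the intermediate term that makes $h(u,v)$ appear, the $1$-Lipschitz bound on $x\mapsto x^+$ for the remaining bracket, and $(a+b)^2\le 2(a^2+b^2)$. As a side note, your first argument of $h$, namely $\widehat{M}^{p-1}_{K/2}-\widehat{E}^p_{K/2}$, is what the algebra actually produces; the subscript $K$ in the printed statement is a typo.
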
 
\begin{proof}[Proof of Lemma \ref{induction_MLMC_anti}]
Observing that $\forall a,b\in \mathbb{R}$, $\max\{a,b\}=a+(b-a)^+$, we deduce that
\begin{align*}
\widehat{M}^p_{K}-\frac{\widehat{M}^p_{K/2}+\widehat{M}^{p,\prime}_{K/2}}{2}&=\max\{\widehat{E}^p_{K},\widehat{M}^{p-1}_{K}\}-\frac{\max\{\widehat{E}^p_{K/2},\widehat{M}^{p-1}_{K/2}\}+\max\{\widehat{E}^{p,\prime}_{K/2},\widehat{M}^{p-1,\prime}_{K/2}\}}{2} \\
&=\widehat{E}^p_{K}+(\widehat{M}^{p-1}_{K}-\widehat{E}^p_{K})^+-\frac{\widehat{E}^p_{K/2}+(\widehat{M}^{p-1}_{K/2}-\widehat{E}^p_{K/2})^++\widehat{E}^{p,\prime}_{K/2}+(\widehat{M}^{p-1,\prime}_{K/2}-\widehat{E}^{p,\prime}_{K/2})^+}{2}\\
&=(\widehat{M}^{p-1}_{K}-\widehat{E}^p_{K})^+-\frac{ (\widehat{M}^{p-1}_{K/2}-\widehat{E}^p_{K/2})^++(\widehat{M}^{p-1,\prime}_{K/2}-\widehat{E}^{p,\prime}_{K/2})^+}{2} \\
&=(\widehat{M}^{p-1}_{K}-\widehat{E}^p_{K})^+-\left(\frac{\widehat{M}^{p-1}_{K/2}+\widehat{M}^{p-1,\prime}_{K/2}}{2}-\widehat{E}^p_{K}\right)^++h\left(\widehat{M}^{p-1}_{K}-\widehat{E}^p_{K/2},\widehat{M}^{p-1,\prime}_{K/2}-\widehat{E}^{p,\prime}_{K/2}\right),
\end{align*}
using that $\widehat{E}^p_{K}=\frac{\widehat{E}^p_{K/2}+\widehat{E}^{p,\prime}_{K/2}}2$ for the third and fourth equality. We conclude using that $(a+b)^2\le 2(a^2+b^2)$. 
\end{proof}

The following lemmas gives a bound on $h$
\begin{lemma}
Let $h(x,y)=\left(\frac{x+y}{2}\right)^+-\frac{(x)^++(y)^+}{2}$ for $x,y\in\R$. Then, we have  
$h(x,y)=-\frac{\vert x\vert\wedge\vert y\vert}{2}\mathds{1}_{xy\leq 0}$.
 \label{lemma::h_function_bound_dim1}
\end{lemma}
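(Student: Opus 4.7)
The plan is to reduce the identity to a simple case analysis on the signs of $x$ and $y$. The right-hand side is manifestly zero as soon as $xy\ge0$ (with equality handled trivially by the factor $|x|\wedge|y|$ when one of the arguments vanishes), so the whole content of the claim lies in the case $xy<0$. I would first dispatch the two easy cases and then compute $h$ explicitly in the mixed-sign case.

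If $x\ge0$ and $y\ge0$, then $(x+y)/2\ge0$ so $\left(\frac{x+y}{2}\right)^+=\frac{x+y}{2}=\frac{x^++y^+}{2}$, giving $h(x,y)=0$. If $x\le0$ and $y\le 0$, then $(x+y)/2\le0$ and $x^+=y^+=0$, so again $h(x,y)=0$. In both cases the right-hand side is zero, either because $xy>0$ makes the indicator vanish or because $xy=0$ forces $|x|\wedge|y|=0$.

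It remains to treat $xy<0$. By the symmetry $h(x,y)=h(y,x)$ we may assume $x>0>y$, so that $x^++y^+=x$ and the indicator equals~$1$. Two subcases arise according to the sign of $x+y$. If $x+y\ge0$ (equivalently $x\ge|y|$), then $\left(\frac{x+y}{2}\right)^+=\frac{x+y}{2}$ and
\[
h(x,y)=\frac{x+y}{2}-\frac{x}{2}=\frac{y}{2}=-\frac{|y|}{2}=-\frac{|x|\wedge|y|}{2},
\]
since in this subcase $|x|\wedge|y|=|y|$. If instead $x+y<0$ (equivalently $x<|y|$), then $\left(\frac{x+y}{2}\right)^+=0$ and
\[
h(x,y)=-\frac{x}{2}=-\frac{|x|}{2}=-\frac{|x|\wedge|y|}{2},
\]
since here $|x|\wedge|y|=|x|$. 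Combining the cases yields the claimed identity. There is no real obstacle here; the only mild point of care is to keep track of the sign convention so that the identity holds for all four combinations of signs, including the boundary $xy=0$ where both sides vanish.
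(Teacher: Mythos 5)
Your proof is correct and follows essentially the same route as the paper, namely a direct case analysis on the signs of $x$, $y$, and $x+y$; the paper simply tabulates all five cases while you use the symmetry $h(x,y)=h(y,x)$ to halve the work and you additionally spell out the boundary case $xy=0$.
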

\begin{proof} By distinguishing the cases as follows, we get the claim:
$$h(x,y)= \left\{
    \begin{array}{ll}
       y/2 & \mbox{if }x+y \ge 0,x>0,y<0,  \\
        x/2 & \mbox{if }  x+y \ge 0,x<0,y>0,\\
        -x/2 & \mbox{if }  x+y<0,x>0,y<0,\\
          -y/2 & \mbox{if }  x+y<0,x<0,y>0,\\
          0&\mbox{otherwise. } \qedhere
    \end{array}    \right.$$
\end{proof}

\begin{proposition}\label{prop_antithetic}
  We use the notation introduced in~\eqref{Notation1} and~\eqref{Notation2}. Let $\eta>0$ and $D^p_{2+\eta}(X)=\E[|Y^p-\E[Y^p|X]|^{2+\eta}|X]$. We assume that $\Pb(E^p_X=M^{p-1}_X)=0$ for all $p\in\{2,\dots,P\}$ and
  \begin{equation*}
    \forall p \in \{2,\dots,P\},\ \E\left[ \frac{D^p_{2+\eta}(X)}{|E^p_X-M^{p-1}_X|^\eta} \phi^2(X) \right]<\infty.
  \end{equation*}
 Then, there exist a constant $C\in \R_+^*$ such that
  $$\V\left[\left(\widehat{M}^P_{K}-\frac{\widehat{M}^P_{K/2}+\widehat{M}^{p,\prime}_{K/2}}{2}\right)\phi(X) \right]\le \frac{C}{K^{1+\eta/2}}.$$
\end{proposition}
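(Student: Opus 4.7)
The plan is to iterate the one-step antithetic bound of Lemma~\ref{induction_MLMC_anti} from $p=P$ down to $p=2$. Since $\widehat{M}^{1}_{K}=\widehat{E}^{1}_{K}=\tfrac{1}{2}(\widehat{E}^{1}_{K/2}+\widehat{E}^{1,\prime}_{K/2})$, the $p=1$ antithetic increment vanishes identically, so the recursion yields the telescoping bound
\[
\E\!\left[\Bigl|\widehat{M}^{P}_{K}-\tfrac{\widehat{M}^{P}_{K/2}+\widehat{M}^{P,\prime}_{K/2}}{2}\Bigr|^{2}\,\Big|\,X\right]\le \sum_{p=2}^{P}2^{P-p+1}\,\E\!\left[h^{2}(x_p,y_p)\mid X\right],
\]
with $x_p:=\widehat{M}^{p-1}_{K/2}-\widehat{E}^{p}_{K/2}$ and $y_p$ its antithetic twin built from the second half-sample, so that $(x_p,y_p)$ are conditionally i.i.d.\ given $X$. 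Multiplying by $\phi^{2}(X)$, taking expectation and using $\V(Z)\le \E[Z^{2}]$, it suffices to establish $\E[h^{2}(x_p,y_p)\phi^{2}(X)]\le C/K^{1+\eta/2}$ for each $p\in\{2,\dots,P\}$.

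For the conditional estimate, Lemma~\ref{lemma::h_function_bound_dim1} gives $h^{2}(u,v)=\tfrac{1}{4}(|u|\wedge|v|)^{2}\mathds{1}_{uv\le 0}$. Using the pointwise bound $(|u|\wedge|v|)^{2}\mathds{1}_{uv\le 0}\le |u|^{2}\mathds{1}_{u\ge 0,v\le 0}+|v|^{2}\mathds{1}_{u\le 0,v\ge 0}$, conditional independence and equality in law of $(x_p,y_p)$ given $X$ yield
\[
\E[h^{2}(x_p,y_p)\mid X]\le \tfrac{1}{2}\,\E[|x_p|^{2}\mathds{1}_{x_p\ge 0}\mid X]\,\Pb(x_p\le 0\mid X)\le \tfrac{1}{2}\,\E[|x_p|^{2}\mathds{1}_{x_p\ge 0}\mid X].
\]
By Theorem~\ref{thm_biais_variance_nested}(i), $D_p:=E^{p}_X-M^{p-1}_X$ is a.s.\ nonzero. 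Decompose $x_p=\Delta_{p-1}-\delta_p-D_p$ with $\Delta_{p-1}:=\widehat{M}^{p-1}_{K/2}-M^{p-1}_X$ and $\delta_p:=\widehat{E}^{p}_{K/2}-E^{p}_X$; in the case $D_p>0$ (the case $D_p<0$ being symmetric), the event $\{x_p\ge 0\}$ forces $\Delta_{p-1}-\delta_p\ge D_p>0$, so that on it $|x_p|\le |\Delta_{p-1}-\delta_p|$ and
\[
|x_p|^{2}\mathds{1}_{x_p\ge 0}\le (\Delta_{p-1}-\delta_p)^{2}\mathds{1}_{\Delta_{p-1}-\delta_p\ge D_p}\le \frac{|\Delta_{p-1}-\delta_p|^{2+\eta}}{|D_p|^{\eta}}.
\]
Lemma~\ref{lem_speedLLN} gives $\E[|\delta_p|^{2+\eta}\mid X]\le CD^{p}_{2+\eta}(X)/K^{1+\eta/2}$, and combining the crude envelope $|\Delta_{p-1}|\le \max_{q\le p-1}|\widehat{E}^{q}_{K/2}-E^{q}_X|$ with $(\max_q a_q)^{2+\eta}\le \sum_q a_q^{2+\eta}$ yields $\E[|\Delta_{p-1}|^{2+\eta}\mid X]\le C\sum_{q=1}^{p-1}D^{q}_{2+\eta}(X)/K^{1+\eta/2}$. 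Together with $(a+b)^{2+\eta}\le 2^{1+\eta}(a^{2+\eta}+b^{2+\eta})$, this produces the pointwise estimate
\[
\E[h^{2}(x_p,y_p)\mid X]\le \frac{C\sum_{q=1}^{p}D^{q}_{2+\eta}(X)}{|D_p|^{\eta}\,K^{1+\eta/2}}.
\]

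Multiplying by $\phi^{2}(X)$, integrating, summing over $p$ (absorbing the factors $2^{P-p+1}$ into $C$) and applying $\V(Z)\le \E[Z^{2}]$ then delivers the announced $K^{-1-\eta/2}$ rate, provided that $\E[\phi^{2}(X)D^{q}_{2+\eta}(X)/|E^{p}_X-M^{p-1}_X|^{\eta}]<\infty$ for every pair $q\le p$. The main obstacle is precisely this last integrability step: the Proposition's hypothesis as stated only controls the diagonal $q=p$ term of the sum, so either one reads the assumption as uniform in $q\le p$ (consistent with the unsuperscripted notation $D_{2+\eta}(X)$ appearing in Theorem~\ref{thm_MLMCA}), or the bound on $\Delta_{p-1}$ must be refined by localizing on the event that the empirical argmax over $\{\widehat{E}^{q}_{K/2}\}_{q\le p-1}$ coincides with the true argmax, whose complement has probability small enough (by a parallel Markov/moment argument on the $(2+\eta)$-th moments) to absorb the $q<p$ contributions using only $D^{p}_{2+\eta}$-type integrability. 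This localization step is the technically delicate point of the argument, and is exactly where the integrability parameter~$\eta$ enters the variance rate, mirroring the bias analysis of Theorem~\ref{thm_biais_variance_nested}.
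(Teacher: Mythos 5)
Your argument is essentially the paper's own proof: the same telescoping of Lemma~\ref{induction_MLMC_anti} down to the vanishing $p=1$ increment, the same explicit formula for $h$ from Lemma~\ref{lemma::h_function_bound_dim1}, the same reduction to a single ``wrong-sign'' term using conditional independence and equidistribution of the two half-sample statistics, the same Markov-type bound $\mathds{1}\le |\cdot|^\eta/|D_p|^\eta$, and Lemma~\ref{lem_speedLLN} for the $(2+\eta)$-moment. One step needs care: your reduction to $\tfrac12\E[|x_p|^2\mathds{1}_{x_p\ge0}\mid X]$ is only useful when $D_p>0$; when $D_p<0$ the event $\{x_p\ge0\}$ is the \emph{typical} one and that quantity is $O(1)$, not $O(K^{-1-\eta/2})$. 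The case is indeed ``symmetric'', but only if you go back one step and keep the other half of the splitting of $(|u|\wedge|v|)^2\mathds{1}_{uv\le0}$ (or retain the factor $\Pb(x_p\le 0\mid X)$ that you discarded); since the sign of $D_p$ is $\sigma(X)$-measurable you may choose the version accordingly. The paper sidesteps this by writing $\mathds{1}_{\widehat H\widehat H'<0}\le\mathds{1}_{\widehat H H^p_X<0}+\mathds{1}_{\widehat H' H^p_X<0}$, which selects the atypical sign automatically for either sign of $H^p_X$.

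The ``obstacle'' you flag at the end is real, and it is in fact present, silently, in the paper's own proof: there, Lemma~\ref{lem_speedLLN} is invoked for $\widehat H^p_{K/2}-H^p_X=(\widehat M^{p-1}_{K/2}-M^{p-1}_X)-(\widehat E^p_{K/2}-E^p_X)$, which is not an empirical mean of i.i.d.\ variables because of the maximum; controlling the first summand requires, exactly as you compute via $|\widehat M^{p-1}_{K/2}-M^{p-1}_X|\le\max_{q\le p-1}|\widehat E^q_{K/2}-E^q_X|$, the moments $D^q_{2+\eta}(X)$ for all $q\le p-1$. The honest constant therefore involves $\E\bigl[\phi^2(X)\sum_{q\le p}D^q_{2+\eta}(X)/|E^p_X-M^{p-1}_X|^\eta\bigr]$, and the clean resolution is your first option: read the hypothesis with $D_{2+\eta}(X)=\sum_q D^q_{2+\eta}(X)$ (or the maximum over $q$), which is consistent with the unsuperscripted notation appearing in Theorem~\ref{thm_MLMCA}. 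The localization on the empirical argmax that you sketch as an alternative is unnecessary and would be harder to carry out than the slightly strengthened integrability assumption it is meant to avoid.
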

\begin{proof}
  Let us define for $p=1,\dots,P$,
  \begin{align*}
    U^p_K&=\E\left[\left\vert \widehat{M}^p_{K}-\frac{\widehat{M}^p_{K/2}+\widehat{M}^{p,\prime}_{K/2}}{2}\right\vert^2\bigg|X\right] \\
  \varepsilon^p_K&=\E\left[h^2\left(\widehat{M}^{p-1}_{K}-\widehat{E}^p_{K/2},\widehat{M}^{p-1,\prime}_{K/2}-\widehat{E}^{p,\prime}_{K/2},\right)|X\right].
  \end{align*}
  We notice that $U^1_K=0$, and  Lemma \ref{induction_MLMC_anti} gives $U^p_K\le 2(U^{p-1}_K+\varepsilon^p_K)$ for $p=2,\dots,P$. A straightforward induction leads to
\begin{equation}
U^P_K \leq \sum_{p=2}^{P} 2^{P+1-p} \varepsilon^{p}_K.
\end{equation}
The variance being smaller than the expectation of the square, we get by using the tower property of the conditional expectation
\begin{equation}\label{majo_var}
  \V\left[\left(\widehat{M}^P_{K}-\frac{\widehat{M}^P_{K/2}+\widehat{M}^{p,\prime}_{K/2}}{2}\right)\phi(X) \right]\le  \sum_{p=2}^{P} 2^{P+1-p} \E[\varepsilon^{p}_K\phi^2(X)].
\end{equation}
For $p=2,\dots,P$, we define the following random variables 
$$H_X^p=M_X^{p-1}-E_X^p, \ \widehat{H}^p_{K/2}=\widehat{M}^{p-1}_{K/2}-\widehat{E}_{K/2}^p,\ \widehat{H}^{p,\prime}_{K/2}=\widehat{M}^{p-1,\prime}_K-\widehat{E}_K^{p,\prime}.$$
We now use Lemma~\ref{lemma::h_function_bound_dim1} and the equality $\mathds{1}_{\widehat{H}^{p}_{K/2}\widehat{H}^{p,\prime}_{K/2}<0}= \mathds{1}_{\widehat{H}^{p}_{K/2}\widehat{H}^{p,\prime}_{K/2}<0} \mathds{1}_{\widehat{H}^{p}_{K/2} H^p_X<0} +  \mathds{1}_{\widehat{H}^{p}_{K/2}\widehat{H}^{p,\prime}_{K/2}<0} \mathds{1}_{ \widehat{H}^{p,\prime}_{K/2} H^{p}_X<0}$ that is true a.s. since $\Pb(H_X^p=0)=0$ to get
\begin{align*}
  \E[\varepsilon^p_K \phi^2(X)]&=\frac14\E\left[\left(\min\left(|\widehat{H}^{p}_{K/2}| ,|\widehat{H}^{p,\prime}_{K/2}|\right)^2\right)\phi^2(X) \mathds{1}_{\widehat{H}^{p}_{K/2}\widehat{H}^{p,\prime}_{K/2}<0} \mathds{1}_{\widehat{H}^{p}_{K/2} H^p_X<0}\right] \\
  & \quad +\frac14\E\left[\left(\min\left(|\widehat{H}^{p}_{K/2}| ,|\widehat{H}^{p,\prime}_{K/2}|\right)^2\right)\phi^2(X)\mathds{1}_{\widehat{H}^{p}_{K/2}\widehat{H}^{p,\prime}_{K/2}<0} \mathds{1}_{ \widehat{H}^{p,\prime}_{K/2} H^{p}_X<0}\right] \\
  &\le \frac14\left( \E\left[|\widehat{H}^{p}_{K/2}|^2\phi^2(X) \mathds{1}_{\widehat{H}^{p}_{K/2} H^p_X<0}\right]+\E\left[|\widehat{H}^{p,\prime}_{K/2}|^2\phi^2(X) \mathds{1}_{\widehat{H}^{p,\prime}_{K/2} H^{p}_X<0}\right] \right)=\frac 12 \E\left[|\widehat{H}^{p}_{K/2}|^2\phi^2(X) \mathds{1}_{\widehat{H}^{p}_{K/2} H^p_X<0}\right],
\end{align*}
since $\widehat{H}^{p}_{K/2}$ and $\widehat{H}^{p,\prime}_{K/2}$ have the same law given $X$. Now, we use that $|\widehat{H}^{p}_{K/2}|\le|\widehat{H}^{p}_{K/2}- H^p_X|$ on $\{\widehat{H}^{p}_{K/2} H^p_X<0\}$ and Lemma~\ref{lemma::indicatrice_cv_markov} gives $\mathds{1}_{\widehat{H}^{p}_{K/2} H^p_X<0}\le \frac{|\widehat{H}^{p}_{K/2}-H^p_X|^\eta}{|H^P_X|^\eta}$ for $\eta>0$. This leads to
$$ \E[\varepsilon^p_K \phi^2(X)]\le \frac 12 \E\left[\frac{|\widehat{H}^{p}_{K/2}-H^p_X|^{2+\eta}}{|H^P_X|^\eta}\phi^2(X)\right].$$
We now use Lemma~\ref{lem_speedLLN} and get $\E[|\widehat{H}^{p}_{K/2}-H^p_X|^{2+\eta}|X]\le C_{2+\eta}  \frac{D^p_{2+\eta}(X)}{(K/2)^{1+\eta/2}}$, and therefore $$\E[\varepsilon^p_K \phi^2(X)]\le 2^{\eta/2}C_{2+\eta} \E\left[\frac{D_{2+\eta}^p(X)}{|H^p_X|^\eta} \phi^2(X) \right] .$$ 
Using this bound in~\eqref{majo_var}, we get the claim with $C= 2^{\eta/2}C_{2+\eta}\sum_{p=2}^{P} 2^{P+1-p} \E[\frac{D_{2+\eta}^p(X)}{|H^p_X|^\eta}\phi^2(X)] $. 
\end{proof}

  {\bf Acknowledgments. } This research benefited from the Joint Research Initiative ``Numerical methods for the ALM'' of AXA Research Fund. A. A. has also benefited from the support of the ``Chaire Risques Financiers'', Fondation du Risque. We thank Vincent Jarlaud and the team ALM of AXA France for useful discussions and remarks.

\bibliographystyle{plain}
\bibliography{biblio}

\end{document}